\theoremstyle{thmstyleone}%
\newtheorem{theorem}{Theorem}
\theoremstyle{thmstyletwo}%
\theoremstyle{thmstylethree}%
\newtheorem{definition}{Definition}%
\newtheorem{lemma}{Lemma}{}
{}
{}
\newcommand{\cost}{\textsc{Cost}}
\newcommand{\OPT}{\text{\sc{OPT}}}
\newcommand{\RR}{\mathbb{R}}
\newcommand{\eeps}{\tilde{O}(1)}
\newcommand{\eps}{\varepsilon}
\newcommand{\expected}{\mathbb{E}}
\newcommand{\comp}{\textsc{Comp}}
\newcommand{\ses}{\sigma}
\newcommand{\TT}{\mathcal{T}}
\newcommand{\SSS}{\mathcal{S}}
\newcommand{\PP}{\mathcal{P}}
\newcommand{\npairs}{n_{\text{pairs}}}
\newcommand{\nport}{n_{\text{port}}}
\newcommand{\flowgraph}{\textsc{Flow graph}}
\newcommand{\flow}{\textsc{Flow}}
\newcommand{\flowgraphs}{\textsc{Flow graphs}}
\newcommand{\FF}{\mathcal{F}}
\newcommand{\DP}{\text{DP}}
\newcommand{\correct}{\text{Upgrade}}
\newcommand{\newGamma}{\mathcal{T}}
\newcommand{\muu}{\mu}
\newcommand{\dellta}{\delta} 
\newcommand{\deltta}{\delta} 
\newcommand{\deltaa}{\delta} 
\begin{document}

\title{An $n^{O(\log\log n)}$ time approximation scheme for capacitated VRP in the Euclidean plane.} 

\author{\fnm{Ren\'e} \sur{Sitters}}\email{r.a.sitters@vu.nl}

\affil{\orgname{Vrije Universiteit Amsterdam}}

\abstract{
We present a quasi polynomial time approximation scheme (Q-PTAS) for the capacitated vehicle routing problem (CVRP) on $n$ points in the Euclidean plane for arbitrary capacity $c$. The running time is $n^{f(\epsilon)\cdot\log\log n}$ for any $c$, where $f$ is a function of $\epsilon$ only. This is a major improvement over the so far best known running time of 	$O(n^{\log^6 n/\epsilon^5})$ and a big step towards a PTAS for Euclidean CVRP. 

In our algorithm, we first give a polynomial time  reduction of the CVRP in $\RR^d$ (for any fixed $d$) to an uncapacitated routing problem in $\RR^d$ that we call the $m$-paths problem. Here, one needs to find exactly $m$ paths between two points $a$ and $b$, covering all the given points in the Euclidean space. We then give a Q-PTAS for the $m$-paths problem in the plane. Any PTAS for the (arguably easier to handle) Euclidean $m$-paths problem is most likely to imply a PTAS for the Euclidean CVRP. \\

\textbf{Mathematics Subject Classification} 90C27 Combinatorial optimization · 68W25 Approximation algorithms.}
		
\keywords{Vehicle Routing Problem, Q-PTAS, Approximation Scheme, Euclidean plane}

		\maketitle
		
		\section{Introduction} 
		The Vehicle Routing Problem (VRP), introduced by Dantzig and Ramser in 1959~\cite{DR59}, is a fundamental  optimization problem in operations research and theoretical computer science, and has been studied in many variations. 
		Here, we give an algorithm for the classic Capacitated Vehicle Routing Problem (CVRP)~\cite{DR59}. 
		An instance consists of a capacity $c$, and a set of points in a metric space, where one point is the origin (or depot), and the others are the locations to be visited. A feasible solution is a set of tours that together cover all points, and every tour contains the origin plus at most $c$ locations. The objective is to minimize the total distance. We study the version where all points are in the Euclidean plane. The CVRP is also known as the $k$-tours problem or the cycle covering problem~\cite{AsanoKTT97,Arora98JACM,blauth2023improving}.

		Haimovich and Rinnooy Kan~\cite{Haimovich1985bounds} where the first to give a polynomial time approximation scheme (PTAS) for CVRP in the Euclidean plane. The running time of their algorithm is exponential in $c$, but still polynomial for $c=O(\log\log n)$. Asano et al.~\cite{AdamaszekCL2010} increase this to $c=O(\log n/\log\log n)$ and also show that any PTAS for TSP in the Euclidean plane~\cite{Arora98JACM,Mitchell1999} implies a PTAS for CVRP with $c=\Omega(n)$.
		Adamaszek et al.~\cite{AdamaszekCL2010} increased the capacity for which a PTAS exists to  $c=2^{O(\log^{\dellta}n)}$, for some $\dellta$ depending on $\epsilon$. In their algorithm, they first reduce the number of points substantially by partitioning the instance and moving points to grid points, and then apply  the \emph{quasi} polynomial time approximation scheme (Q-PTAS) by Das and Mathieu~\cite{DasMathieu2014} as a black box. 
		The Q-PTAS from~\cite{DasMathieu2014} builds on the PTAS for Euclidean TSP by Arora~\cite{Arora98JACM}, and the quasi polynomial time is achieved by rounding the number of points within the dynamic program. The DP-solution then is a set of tours that may violate the capacity constraint by a small factor. This is resolved by removing points from tours, which are then assigned to new tours. 
		Jayaprakash and Salavatipour~\cite{JayaprakashS2023} improved the running time of the Q-PTAS to $O(n^{\log^6 n/\epsilon^5})$.
		For higher dimensional Euclidean metrics, Khachay and Dubinin [17] gave a PTAS for fixed dimension $d$ and $c = O(\log^{1/\epsilon}n)$.
		   
		For arbitrary metrics, Haimovich and Rinnooy Kan~\cite{Haimovich1985bounds} gave a $(1 + \alpha)$-approximation
		where $\alpha$ is the approximation factor for TSP. This \emph{iterated tour partitioning} algorithm  first solves the TSP on the input points and then partitions the tour into segments of at most $c$ points each. 
		Recently, Blauth et al.\cite{blauth2023improving} improved this ratio to $1+\alpha-\deltta$, for some small constant $\deltta>0$. For arbitrary metrics, the CVRP is known to be APX-hard even for constant capacity $c\ge 3$~\cite{AsanoKTT97,MiltenburOS2024}. 
		
		Capacitated vehicle routing has been widely studied for tree metrics as well. The CVRP on trees is NP hard, which follows by an easy reduction from bin packing~\cite{labbe1991capacitated}. Hamaguchi and Katoh~\cite{HamaguchiK1998} designed a $1.5$-approximation, which was improved to $(\sqrt{41}-1)/4$ by Asano et al.~\cite{AsanoKK2001}, and to $4/3$ by Becker~\cite{Becker18}. Jayaprakash and Salavatipour~\cite{JayaprakashS2023} gave a Q-PTAS before finally, Mathieu and Zhou~\cite{mathieu2023ptastrees} designed a PTAS for CVRP on trees.
		
		In the unsplittable demand setting of CVRP, each point is given a demand that cannot be split over different tours and the total demand served by each tour should stay within the capacity $c$. The iterated tour partitioning algorithm by Haimovich and Rinnooy Kan~\cite{Haimovich1985bounds} yields a $(2 + \alpha)$-approximation, where $\alpha$ is the approximation factor for TSP. 
		This was recently improved to $2+\alpha-\deltaa$, for some small constant $\deltaa>0$~\cite{blauth2023improving}.  For the Euclidean plane, hardness of bin packing implies a $3/2$ lower bound on the approximation and the current best approximation ratio of $2+\epsilon$ for unsplittable CVRP was given by Grandoni et al.~\cite{GrandoniMZ2022unsplittablR2}.

		\paragraph*{Model and results}
		An instance of the Euclidean CVRP is given by an integer $c\ge 1$ and $n$ points (the input points) plus one point $r$ called the origin (or depot). All points are in the Euclidean metric space with distance function $d(,)$.  A feasible solution is a set of tours that together cover all points and every tour goes through the origin and contains at most $c$ other points. The objective is to minimize the total distance.
		We shall often refer to the points to be visited as the \emph{input points}  to distinguish them from points defined by the algorithm such as `portals', `anchor points', or `points of intersection'.        	
		
	 We use the notation $\tilde{O}()$ to hide dependency of the hidden constant on $\epsilon$ in the big-O notation. Throughout the paper we assume w.l.o.g. that $0<\eps<1$ and $1/\epsilon$ is integer. By default, the base of the logarithms is 2, i.e.,   	
		$\log n:=\log_2 n$.		
		\begin{theorem}
			There is an approximation scheme for CVRP in the Euclidean plane with running time $n^{\tilde{O}(\log\log n)}$, which holds for any capacity $c$.
		\end{theorem}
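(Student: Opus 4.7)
The plan is to follow the two-step strategy announced in the abstract. First, I would reduce Euclidean CVRP to the $m$-paths problem. Given a CVRP instance with depot $r$ and capacity $c$, place both anchor points $a,b$ at $r$ (or infinitesimally near $r$), guess the number $m$ of tours used by some near-optimum CVRP solution in the polynomial range $[\lceil n/c\rceil, n]$, and invoke the $m$-paths algorithm. For the forward direction, any CVRP solution with $m$ tours yields, by cutting each tour at $r$, a feasible $m$-paths solution of essentially equal cost. For the reverse direction I would apply Haimovich--Rinnooy Kan style iterated tour partitioning: split each of the $m$ output paths into sub-paths of at most $c$ input points and close each sub-path into a tour through $r$. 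The extra cost is at most $(2/c)\sum_i d(r,x_i)$, a classical lower bound on $\OPT$, and so is absorbed into a $(1+\eps)$ loss after rescaling.

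Second, I would design the Q-PTAS for the $m$-paths problem via Arora's framework. Take a randomly shifted quadtree decomposition of a bounding box, establish a structure theorem stating that some near-optimum solution is portal-respecting and crosses each dissection edge at most $O(1/\eps)$ times after patching, and then run a bottom-up dynamic program on the quadtree. The DP state at each cell consists of, for every boundary portal, the number of path fragments crossing it together with a pairing of these crossings into the maximal path fragments lying inside the cell; the optimal cost is computed by combining children states at each internal node and paying for the portal matching.

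The decisive step, and the principal obstacle, is controlling the DP state size to hit $n^{\tilde O(\log\log n)}$ rather than $n^{\log^{O(1/\eps)} n}$ as in the Das--Mathieu CVRP Q-PTAS. The uncapacitated nature of $m$-paths is exactly the feature that should make this feasible: since there is no residual capacity to track per fragment, the per-portal data shrinks to a crossing multiplicity, which I would round to the nearest power of $(1+\eps)$ in $[1,m]$, plus a succinct (e.g.\ non-crossing) pairing of the crossings. The heart of the argument is then a structure theorem showing that near-optimal solutions admit this rounded description with only an $\eps$-fraction loss in cost, after which a standard calculation gives $O((\log n)^{O(1/\eps)})$ configurations per cell and hence $n^{\tilde O(\log\log n)}$ total. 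Proving this structure theorem --- and verifying that the CVRP-to-$m$-paths reduction produces instances whose near-optimal $m$-paths solutions survive the iterated tour partitioning step without eroding the $(1+\eps)$ guarantee --- is where the real technical work will lie. The remaining ingredients (random shifting, patching lemma, quadtree bookkeeping) are by now routine in the Euclidean approximation literature.
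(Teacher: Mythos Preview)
Your reduction from CVRP to $m$-paths is the main gap, and it fails on both ends. First, setting $a=b=r$ makes the $m$-paths problem degenerate: with coincident endpoints and no capacity constraint, an optimal solution takes one path to be a TSP tour and the other $m-1$ paths to be zero-length loops, so you have merely reduced CVRP to TSP. Second, even granting a non-degenerate $m$-paths solution of cost $(1+\eps)\OPT$, iterated tour partitioning adds cost $(2/c)\sum_i d(r,x_i)$, and while this quantity is indeed a lower bound on $\OPT$, it is not an $\eps$-fraction of $\OPT$ --- it can equal $\OPT$. So your reverse direction yields at best a $(2+\eps)$-approximation, which is exactly the Haimovich--Rinnooy Kan guarantee you cite, not a PTAS. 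The ``rescaling'' you allude to cannot fix this.

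The paper's reduction is substantially more elaborate and is where most of the work sits. It first decomposes the instance into bounded instances (ratio $d_{\max}/d_{\min}$ constant) via a randomized ring partition. On a bounded instance it lays a \emph{constant-size} grid with $\tilde O(1)$ anchor points per cell boundary; tours are classified into $\tilde O(1)$ types by their anchor-point sequence, and for each type one guesses only the number of tours and the total number of points per cell. This produces genuine $m$-paths subproblems on each cell with $a\neq b$ separated by at least $\eps\tau/2$. Capacity feasibility is recovered from the subproblem solutions by a bin-packing LP on segments, with a basic-solution/fractionality argument that crucially exploits boundedness (every tour crosses $\Omega(1/\eps)$ cells, so the per-tour repair cost is an $\eps$-fraction of its length). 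None of this is captured by ``guess $m$ and partition afterwards.''

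Your Q-PTAS sketch for $m$-paths is closer in spirit --- rounding crossing multiplicities to a geometric grid is indeed the key idea --- but the paper needs an additional ingredient you do not mention: a restricted, oriented portal layout guaranteeing that any path entering and leaving a level-$i$ square travels at least the level-$i$ inter-portal distance inside it. This deterministic lower bound is what lets one charge the cost of repairing the rounding-induced flow imbalance (level by level, bottom up) against the DP solution itself; the usual expected-detour argument does not suffice because the repair is applied after the random shift is fixed.
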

		
		The running time of our Q-PTAS is a major improvement over the 
		$n^{\log^{O(1/\epsilon)}n}$ time of the Q-PTAS by Das and Mathieu~\cite{DasMathieu2014} and the 
		$O(n^{\log^6 n/\epsilon^5})$ time Q-PTAS from~\cite{JayaprakashS2023}, 
		 and a big step towards a PTAS for Euclidean CVRP. Moreover, we provide a way to take the last hurdle by giving a polynomial time reduction from CVRP to what we call the \emph{$m$-paths problem}, which is, arguably, an easier problem to handle than the CVRP. Any PTAS for the Euclidean $m$-paths problem is most likely to imply a PTAS for the Euclidean CVRP.

		\begin{figure}
			\centering
			\includegraphics[width=0.40\linewidth]{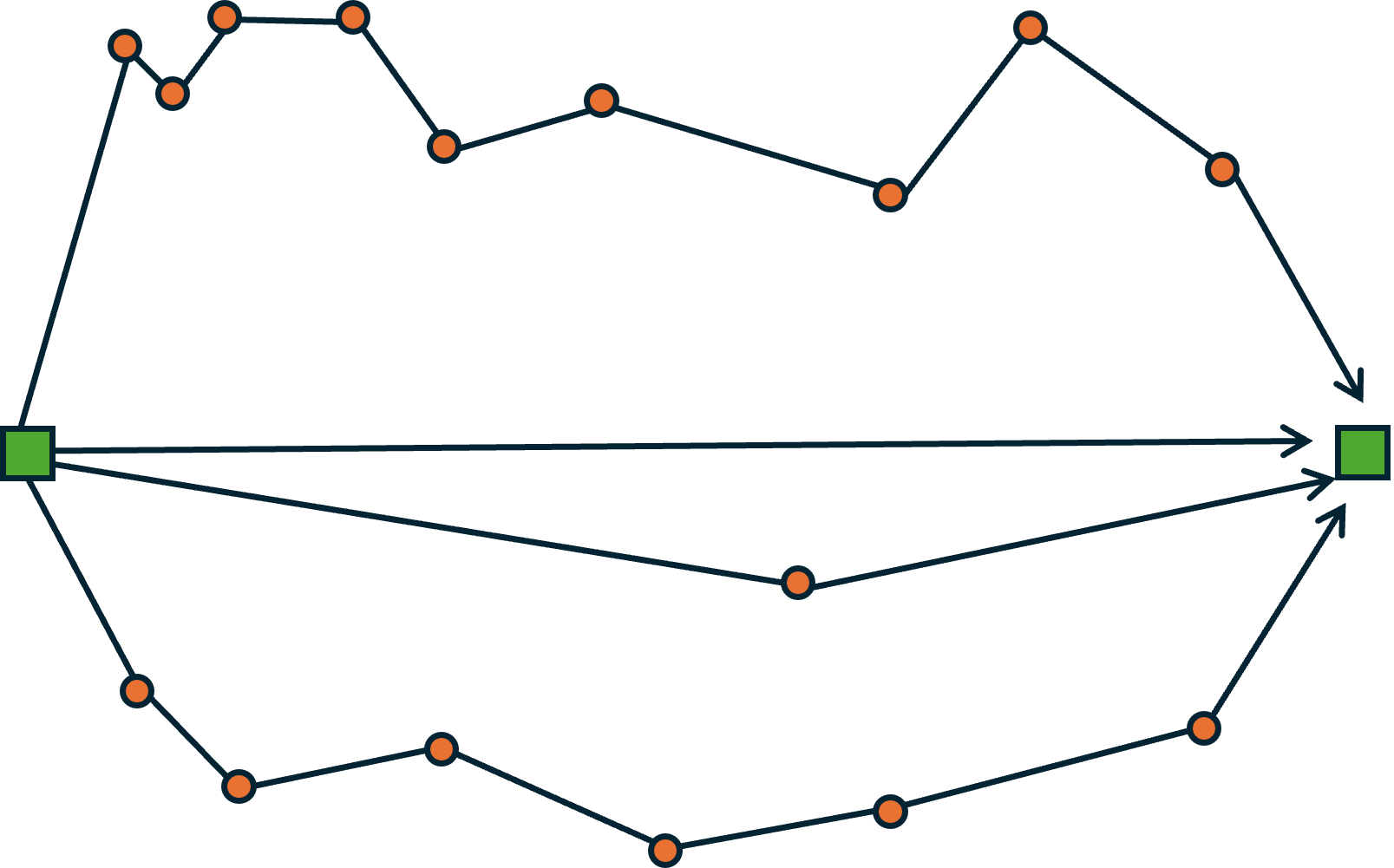}
			\caption{In the Euclidean $m$-paths problem, we need to find exactly $m$ paths between two given points $a$ and $b$, that together cover all the input points. Goal is to minimize total length. The example is for $m=4$. The TSP-path problem with given end points corresponds to the $m$-paths problem for $m=1$.
				}
			\label{fig:mpaths}
		\end{figure}
		
Our algorithm, when implemented as described, has an absolutely impractical running time. The number  $\log_2\log_2 n$ in the exponent is small for any meaningful value of $n$ (for example, $\log_2\log_2(10^{10})<6$) but this is misleading since the exponent is dominated by the constant hidden in the $\tilde{O}$ notation. In fact, for  $\log_2\log_2 n$ to be larger than the hidden constant, the value $n$ must be  \emph{beyond astronomically large}!    

Our algorithm uses a wide variety of techniques, some easy and standard, others sophisticated and less common. 
Two new ingredients are of particular interest. In Section~\ref{sec:loglog}, a new portal layout is defined (See Figure~\ref{fig:portallayout}) that provides a particular lower bound on any portal respecting solution, which does not follow from the standard portal layout~\cite{Arora98JACM}. In the same section, we give a sophisticated way of rounding the number of paths in the dynamic program for the $m$-paths problem. Although this form of rounding is a natural step at this point in the algorithm, it is quite challenging since a DP-solution is no longer a set of tours (or paths). 
The difficulty lies in choosing the structure of the subproblems in the DP in such a way that the obtained pseudo-solution  can be repaired at small cost. 
In the next section we give a high-level overview of the proof before going over all the details in the succeeding sections.

		\subsection{Outline of the algorithm and proof}
		The Q-PTAS contains numerous details but the main ideas are rather simple. A high level overview is displayed in Algorithms 1,2, and 3 below. 
		First, we apply an established technique (Algorithm 1) in which we reduce the problem to \emph{bounded instances}. A set of instances is called bounded if there is some constant number $Q$ such that the ration of maximum over minim distance to the origin, $d_{\max}/d_{\min}$, is at most $Q$ for every instance in the set.   
		
		For a bounded instance we find a solution by dividing the space into $\tilde{O}(1)$ squares of equal size  and solve many subproblems on each square. Solutions for subproblems are merged into a feasible solution for the CVRP. We do not use a hierarchical DP (as in Euclidean TSP) but basically apply  complete enumeration. The solution space is restricted by defining $\eeps$ points on the boundary of each square and restrict any crossing of a tour with the boundary of a square to these points, which we call \emph{anchor points}. 
		In total, there are only $\eeps$ anchor points and we say that two tours are of the same \emph{type} if their sequence of anchor points is the same. 
		We refer to the part of a tour that falls inside some square as a \emph{segments}. Ignoring capacity constraints, two tours of the same type can swap their segment in any square. (See Figures~\ref{fig:same_type} and~\ref{fig:swapsegments}.) 
		We show that if we have a large number $k$ of tours of the same type such that their total number of points is at most $kc$, then by swapping we can reassign points to tours such that each tour has at most $c$ points, approximately. Fixing the bound to exactly $c$ is done in a greedy way.    
		This key observation in our proof is rather similar to bin packing. 
		We use standard bin packing techniques, such as rounding the item sizes (here, the number of points on a segment), and solving a linear program to assign items to bins (here segments to tours), and then adding the fractional items (fractional segments) to tours in a greedy way. (See Algorithm 2.)
		This bin packing approach works for bounded instances since we can choose the grid dense enough such that the size of a single grid cell is small compared to the length of any tour.   
		
		An instance of any subproblem on a square is given by the boundary information (how the tours enter and leave the square) and for each tour type, the number of points visited by all tours of that type together. This 'together' is the key to the efficiency of the Q-PTAS. We do not need to keep track of the number of points on each tour, but only on the total for each type and we have only $\eeps$ different types of tours. 
		So we basically got rid of the complexity due to the capacity constraints since there are only $\eeps$ of them. But we still need to deal with a non-constant number of tours in each square. We refer to the subproblem that needs to be solved as the \emph{general $m$-paths problem}. 
		The easiest version of this subproblem, which we call the \emph{$m$-paths problem}, is defined as follows. We are given a square containing a set of points, two points $a$ and $b$ on the boundary, and a number $m$. Goal is to find exactly $m$ paths between $a$ and $b$ that together visit all points in the square. (See Figure~\ref{fig:mpaths}.) Note that the case $m=1$ is similar to the TSP-path problem. In fact, for constant $m$, a PTAS follows easily using Arora's approach~\cite{Arora98JACM}. But $m$ is not constant in general and keeping track of all the different paths in the DP is non-trivial. To simplify analysis we assume tours are directed. Then, the $m$ paths problem is to construct a directed graph with flow values on the arcs defining an integral  flow from $a$ to $b$ of value $m$ and such that the support graph is connected and contains every input point. An important observation is that when we solve the $m$-paths problem using Arora's approach in a pretty straight forward way, we do get an $n^{\tilde{O}(\log n)}$ time approximation scheme if, in the DP, we store precise information on the flow values, i.e., the number of paths crossing each portal in either direction. The running time  follows easily from the fact that there are $\tilde{O}(\log n)$ portals per dissection square and each portal is crossed $O(m)=O(n)$ times. We describe the details of this approach in this paper for three reasons: 
		First, this is already a big improvement over the running time of 
		$n^{\log^{O(1/\epsilon)}n}$ from~\cite{DasMathieu2014} but also over the 
		$O(n^{\log^6 n/\epsilon^5})$ time from~\cite{JayaprakashS2023}. Second, the DP and its analysis stay close to that of Arora~\cite{Arora98JACM} and it is relatively easy to verify the correctness of this part of the algorithm.  Third, it forms the basis for the more complex analysis of the $n^{\tilde{O}(\log\log n)}$ time algorithm. 
		
		To get the improved running time, we round flow values in the DP to powers of some number $\alpha$. Of course, using rounded numbers like this gives a mismatch of flow values at portals, i.e., we lose flow conservation. 
		But intuitively, modest rounding gives small errors that can be fixed at a small cost. Hence, we should get \emph{some} improvement over the $n^{\tilde{O}(\log n)}$ time algorithm of Section~\ref{sec:logn}, which does not use rounding. 
		The details of the Q-PTAS for the $m$-paths problem are rather involved as we need to carefully keep track  of the rounding error. We cannot simply apply the same DP with rounded values but need to add more structure to the configurations stored in the DP. For example, we use a sophisticated portal layout (see Figure~\ref{fig:portallayout}) which is more restricted than the standard layout, and provides a new lower bound on the cost of any portal respecting solution but gives no significant increase in the cost of the solution.  This layout is of general interest for TSP-like problems in the Euclidean plane.

		After analyzing the $m$-paths problem, we discuss the complexity of the \emph{general} $m$-paths problem, which is basically a combination of $\eeps$ $m$-paths problems that have to solved simultaneously on the same square. In the DP, these different $m$-paths problems are only dealt with together at the lowest level (the grid cells), where we assign input points to paths. At all other levels in the DP, we can handle the different $m$-paths problems independently in parallel, which leads to an $\eeps$ factor increase in the exponent of the running time. Hence,  the running time remains $n^{\tilde{O}(\log\log n)}$.

		\section{A polynomial time reduction from CVRP to the $m$-paths problem. }
		The reduction to the $m$-paths problem consists of two steps, as described in Algorithm 1 and 2. First, we reduce to $O(n)$ bounded instances and then reduce each bounded instance to  $n^{\eeps}$ instances of the $m$-paths problem.

		\begin{algorithm}
			\caption{Reduction to bounded instances:}\label{alg:main}
			\begin{algorithmic}
				\State - Partition the instance $I$ (at random) into bounded instances  $I_i$. 
				\State - Solve each bounded instance $I_i$, using \textbf{Algorithm 2}. This gives a set of
				\State\ \   tours $\newGamma_i$ for each $I_i$.  \\
				\Return  the union: $\cup_i \newGamma_i$. 
			\end{algorithmic}
			\label{alg:1}
		\end{algorithm}
		\begin{algorithm}
			\caption{Reduction from a bounded instance to the $m$-paths problem:}\label{alg:bounded}
			\begin{algorithmic}
				\State - Place a grid of size $\tilde{O}(1)$ over the input points. 
				\State -   For each square (grid cell), use \textbf{Algorithm 3} to solve $n^{\tilde{O}(1)}$ subproblems
				\State \ \  (the $m$-paths problems). 
				\For {every combination of one subproblem for each square}
				\State - Solve a linear program and obtain a partial solution.
				\State - Add the missing parts in a greedy way and store the solution. 
				\EndFor\\
				\Return the best solution found. 
			\end{algorithmic}
			\label{alg:2}
		\end{algorithm}\begin{algorithm}
			\caption{Algorithm for the $m$-paths problem (the subproblem):} \label{alg:subproblem}
			\begin{algorithmic}
				\State -  Apply a dynamic program similar to Arora using rounded numbers. 
				\State - Turn the pseudo solution returned by the DP into a real solution. \\
				\Return the solution.
			\end{algorithmic}
			\label{alg:3}
		\end{algorithm}

		\subsection{Reducing to bounded instances}\label{sec:decompositionTSP}
		
		Bounding the ratio of maximum and minimum distance to the origin is a first step in several approximation schemes. 
		Adamaszek et al.\cite{AdamaszekCL2010} (2010) introduce the technique to reduce the number of points in the instance to some function of $\epsilon$ and $c$ and then show that the Q-PTAS from~\cite{DasMathieu2014} has polynomial running time for moderately large capacities $c=2^{\log^{O(\epsilon)}n}$.  
		In their reduction to bounded instances, they use a Baker's type approach~\cite{Baker1994}, where the plane is partitioned into geometrically increasing rings that partition the point set into subsets $U_1,U_2,U_3,\dots$. Then, every $(1/\epsilon)$-th. ring is `marked'. Removing the marked rings leaves a collection of bounded instances (separated by the marked rings). The instances on the marked rings contribute (when randomized) only an $O(\epsilon)$ fraction of the optimal cost and are approximated using the simple iterated tour partitioning algorithm. Grandoni at al~\cite{GrandoniMZ2022unsplittablR2} adopt the same partitioning scheme, and Mathieu and Zhou~\cite{mathieu2023ptastrees,MathieuZhou2024Tight-1.5trees} use a similar approach for trees, splitting the instance in two types of bounded instances,  where one of the type contributes only $O(\epsilon)$ to the optimal cost and can be handled in a less efficient way. 
		
		A reduction to bounded instances has also been used for the average completion time objective~\cite{Sitters2021,GriesbachHKS_ESA_2023}. Our approach here is in fact based on the reduction used in the PTAS for the traveling repairman problem \cite{Sitters2021}. A difference of our approach with that used in the VRP literature is that we have only one type of bounded instances, and is, in our opinion, simpler and more general.  The same reduction was also used in~\cite{MiltenburgSOFSEM}. For completeness, we include the  relatively short proof in this section.
		Denote by $d(v)$ the distance $d(r,v)$. Let $d_{\max}=\min_v d_v$ and $d_{\min}=\min_v d_v$. Assume $d(v)\ge 1$ for all $v$.
		
		\bigskip\noindent \textbf{Reduction to bounded instances} (Algorithm 1):
		\label{alg:decompose} 
		\begin{itemize}
			\item[1)] \textbf{Partition:} Let $a=2/\eps$ and take $b$ uniformly at random in $[0,1]$.
			
			Let $x_0=0$ and $x_i=e^{a(i-1+b)}$ for $i=1,2,\dots,\muu$, where $\muu$ is large enough such that $d_{\max} < x_{\muu}$. Partition the points  into sets $V_i=\{v|x_{i-1}\le d(v)<x_{i}\}$, $i\in\{1,2,\dots,\muu\}$.
			\item[2)] \textbf{Approximate subproblems:} Let $I_i$ be the instance restricted to the points in $V_i$. For each $I_i$, find a solution $\newGamma_i$.
			
			\item[3)] \textbf{Concatenate:} Return the union  of $\newGamma_1,\dots,\newGamma_\muu$ as the final solution.
		\end{itemize}\bigskip

		The key observation in the reduction to bounded instances is that we choose $a$ large enough as a function of $\eps$ only such that the expected loss due to the decomposition is only an $\eps$ fraction of the optimal cost.   
		
		\begin{lemma}
		The union of $\alpha$-approximate solutions $\newGamma_i$ for the bounded instances $I_i$ gives a $(1+\epsilon)\alpha$-approximation for instance $I$.
		\end{lemma}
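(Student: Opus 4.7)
The plan is to show that $\mathbb{E}_b\big[\sum_i \OPT(I_i)\big] \le (1+\epsilon)\,\OPT(I)$; combining this with the $\alpha$-approximability of each $I_i$ and a standard averaging argument over $b$ (or, if one needs a deterministic statement, enumerating the $O(n)$ values of $b$ at which the partition actually changes) then yields the lemma. The core idea is to exhibit a cheap \emph{projection} of an optimal tour-set $\newGamma^{\ast}$ for $I$ onto a feasible solution for $\bigcup_i I_i$, and to bound its expected overhead.

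First, I would define the projection as follows: for every tour $T = r, v_1, v_2, \ldots, v_k, r$ of $\newGamma^\ast$ and every internal edge $(v_j, v_{j+1})$ whose two endpoints fall in different rings, replace that edge by the detour $(v_j, r), (r, v_{j+1})$. This shortcuts $T$ into several closed sub-tours, each containing $r$ and visiting only input points of a single ring, so the union of all sub-tours (over all $T$) is feasible for every $I_i$ simultaneously, and the capacity constraint $c$ is respected trivially since each sub-tour has a subset of the points of some original tour. The overhead paid at one break on an edge $(u,w)$ is $d(u)+d(w)-d(u,w)$, and the triangle inequality $d(u,w)\ge |d(u)-d(w)|$ bounds this by $2\min(d(u),d(w))$. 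Keeping $\min$ rather than the naive $\max$ is the critical bookkeeping observation.

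Next, I would average the total overhead over the random shift $b$. For a fixed edge $(u,w)$ with $d(u)\le d(w)$, its endpoints fall in different rings exactly when some circle $x_j = e^{a(j-1+b)}$ sits radially between them, which a direct calculation shows happens with probability $\min\bigl(1,\ln(d(w)/d(u))/a\bigr)$. So the per-edge expected overhead is at most $2\min(d(u),d(w))\cdot\min\bigl(1,\ln(d(w)/d(u))/a\bigr)$, and a short case split according to whether $d(w)/d(u)\le e^a$ finishes the job. In the first case, $\ln(d(w)/d(u))\le (d(w)-d(u))/d(u)$ combined with $d(u,w)\ge d(w)-d(u)$ gives an overhead bound of $2\,d(u,w)/a$; in the second case the probability is $1$ and the overhead is $\le 2d(u)$, while $d(u,w)\ge d(w)-d(u)\ge d(u)(e^a-1)$ makes this at most $4 e^{-a}\,d(u,w)$. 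With $a=2/\epsilon$, both bounds are at most $\epsilon\cdot d(u,w)$, so summing over all edges of $\newGamma^\ast$ gives $\mathbb{E}[\text{overhead}]\le\epsilon\cdot\OPT(I)$. I expect the second case to be the main obstacle: the tight $2\min$ estimate is essential there, because the naive $2\max$ combined with probability $1$ and $d(u,w)\approx d(w)$ would only yield overhead on the order of $d(u,w)$ itself, producing a constant-factor blow-up rather than the promised $(1+\epsilon)$ factor.
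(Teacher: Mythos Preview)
Your argument is correct and follows the same high-level strategy as the paper---decompose an optimal solution $\Gamma^{*}$ for $I$ into feasible solutions for the $I_i$ by severing tours at ring boundaries, then bound the expected overhead---but the mechanics differ in a way worth noting. The paper cuts each tour at the actual geometric point $s$ where it crosses a circle of radius $x_j$, paying $2x_j$ per crossing; the expected overhead contributed by an edge $(u,w)$ with $d(u)\le d(w)$ is then the single integral
\[
\int_{d(u)}^{d(w)} 2x\cdot\frac{1}{ax}\,dx \;=\; \frac{2\bigl(d(w)-d(u)\bigr)}{a}\;\le\;\frac{2\,d(u,w)}{a},
\]
which automatically accounts for the possibility of several circles crossing a single long edge and so needs no case split at all. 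Your version instead severs at the vertices and bounds the overhead by $2\min(d(u),d(w))$ times the probability that $u,w$ lie in different rings; this has the small conceptual plus that each edge is broken at most once regardless of how many circles it spans, but it forces the two-case analysis you sketch (and it is precisely in the long-edge case that your sharper $2\min$ estimate is indispensable, as you correctly point out). Both routes arrive at the same $\epsilon\cdot\OPT$ overhead with $a=2/\epsilon$; the paper's continuous cut-and-integrate is shorter and more uniform, while yours is more combinatorial.
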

		\begin{proof}
			Consider an optimal solution $\newGamma^*$ for $I$. We shall randomly decompose $\newGamma^*$ into solutions $\newGamma_i'$ for each $V_i$ such that the expected increase in cost is no more than $\eps$ times the cost of $\newGamma^*$.  
		
		We say that an edge $(u,v)$ of a tour \emph{crosses a circle} at point $s$ if $s$ is on  $(u,v)$ and one endpoint lies inside the circle and the other outside or on the circle.  
		Suppose an edge of a tour $T$ crosses a circle at $s$. We open up  $T$ at $s$ and we add an edge $(r,s)$ for each of the two parts. We do this for every crossing of a tour with a circle
		(but not for the new crossings made by the added edges). 
		This resulting set of tours, say $\newGamma'$, is obviously a feasible solution and has the additional property that any tour in it only visits points from one subset $V_i$. Hence $\newGamma'$ is the sum of solutions $\newGamma_1', \newGamma_2', \dots,\newGamma_{\muu}'$, where each $\newGamma_i'$ is a solution for $V_i$.   
		
		Next, we compute the expected cost for the added edges.
		Let $x_i(b)=\alpha_i e^{ab}$ with $\alpha_i=e^{a(i-1)}$ and  $b$ uniformly in $[0,1]$, as defined above. 
		The function $x_i(b)$ has range $[e^{a(i-1)},e^{ai}]$ and is invertible on its domain $[0,1]$. 
		Hence, the probability density function $f_i$ for $x_i$ can easily be computed: 
		\[f_i(x_i))=\frac{1}{h_i'(b)}=\frac{1}{\alpha_i a e^{ab}}=\frac{1}{\alpha_i ae^{\ln(x_i/\alpha_i)}}
		=\frac{1}{ax_i}\text{ for }x_i\in [e^{(i-1)a},e^{ia} ].\]
	Let $(u,v)$ be an edge in a tour $T$ and assume $d(u)\le d(v)$. 
		Note that $(u,v)$ crosses a circle of radius $x$ if $d(u)<x<d(v)$. The total expected sum of the radii of circles cutting $(u,v)$ is  
		\[\int\limits_{x=d_u}^{x=d_v}x\cdot \frac{1}{ax}dx=\int\limits_{x=d_u}^{x=d_v} \frac{1}{a}dx=(d_v-d_u)/a\]
		For every crossing of an edge with a circle we pay twice the radius of the circle in our decomposition of $\OPT$  
		Hence, the total expected additional cost is no more than 
		$2\OPT/a = \eps \OPT$. Let $\OPT_i$ be the optimal cost for instance $V_i$. Note that this is random variable and 
		\[\sum_i \expected[ \OPT_i]= \expected[ \sum_i \OPT_i]\le (1+\eps)\OPT.\]
		\end{proof}
		
		\subsection{Defining the subproblems for a bounded instance.}	
		Consider a bounded instance and denote by $\OPT$ its optimal cost. (We drop the index $i$ that we used for the bounded instances in the previous section.) Let 
		$\max d_j\le D$ and $\min d_j\ge \delta D$ for some $D>0$ and $0<\delta<1$.
		Form the previous section we can assume that $\delta\le e^{2/\epsilon}$ but we will just work $\delta=\eeps$ in the analysis.
		We place a grid, in random position, that partitions the plane into squares (the grid cells). (See  Figure~\ref{fig:squares}.) To be precise, assume all input points are placed in a $2D\times 2D$ square $\mathcal{B}$ with the origin in the middle of the square at location $(D,D)$. We choose integer $q=\eeps$ appropriately later and define $\tau=2D/(q-1)$. We place a square box $\mathcal{B'}$ of side length $q\tau$ on top of $\mathcal{B}$. The box $\mathcal{B}'$ is divided into $q^2$ squares of side length $\tau$. The lower left corner of $\mathcal{B}'$ is placed in the point $(-x,-y)$ where $x$ and $y$ are taken uniformly at random from the interval $[0,\tau]$. 
		Hence, $\mathcal{B'}$ will always cover $\mathcal{B}$.   
		
		On each side of each of the $q^2$ squares we place $1/\eps$ points at distance $\eps\tau$. We call these \emph{anchor points}. 
		We do not put anchor points at the corners (since there is no need and it simplifies the picture) but place the first anchor point at distance $\eps\tau/2$ from the corner. We also give alternating orientations to the anchor points, which determine the direction in which a tour can cross.  Although tours have no orientation in our definition of the VRP, we shall from now on assume that every tour comes with an orientation and crosses grid lines only at anchor points and in the specified direction. 
		
		\begin{definition}
			We say that a solution is \emph{anchor point respecting} if each tour crosses grid lines only at anchor points and any anchor point is crossed at most 6 times by each tour and only in the given direction. 
		\end{definition}
		
		\begin{figure}[!h]
			\centering
			\includegraphics[width=0.8\linewidth]{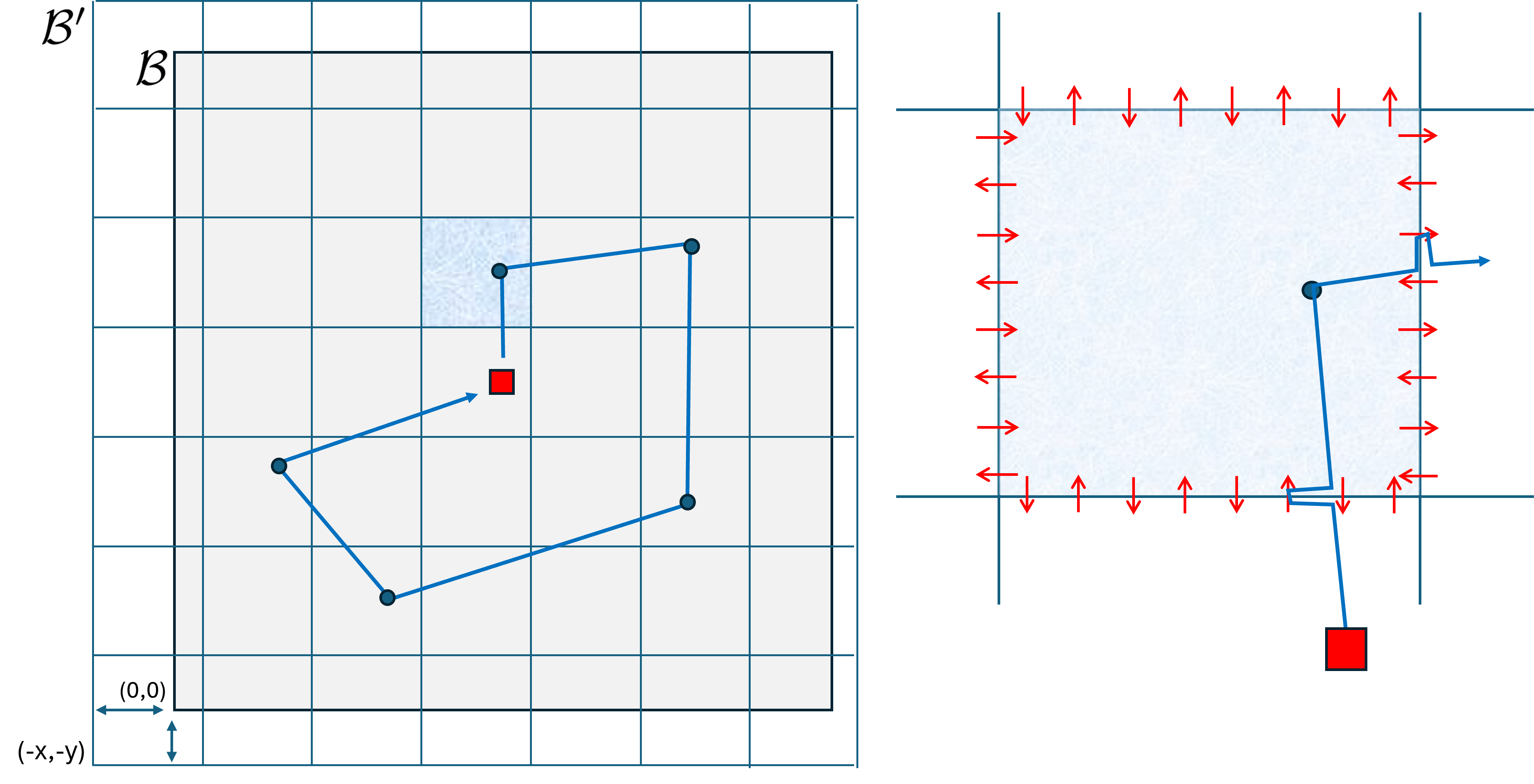}
			\caption{A grid of size $\eeps$ is placed at random position. Each grid square receives $1/\epsilon$ anchor points per side with alternating direction of crossing. Tours are only allowed to cross grid lines at anchor points in the specified direction. }
			\label{fig:squares}
		\end{figure}

		\begin{lemma}\label{lem:acnhorrespectingUB}
			Let $\OPT_A$ be the optimal cost over all anchor point respecting solutions. Then 
			\[\expected[\OPT_A]\le (1+5\eps) \OPT,\]
			where the expectation is over the random placement of the grid. 
		\end{lemma}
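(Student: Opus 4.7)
The plan is to start from an unconstrained optimum solution $\OPT^*$ and construct from it an anchor-point-respecting solution whose expected extra length is at most $5\eps\,\OPT$. I would do this by a two-stage modification: first \emph{snap} every grid-line crossing of each tour to a nearby anchor point of the correct orientation, and then \emph{patch} any remaining violations of the ``at most 6 crossings per anchor per tour'' condition.

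For the snapping stage, consider an edge $(u,v)$ of some tour in $\OPT^*$ and let $s$ be a point where it crosses a grid line. Because the anchor points on each grid segment lie at spacing $\eps\tau$ with alternating orientations, the nearest anchor of the correct orientation for this crossing is within distance $\eps\tau$ of $s$ (at most $\eps\tau$, since we use a direction that appears every $2\eps\tau$ but can choose the closer side as needed). I would reroute the edge to cross at this anchor $p$ instead of at $s$, which adds at most $2\eps\tau$ of detour along the grid line. For the expectation, I would invoke the standard randomized-shift argument: for an edge $e$ with horizontal extent $\Delta_{x,e}$ and vertical extent $\Delta_{y,e}$, the expected number of vertical (resp. horizontal) grid-line crossings under a uniform shift in $[0,\tau]$ is $\Delta_{x,e}/\tau$ (resp. $\Delta_{y,e}/\tau$). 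Summing over edges, the expected total number of grid-line crossings of $\OPT^*$ is at most $2\,\OPT/\tau$, so the expected cost of snapping is at most $(2\,\OPT/\tau)\cdot(2\eps\tau)=4\eps\,\OPT$.

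After snapping, a single tour might still use the same anchor more than 6 times. For the patching stage, I would argue that any two consecutive uses of the same anchor $p$ by a single tour can be bypassed locally: cutting the two pieces of the tour that enter/leave $p$ and reconnecting them ``without'' going through $p$ costs at most $O(\eps\tau)$, since the four endpoints involved sit in the two grid cells adjacent to $p$. Charging each bypass to the length of the tour that already lives near $p$ (which must be $\Omega(\eps\tau)$ to have produced the extra crossings in the first place) shows that the total patching cost is at most $\eps\,\OPT$. Combining this with the snapping bound gives $\expected[\OPT_A]\le (1+5\eps)\,\OPT$.

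The main obstacle I expect is the patching stage. The snapping stage is essentially the classical Arora-style portal argument applied to anchor points, and its expected-cost analysis follows from the usual randomized-shift identity. The patching stage is subtler because the ``at most 6 crossings'' restriction is imposed \emph{per tour} rather than per grid segment, so one cannot invoke the usual per-line patching lemma in a single global sweep; instead one has to argue within each tour that excess visits to a single anchor admit cheap local shortcuts, while keeping track of the direction assigned to each anchor and the parity of the visits in each direction. Getting the constants right (both the ``6'' and the final ``$5\eps$'') comes down to making this local bookkeeping tight.
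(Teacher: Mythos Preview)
Your snapping stage is essentially correct and parallels the paper's first move. The difference---and the reason your patching stage becomes the ``main obstacle'' you anticipate---is the order of operations.

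The paper does \emph{not} snap to a correctly-oriented anchor first and then try to bound the cost of reducing multiplicities. Instead it (i) ignores orientations, moves every crossing to the nearest anchor (detour $\le \eps\tau$), and applies Arora's standard patching lemma, which reduces each anchor to at most two crossings \emph{at zero additional cost}; and only then (ii) fixes orientation by sliding each wrong-direction crossing to an adjacent anchor (another $\le \eps\tau$). Step~(ii) can push at most two crossings from each neighbour onto an anchor, so the final multiplicity is at most $2+2+2=6$ automatically, and the total extra cost per original crossing is at most $3\eps\tau$. Combined with the expected $\sqrt{2}\,\length(T)/\tau$ crossings, this gives the factor $5\eps$.

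Your route reverses (i) and (ii), which forces you to invent a nonstandard patching argument with a charging scheme (``the tour near $p$ must have length $\Omega(\eps\tau)$ per extra crossing''). That claim is not substantiated: after direction-aware snapping, many nearby crossings of a grid line can pile onto the same anchor, and there is no a~priori lower bound on the tour length between consecutive same-direction uses of $p$ that would let you charge each bypass to $\Omega(\eps\tau)$ of existing length without double-counting. Even if such a bound can be made rigorous, the budget you allot ($\eps\,\OPT$) is tight and the bookkeeping is delicate. The paper's ordering avoids this entirely because Arora's patching is free; you should invoke it \emph{before} dealing with orientations rather than after.
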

		\begin{proof}
			Consider an optimal solution $\newGamma$ and let $T$ be some tour in it. First, assume $T$ is not directed and ignore the direction of the anchor points. We make $T$ anchor point  respecting, as done in the PTAS for TSP~\cite{Arora98JACM}, by moving each crossing with a grid line to the nearest anchor point. The detour for each crossing is at most the inter anchor point distance: $\epsilon\tau$. As shown in~\cite{Arora98JACM}, it is possible to traverse $T$ such that each anchor point is crossed at most twice. So we use this traversal of $T$ and if an anchor point is crossed in the wrong direction then we move the crossing to an adjacent anchor point. The extra cost for the adjustment is at most $2\epsilon\tau$ per crossing. Now, each anchor point is crossed at most 6 times and only in the specified direction. 
			The total extra cost per crossing of $T$ with a grid line is at most $3\epsilon\tau$.
			 
			To compute the expected number of crossings of $\newGamma$ with grid lines observe that for any straight line segment between points $(x',y')$ and $(x'',y'')$, the expected number of crossings is exactly 
			\[
			|x''-x'|/\tau+|y''-y'|/\tau.
			\]
			Thus, the total expected detour due to making the solution  anchor point respecting is no more than \[(3\epsilon\tau)\sqrt{2}\OPT/\tau\le 5\epsilon \cdot \OPT.\]  
			\end{proof}\bigskip

		\begin{figure}[!h]
			\centering
			\includegraphics[width=0.5\linewidth]{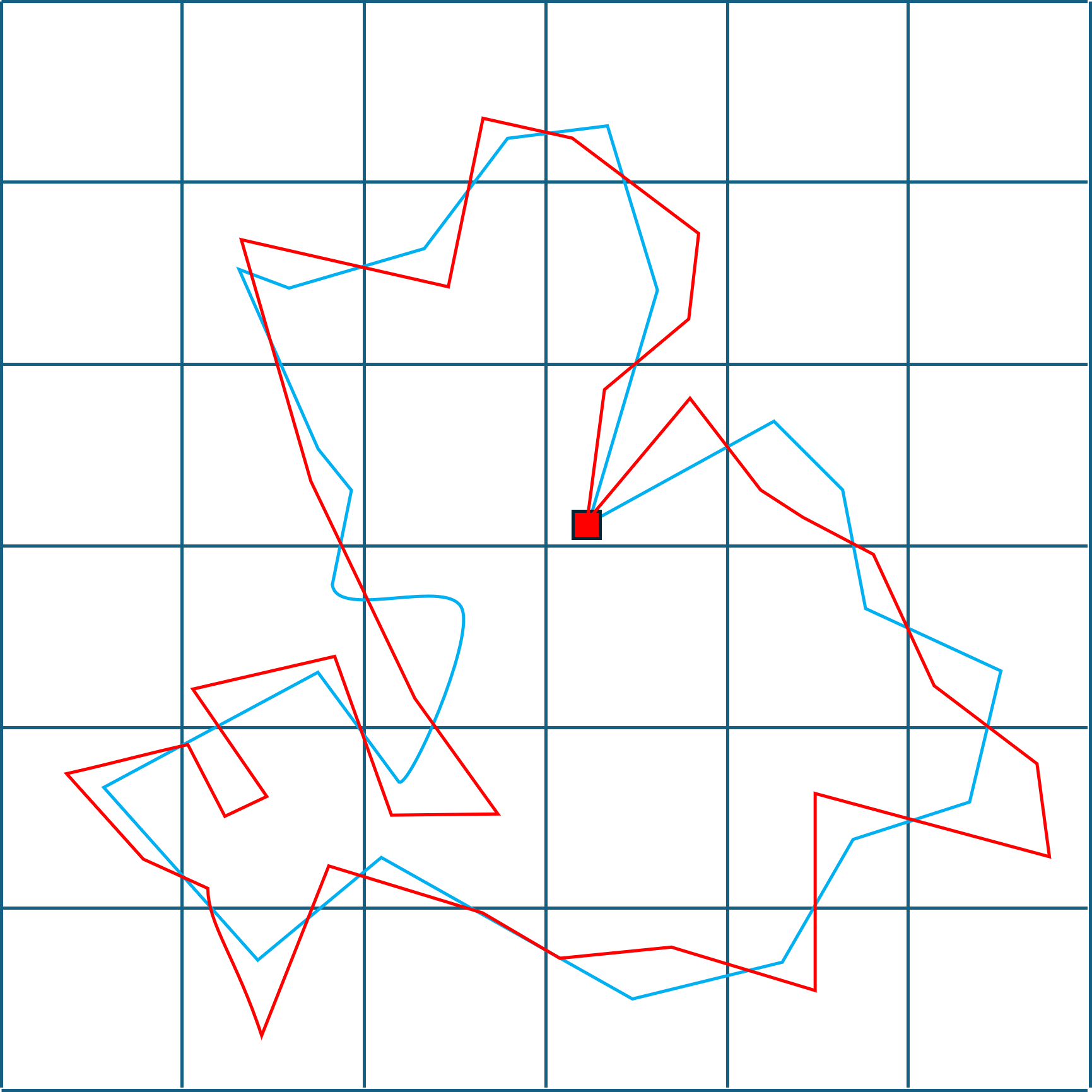}
			\caption{Tours of the same type have the same anchor point sequence.}
			\label{fig:same_type}
		\end{figure}
		
		From now on we take a fixed grid and we will construct a PTAS for finding the best anchor point restricting solution.
		For any tour, we can list the sequence of anchor points it crosses.  
		We say that two tours are of the same \emph{type} if their sequences of anchor points are the same.
		
		\begin{lemma}\label{lem:tourtypes}
			There are only $\eeps$ different tour types. 
		\end{lemma}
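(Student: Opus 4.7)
The plan is to bound the total number of anchor points, bound the maximum length of any anchor-point sequence attached to a single tour, and then observe that the number of distinct sequences is at most (alphabet size) raised to (maximum length), both of which will be $\eeps$.

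First I would count the anchor points. The grid $\mathcal{B}'$ consists of $q^2$ squares, each side of which carries $1/\eps$ anchor points. Since $q=\eeps$, the total number of anchor points on the grid is at most $4q^2/\eps=\eeps$. Call this total $A$. This is the alphabet from which any tour type sequence is drawn.

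Next I would bound the length of a tour type sequence. By the definition of anchor point respecting solution (which we may restrict to, thanks to Lemma~\ref{lem:acnhorrespectingUB}), every tour crosses each anchor point at most $6$ times. Therefore the length of the anchor point sequence of a single tour is at most $6A=\eeps$. Call this bound $L$.

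Finally, each tour type is a sequence of length at most $L$ over an alphabet of size $A$, so the number of distinct types is at most $\sum_{k=0}^{L} A^{k}\le (A+1)^{L}$. Since both $A$ and $L$ are $\eeps$, we have $(A+1)^{L}=\eeps$ as well (the $\tilde{O}$ notation absorbs all dependence on $\eps$), which gives the claim.

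The only subtle point I expect is checking that the factor $6$ coming from the direction/orientation correction in Lemma~\ref{lem:acnhorrespectingUB} is really the correct per-anchor bound for a single tour (as opposed to across all tours), but this is exactly what the definition of anchor point respecting states, so no further work is needed. Everything else is routine counting.
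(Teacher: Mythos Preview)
Your proposal is correct and follows essentially the same approach as the paper's proof: count the total number of anchor points ($\eeps$), bound the length of any anchor-point sequence by a constant times that number ($\eeps$), and conclude that the number of possible sequences is $\eeps$. The paper's proof is simply a terser version of exactly this counting argument.
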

		\begin{proof}
			There are $\eeps$ squares and $\eeps$ anchor points per square and any tour crosses an anchor point $O(1)$ times. So, there are only $\eeps$ possible anchor point sequences.   
		\end{proof}\bigskip
		We shall enumerate all possible tour types $t$ and store this information in a list. More precisely, for any type $t$ and square $\sigma$ we keep a list of anchor point pairs 
		\begin{equation}\label{eq:tourtypes}
			(a^{(t,\sigma)}_i,b^{(t,\sigma)}_i)  \text{ for } i\in \{1,2,\dots,\npairs^{(t,\sigma)}\},
		\end{equation}
		for some $\npairs^{(t,\sigma)}=\eeps$, describing exactly how tours of type $t$ enter and leave $\sigma$. The list is empty if $\sigma$ is not crossed by tours of type $t$. 
		In general, we know that 
			$\npairs^{(t,\sigma)}\le 12/\epsilon$, since a square has $4/\epsilon$ anchor points and each anchor point respecting tour crosses each anchor point at most 6 times. This gives at most $6\cdot (4/\epsilon)/2=12/\epsilon$ pairs. In fact, we can strengthen this a bit. Looking at the proof of Lemma~\ref{lem:acnhorrespectingUB}, the undirected tour crosses each anchor point at most twice (in stead of 6 times) before replacing the crossings to neighboring anchor points. We conclude that 
			\begin{equation}\label{eq:npairs}
			\npairs^{(t,\sigma)}\le 4/\epsilon=\eeps.
		\end{equation}
		\begin{definition}\label{def:segment}
			A \emph{segment} for a square $\sigma$ and tour type $t$ is a set of 	$\npairs^{(t,\sigma)}$ paths, where there is one path between each anchor points pair  $a^{(t,\sigma)}_i,b^{(t,\sigma)}_i$ for all $i\in \{1,2,\dots,\npairs^{(t,\sigma)}\}$.
		\end{definition}	
			 
			 \subsubsection{Enumeration of subproblems}\label{sec:enumeration}
		For each tour type $t$ we guess more detailed information. By \emph{guess} we actually mean that we enumerate over all possible values.  
		For each tour type $t$ we guess:
		\begin{enumerate}[(1)]
			\item The number of tours, $m^{(t)}$, of type $t$.
			\item For each square $\ses$, the total number of points, say $n^{(t,\ses)}$, visited inside $\ses$ in total by tours of type $t$.
		\end{enumerate}
		By Lemma~\ref{lem:tourtypes}, the number of guesses is $n^{\eeps}$. We shall restrict to guesses that have the following obvious restrictions:
		
		\begin{enumerate}[(a)]
			\item $\sum\limits_t m^{(t)}\le n$. (No more than $n$ tours in total.)
			\item $n^{(t,\ses)}=0$ if $\ses$ is not crossed by tours of type $t$.
			\item $\sum\limits_\ses n^{(t,\ses)}_{\ses}\le c\!\cdot\! m^{(t)}$ for all $t$. (The average number of points over tours of type $t$ is at most $c$.)
			\item $\sum\limits_{t} n^{(t,\ses)}$ is equal to the number of points inside square $\ses$. 
		\end{enumerate}	
	We will enumerate over (1) and (2) with restrictions (a)-(d). For each choice of values, this gives one subproblem for each square $\sigma$, which we call the \emph{general $m$-paths problem}. Here, we need to find for every tour type a set of paths that together satisfy (1) and (2) with the restrictions (a)-(d). The cost is the total length of the paths.   
		 
		\bigskip\noindent{\textbf{General $m$-paths problem}}:\\[1mm]
		\noindent An instance is defined by:\vspace{-2.2mm}
		\begin{itemize}
			\setlength\itemsep{-0mm}
			\item[-] a square $\ses$,
			\item[-] the number of tours $m^{(t)}$ for each type $t$,
			\item[-] the number of points  $n^{(t,\sigma)}$ visited in $\ses$ for each type $t$, 
			\item[-] for each type $t$, a set of ordered pairs of anchor points  $(a^{(t,\sigma)}_i,b^{(t,\sigma)}_i)$  for $i\in \{1,2,\dots,\npairs^{(t,\sigma)}\}$, with  $\npairs^{(t,\sigma)}\le 4/\eps$.
		\end{itemize}
		A solution gives for each type $t$: \vspace{-2.2mm}
		\begin{itemize}
			\item[-] a set of $m^{(t)}$ paths from $a^{(t,\sigma)}_i$ to $b^{(t,\sigma)}_i$ for all $i\in \{1,2,\dots,\npairs^{(t,\sigma)}\}$, such that these $m^{(t)}\npairs^{(t,\sigma)}$ paths together visit exactly $n^{(t,\sigma)}$ points and any point in $\sigma$ is visited exactly once. 
		\end{itemize}
		The objective is to minimize the total length of the paths.  
		
		\bigskip 
		We show in Section~\ref{sec:mpaths} that the general $m$-paths problem admits a Q-PTAS.

		\subsection{Constructing tours from solutions to the subproblems.}
		Denote by $\Lambda$ the set of all enumerated combination of values 
		$m^{(t)}$ and $n^{(t,\ses)}$ for all $t$  as described in Section~\ref{sec:enumeration}. For each $\Lambda\in \Lambda$ and square $\sigma$, denote by $I(\lambda,\sigma)$ the corresponding instance of the  general $m$-paths problem. We shall refer to these $m$-paths problems as the \emph{subproblems}.  	By Lemma~\ref{lem:tourtypes}, the number of subproblems is $n^{\eeps}$.

Consider an optimal anchor point respecting solution for our bounded CVRP instance, with cost $\OPT_A$. Let $\lambda^*\in \Lambda $ be the corresponding information on $m^{(t)}$ and $n^{(t,\ses)}$ as described by (1) and (2). Let $\OPT_{\lambda^*}^{\sigma}$ be the optimal costs for the corresponding subproblems $I(\lambda^*,\sigma)$. Then   
		\[\sum_\sigma \OPT_{\lambda^*}^{\sigma}\le \OPT_A.\] 
		Note that the inequality may be strict since our subproblems are relaxations: for each square we only have a capacity constraint for each type and not for each individual path that crosses the square.

		Now assume that for some $\lambda\in \Lambda$ we found a solution to the subproblem $I(\lambda,\sigma)$ of cost $Z_{\lambda}^{(\sigma)}$ for all squares $\sigma$. In this section, we show how to construct a solution to the bounded CVRP instances  of cost at most $(1+\eps)\sum_{\sigma}Z_{\lambda}^{(\ses)}$. This implies that if we have an $\alpha$-approximation algorithm for the general $m$-paths problem, then if we guess $\lambda= \lambda^*$ we obtain a CVRP solution of cost at most 
		
		\[(1+\eps)\sum_{\sigma} Z_{\lambda^*}^{(\ses)} \le (1+\eps)\alpha  \sum_\sigma \OPT_{\lambda^*}^{\sigma}\le (1+\eps)\alpha\OPT_A.\]
		
For now, we fix some arbitrary $\lambda\in \Lambda$ and assume that we found a feasible solution for the corresponding general $m$-paths problem for each of the squares.
In the next section, we define a set of linear constraints which capture the type of bin packing problem that we need to solve. In Section~\ref{sec:almostcompletetours}, we will construct partial tours from the integral part of an extreme solution to the system of linear constraints. 
Then, in Section~\ref{sec:closingtours}, we first close the holes in the tours by connecting endpoints by straight line segments and then add the remaining points to the tours.
		
		\subsubsection{A linear programming formulation.}
		Assume that for some guess $\lambda\in \Lambda$ we found a feasible solution for the subproblem $I(\lambda,\sigma)$ on each square $\sigma$. The next step is to combine these solutions into tours. 
		This is a kind of bin packing problem. Note that for each type $t$, the total number of points visited is no more than $c\!\cdot\! m^{(t)}$, by restriction (c).  Further, by choosing the number of squares, $q^2$, in the grid large, each tour goes through many squares. So, intuitively, for $q$ large enough we can construct a feasible solution out of the solutions to the subproblems at small additional cost. We will do this independently for each type $t$. So, 
		we fix some type $t$ for now and drop the index $t$ from the notation. So let $m=m_t$ and denote $n^{\ses}=n^{(t,\ses)}$ for all $\ses$.

		Let $\mathcal{S}$ be the set of squares crossed by our fixed type $t$ and denote $|\SSS|=k$. Although $k$ depends on $t$, it is clear that $k\le q^2$, since this is the number of squares in the grid. On the other hand, since the instance is bounded and any point is at distance at least $\delta D$ from the origin, $k\ge \delta D/\tau=\delta(q-1)/2$. Hence, we have the following bounds on $k$ that we will only use at the very end of this section to obtain inequality~\eqref{eq:bound_on_m}. In essence, we can ensure a lower bound on $k$  by choosing $q$ large enough.
		\begin{equation}\label{eq:bound_on_k}
			\delta(q-1)/2\le k\le q^2, \text{ where $q=\eeps$ is chosen appropriately later}.
		\end{equation}
		\begin{figure}
			\centering
			\includegraphics[width=0.95\linewidth]{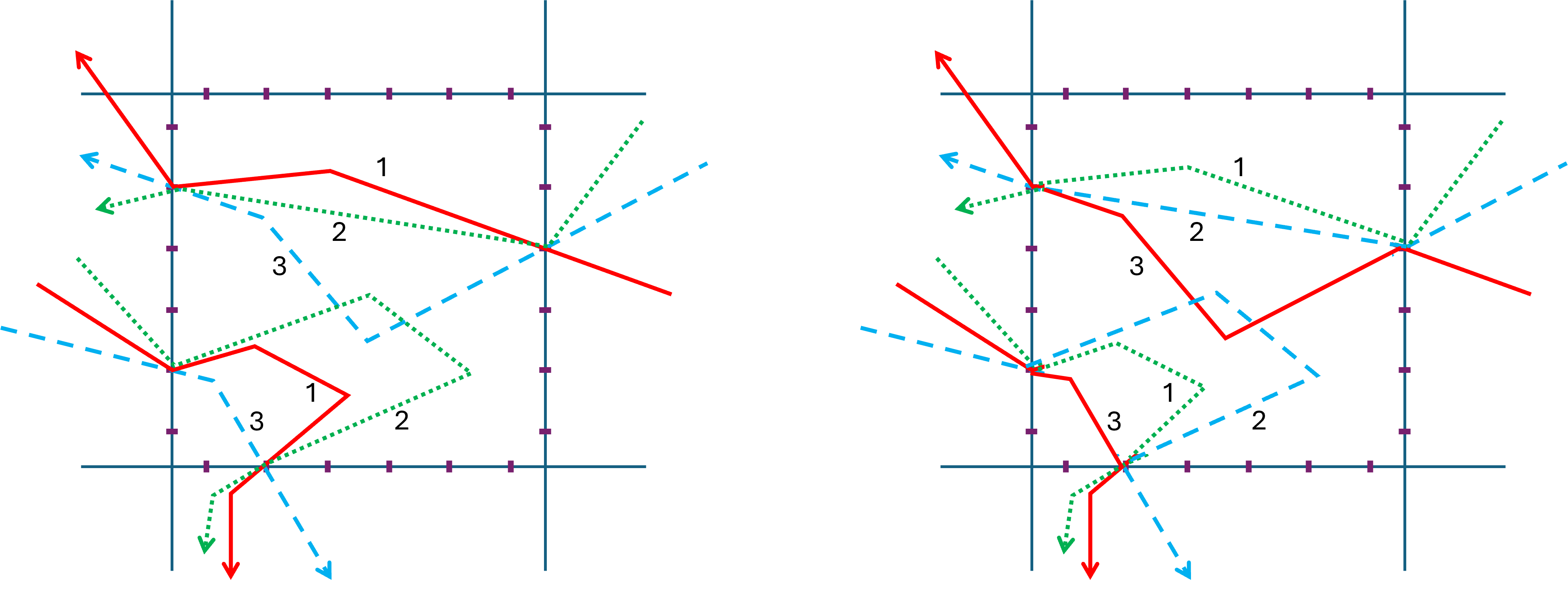}
			\caption{The square $\sigma$ is traversed twice by each of the $m=3$ tours of some type $t$. The 3 paths of each traversal are labeled 1,2,3.  Two paths with the same label define one segment. The capacity constraint of the CVRP is handled by swapping segments between tours. The figure shows two different matchings of segments to tours. }
			\label{fig:swapsegments}
		\end{figure}
			For now, consider some square $\ses\in \mathcal{S}$. The computed solution for $I(\lambda,\sigma)$ gives for each pair of anchor points a set of \emph{exactly} $m$ paths.  Arbitrarily label the $m$ paths between each anchor point pair by  $1,2,\dots ,m$ and let $n^{\ses}_{j}$ be the number of points visited by all paths with label $j$ inside $\ses$. 
		Then, the set of paths with label $j$ defines a segment inside $\ses$ and we refer to it as \emph{segment} $j$ in $\ses$. We denote the index set of the segments by $\PP=\{1,2,\dots,m\}$. For example, Figure~\ref{fig:swapsegments} shows 3 segments, labeled 1,2,3.

		We shall construct $m$ tours by matching segments to tours in each square. Let $\mathcal{T}=\{1,2,\dots,m\}$ be the index set of the tours of type $t$ to be constructed.  So when we write `tour $i$', then this  may either refer to a complete tour or to a tour with index $i$ that is to be constructed from the LP-solution. 
		(Figure~\ref{fig:swapsegments} shows two possible matchings of segments to tours.)
		
		Define variables $x(i,\ses,j)$ for all tours $i$, squares $\ses$, and segment labels $j$. Consider the following set of constraints:
		\begin{align}
			\sum_{j\in \PP} x(i,\ses,j)=1 && \text{ for all $\ses\in \SSS, i\in \TT$}\label{cs1} \\  
			\sum_{i\in \TT} x(i,\ses,j)=1 && \text{ for all $\ses\in \SSS, j\in \PP$} \label{cs2} \\  
			\sum_{\ses\in \SSS}\sum_{j\in \PP}   x(i,\ses,j)n^{\ses}_{j}\le c && \text{ for all $i\in \TT$} \label{cs3} \\  
			x(i,\ses,j)\ge 0 && \text{ for all $\ses\in \SSS$, $i\in \TT$, and $j\in \PP$}. \label{cs4} 
		\end{align}

		Constraint~\eqref{cs1} says that each tour $i$ selects exactly one segment from every square $\ses$.	Constraint~\eqref{cs2} says that each segment $j$ is selected exactly once in each square $\ses$ .
		Constraint~\eqref{cs3} says that each tour has at most $c$ points.

		At least one fractional solution exists, namely 
		$x(i,\ses,j)=1/m$ for all $i,\ses,j$. (Recall that $|\TT|=|\PP|=m$.)  
		The first two constraints are obviously true. The third constraint, \eqref{cs3}, follows from the obvious assumption we made on the guessed numbers, namely, that the average number of points visited by tours of the same type is no more than $c$.
		\[\sum_{\ses\in \SSS}\sum_{j\in \PP}   x(i,\ses,j)n^{\ses}_{j}=\frac{1}{m}\sum_{\ses\in \SSS}\sum_{j\in \PP} n^{\ses}_{j}\le \frac{1}{m} mc =c.\]

		An integral solution to this system of constraints has exactly $mk$ variables equal to 1. (Each tour is assigned one segment for every square.)  In general, extreme solutions are not integral and may have many fractional variables.
		To reduce the number of fractional variables in extreme solutions, we will replace constraint~\eqref{cs2}. The idea is to group, within each square, the segments containing at most $c$ input points such that any two segments within a group contain approximately the same number of points.
	    More precisely,  let $\beta=\eeps$, which we choose appropriately later in~\eqref{eq:frac_cost3}. Then, for each square $\ses\in\SSS$, we partition the set of 
	    $\{j\mid n^{\ses}_{j}\le c\}$ 
	    into groups $G_{\ses,1},G_{\ses,2},\dots$ such that within each group $G_{\ses,h}$ we have that 
	    \[|n^{\ses}_{j}-n^{\ses}_{j'}|\le \Delta:=\eps c/(\beta k),\] for any pair $j,j'\in G_{\ses,h}$.   
		Note that this will give no more than $c/\Delta=\beta k/\eps$ groups per square $\ses$.     
		We keep constraint~\eqref{cs2} for segments with more than $c$ points: 
		\begin{align}
		\sum_{i\in \TT} x(i,\ses,j)=1 && \text{ for all $\ses\in\SSS, j\in \PP$ with $n^{\ses}_{j}> c$ } \label{cs2a} 
		\end{align}
		For the other segments, we replace constraint~\eqref{cs2} by grouped constraints. Let $g_{\ses}\le \beta k/\eps$ be the number of groups in $\ses$.    
		\begin{align}
		\sum_{i\in \TT} \sum_{j\in G_{\ses,h}} x(i,\ses,j)= |G_{\ses,h}|&&\text{ for all $\ses\in\SSS$ and $h\in \{1,2,\dots,g_{\ses}\}$}\label{cs2b} 
		\end{align}
		
		The system of linear constraints has at least one solution since taking all $x$-values $1/m$ is still feasible. 
		Let's count the number of constraints in the system defined by~\eqref{cs1},\eqref{cs3},\eqref{cs4},\eqref{cs2a}, and~\eqref{cs2b}. There are $mk$ constraints of type~\eqref{cs1} and the number of constraints~\eqref{cs3} is $m$. 
		The number of constraints~\eqref{cs2b} is at most $\beta k^2/\eps$.
		The number of constraints~\eqref{cs2a} is at most $m$, since there are at most $m$ segments with more than $c$ points by our restriction on the average number of points per tour. See property (c) in Section~\ref{sec:enumeration}.  
		Hence, any extreme solution has at most 
		\begin{equation}\label{eq:n_support_of_LP}
			mk+ 2m+\beta k^2/\eps
		\end{equation} strictly positive values. 
		In Section~\ref{sec:almostcompletetours}, we will construct a \emph{partial solution} from the integral part of an extreme solution $x$.

		\subsubsection{Constructing almost complete tours from an LP-solution}\label{sec:almostcompletetours}
		Let $x$ be an extreme solution for the system given by ~\eqref{cs1},\eqref{cs3},\eqref{cs4},\eqref{cs2a}, and~\eqref{cs2b}. 
		Say that a pair $(i,\ses)$ is \emph{integer} if there is a $j$ such that  $x(i,\ses,j)=1$ and (by Constraint~\eqref{cs1})  $x(i,\ses,j')=0$ for all $j'\neq j$. Say that a pair $(i,\ses)$ is \emph{fractional} otherwise. A fractional pair $(i,\ses)$ gives at least two fractional variables  $x(i,\ses,j)>0$. 
		Hence, by~\ref{eq:n_support_of_LP}, at least $mk- (2m+\beta k^2/\eps)$ variables are equal to 1 and at most $2m+\beta k^2/\eps$ pairs $(i,\ses)$ are fractional.
		Let $\textsc{Int}$ be the set of integer pairs $(i,\ses)$ and $\textsc{Frac}$ be the set of fractional pairs. As argued above:
		\[|\textsc{Int}|\ge  mk- (2m+\beta k^2/\eps).\]
		\[|\textsc{Frac}|=mk-|\textsc{Int}|\le 2m+\beta k^2/\eps.\]
			N.B. At this point it is good to realize that $k=\eeps$ which we can control by choosing the size $q=\eeps$ of the grid. (See~\eqref{eq:bound_on_k}.) So, for large enough $k$ and $m\gg k$  we have $|\textsc{Frac}|/|\textsc{Int}|\rightarrow 0$. 
			We will work out the details below.

		Consider a square $\ses$ and all tours $i$ such that  $(i,\ses)$ is an integer pair as defined above. 
	Due to the grouping, we no longer have a matching of segments to tours and it may happen that different tours $i$ are assigned, by the LP, to the same segment $j$. But by ~\eqref{cs2b}, we can easily fix this by reassigning within each group. We then get a solution, say $x'$, that keeps all pairs $i,\ses$ integer (but may assign $(i,\ses)$ to another $j$ within a group) and such that Constraints~\eqref{cs1} and~\eqref{cs2} hold and capacity constraint~\eqref{cs3} is violated by at most 
		\[\sum_{\ses\in\SSS} \Delta=k(\eps c/(\beta k)=(\eps/\beta)c.\] 
			We say that the \emph{integral} part of tour $i$ is composed of the segments  $(\ses,j)$ for which $x'(i,\ses,j)=1$.
		At this point we have $m$ \emph{partial tours}, composed of the segments defined by integral part of $x'$, and which violate the capacity constraint by at most a factor  
		$1+\epsilon/\beta $. 
		Next, we remove some segments from each partial tour that has more than $c$ points. Then, in Section~\ref{sec:closingtours} we add all the remaining points to the tours.  
		
		A partial tour that is composed of $u$ segments has at most $c$ points after removing the $\lceil(\epsilon/\beta)u\rceil\le (\epsilon/\beta)u +1$ segments with the largest number of points.	
		Taking the sum over all $m$ tours we see that we need to remove at most   
		$(\epsilon/\beta)|\textsc{Int}|+m$ segments from all partial tours.
		Let $\textsc{Int}'$ be the set of integral pairs $(i,\ses)$ that remain after deletion. 
		\begin{eqnarray}\label{eq:INT'}
			|\textsc{Int}'|&\ge & |\textsc{Int}|-((\epsilon/\beta)|\textsc{Int}|+ m)\nonumber \\
			&= &(1-\epsilon/\beta) |\textsc{Int}|-m \nonumber \\
			& \ge & (1-\epsilon/\beta)\left(mk - (2m+\beta k^2/\eps)\right)-m\nonumber\\
			& = & (1-\epsilon/\beta)mk - (1-\epsilon/\beta) (2m+\beta k^2/\eps )-m\nonumber\\
			& \ge  & (1-\epsilon/\beta)mk  - \beta k^2/\eps-3m.
		\end{eqnarray}
		At this point we constructed $m$ partial tours of at least a total number of segments given by~\eqref{eq:INT'} which all satisfy the capacity constraint. In the next section we close the tours and divide the remaining segments fractionally over the tours.

		\subsubsection{Closing the tours and adding the remaining segments}\label{sec:closingtours}
			\begin{figure}
			\centering
			\includegraphics[width=0.7\linewidth]{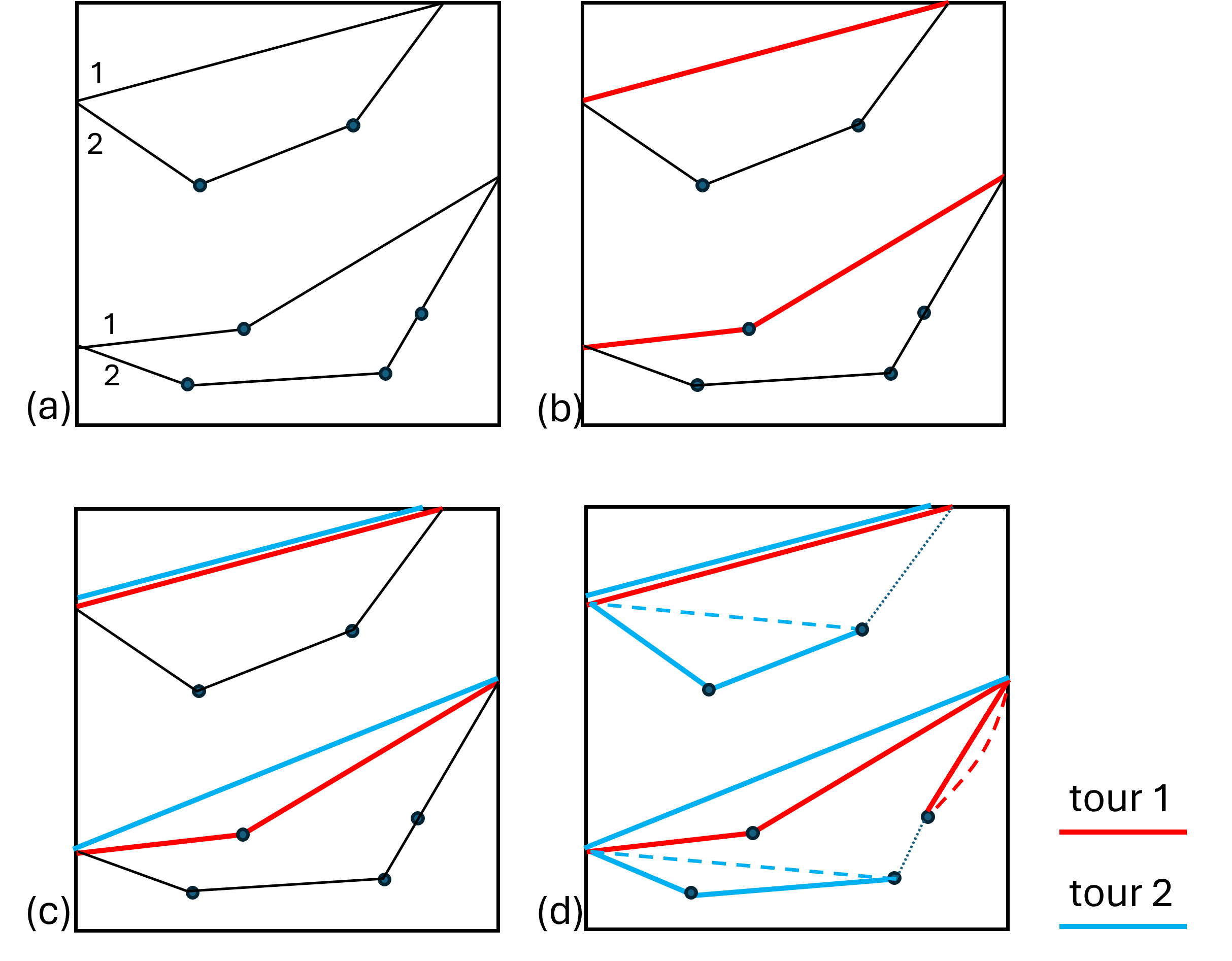}
			\caption{Figure (a) shows $m=2$ segments, labeled 1 and 2. Figure (b) shows the situation after the partial tour construction, where Segment 1 is assigned to the red tour (tour 1) and Segment 2 is not assigned. Figure (c) shows the situation after closing the tours. Tour 2 (blue)  is assigned a straight line segments between each of the two anchor point pairs. The 5  points on Segment 2 are not visited yet. Figure (d) shows how these remaining points  are included into tours 1 and 2 at small extra cost. }
			\label{fig:tourconstruction}
		\end{figure}
		
		Our partial solution is composed of $|\textsc{Int}'|$ segments. The   number of unassigned segments is 
		\begin{equation}\label{eq:bound_on_f}
			f:=mk-|\textsc{Int}'|\le (\epsilon/\beta)mk  + \beta k^2/\eps+3m.
		\end{equation}
		First, we close all tours at minimum cost, not using any of the points of the unassigned segments (Figure~\ref{fig:tourconstruction}(c)), and then we add the remaining segments to the tours (Figure~\ref{fig:tourconstruction}(d)). 
		Remember that any segment is composed of at most $4/\epsilon$ paths between pairs of anchor points. 
		(See the discussion at the beginning of Section~\ref{sec:enumeration}.)
		Hence, closing the gaps in the tour costs no more than $(4/\epsilon)\sqrt{2}\tau$ per missing segment, which we roughly bound by  $6\tau/\epsilon$.
		Closing all gaps has total cost at most 
		\begin{equation}\label{eq:frac_cost1}
			(6\tau/\epsilon)f.
		\end{equation}
		At this point, our $m$ partial tours are proper tours of type $t$, satisfying the capacity constraint. It remains to assign the yet unassigned segments to these tours. There are $m$ tours and $f$ unassigned segments. We simply assign them one by one fractionally to tours until the capacity $c$ of the tour is reached. Let us label the unassigned segments for now by $1,2,\dots,f$, and let $z_{ij}$ be the number of points from segment $j$ assigned to tour $i$. Then, this greedy assignment gives at most $m+f-1$ strictly positive values $z_{ij}$. (Like greedily and fractionally packing $f$ items into $m$ bins.) Call these the \emph{slices}. So a slice is part of a segment (or the complete segment) assigned to a tour. 
		Since any segment consists of at most  $4/\epsilon$ paths, a slice consists of at most  $4/\epsilon$ pieces of a path. Since all tours are closed and go through all squares in $\SSS$, we can add a slice to a tour at cost no more than $(4/\epsilon)2\sqrt{2}\tau\le (12\tau/\epsilon)$. (See the dotted lines in Figure~\ref{fig:tourconstruction}(d).) 
		Note however that the cost for the slices themselves is already counted, since these are part of the segments.   
		So, the total extra cost for inserting the $f$ remaining segments is at most
		\begin{equation}\label{eq:frac_cost2}
			(12\tau/\epsilon) (m+f-1).
		\end{equation}
		We now choose $\beta=72/\epsilon$. (The constant 72 is a bit arbitrary but gives a round number in the computation below.) 
		Using the bound~\eqref{eq:bound_on_f} on $f$, the total extra cost, as given by~\eqref{eq:frac_cost1} and~\eqref{eq:frac_cost2}, is bounded by
		\begin{eqnarray}\label{eq:frac_cost3}
			(18\tau/\epsilon) (m+f)&\le &\frac{18\tau}{\epsilon} \left((\epsilon/\beta)mk  + \beta k^2/\eps+4m \right)\nonumber \\
			&=  & 18\tau( mk/\beta  + \beta k^2/\eps^2+4m/\epsilon  )\nonumber \\
			&=  & \left(\epsilon{mk}/{4} +  1296 k^2/\eps^3+4m/\epsilon  \right)\tau
		\end{eqnarray}
		\begin{figure}[!h]
			\centering
			\includegraphics[width=0.45\linewidth]{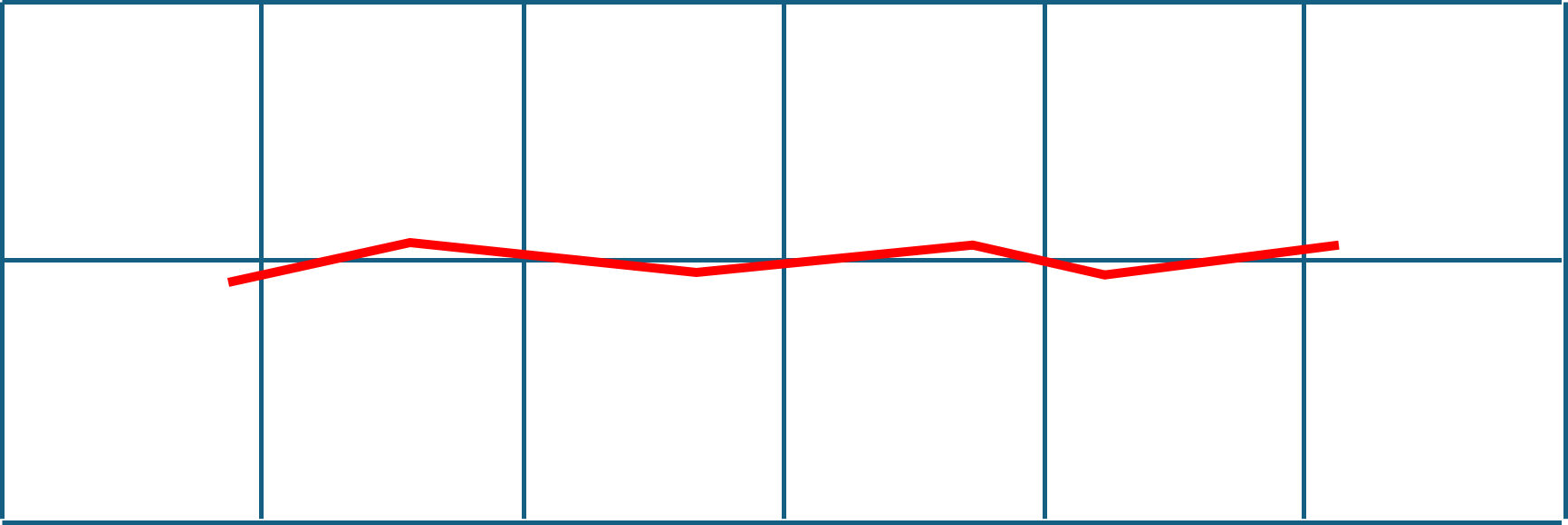}
			\caption{A path that goes through $k$ squares (of a grid) has length at least  $(k/2-2)$ times the side length of a square.}
			\label{fig:LBopt}
		\end{figure}
		
		Note that the cost for each tour of this fixed type $t$ we consider is at least $(k/2-2)\tau$ (See Figure~\ref{fig:LBopt}.) 
		Hence, the total cost for all tours of this fixed type $t$ is at least 
		\begin{equation}\label{eq:bound_on_OPTt}
			LB_t:= m(k/2-2)\tau.
		\end{equation}
		
		Comparing~\eqref{eq:frac_cost3} and~\eqref{eq:bound_on_OPTt} we see that the extra cost is at most $\epsilon LB_t$ if 
	\begin{eqnarray*}
		\left(\epsilon{mk}/{4} +  1296 k^2/\eps^3+4m/\epsilon  \right)\tau &\le& \epsilon  m(k/2-2)\tau\\
		\Leftrightarrow  1296 k^2/\eps^3+4m/\epsilon   &\le& \epsilon  m(k/4-2)\\
			\Leftrightarrow 1296 k^2/\eps^3 &\le& \epsilon  m(k/4-4/\epsilon^2)
	\end{eqnarray*}
		If $k> 	16/\eps^2$ then $k/4-4/\epsilon^2>0$ and the above  is equivalent to  
		\begin{equation}\label{eq:bound_on_m}
			m\ge \frac{ 1296 k^2/\eps^3}{\epsilon(k/4-4/\epsilon^2)}.
		\end{equation}
	Now, we will ensure that $k> 16/\eps^2$ by using~\eqref{eq:bound_on_k} and choosing $\delta(q-1)/2>16/\eps^2$, i.e., we choose the grid size parameter
			\begin{equation*}\label{eq:bound_on_}
				q>32/(\eps^2\delta)+1.
			\end{equation*}	
		Remember, $m$ depends on the type $t$ and we guess a value $m_t$ for each type $t$. 
		Also, we know for each type the number $k$ of squares that a tour of that type goes through. 
		Hence, if we guess $m_t$ to be at least the bound~\eqref{eq:bound_on_m} above then the extra cost for constructing tours out of the LP solutions is at most $\epsilon$ times the lower bound~\eqref{eq:bound_on_OPTt}. 
		If, on the other hand, our guess for  $m_t$ is less than the bound~\eqref{eq:bound_on_m}, then we say that type $t$ has a \emph{small} number of tours and then our analysis in the cost for constructing feasible tours fails in this case. This is not a problem though since then $m_t=\eeps$ and we can afford to guess a lot more information for these tour types. We guess for each tour of type $t$  and each square $\ses$,  exactly the number of points visited by the tour in the square. The number of guesses  is $n^{\eeps}$. 
		We shall add this information as a hard constraint  to  subproblems $I(\lambda,\sigma)$. Hence, for types $t$ with a small number of tours, there is no correction needed as these will always have at most $c$ points per tour. 
		
		Summarizing, we enumerated over all possible combinations of values 
		$m^{(t)}$ and $n^{(t,\ses)}$ as described in Section~\ref{sec:enumeration}. The set of combinations is denote by $\Lambda$ and has size $n^{\eeps}$.
		Each $\lambda\in \Lambda$ defines a subproblem $I(\lambda,\sigma)$ for each square $\sigma$. We solve all of these. For every $\lambda\in\Lambda$ for which we find a solution for every $\sigma$, we solve a bin packing LP. The integral part of the LP-solution defines a partial tour and we add the remaining points to tours in a greedy way. The running time of the whole reduction from bounded instances to the $m$-paths problem is $n^{\eeps}$. 
			
		\begin{theorem}
			There is a $(1+\epsilon)\alpha_{\epsilon}$ approximation scheme for CVRP with running time $n^{\eeps}f(n,\epsilon)$, where $f(n,\epsilon)$ is the time of an $\alpha_{\epsilon}$-approximation algorithm for the general $m$-paths problem on an instance of $n$ points. 
		\end{theorem}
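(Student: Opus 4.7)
The plan is to assemble the pieces already in place: the randomized decomposition of Section~\ref{sec:decompositionTSP}, the anchor-point discretization of Lemma~\ref{lem:acnhorrespectingUB}, the enumeration of $\Lambda$, the $\alpha_{\epsilon}$-approximation for each subproblem $I(\lambda,\sigma)$, and the LP-rounding argument of Sections~\ref{sec:almostcompletetours}--\ref{sec:closingtours}. Concretely, I would first apply Algorithm~1 to reduce a general instance $I$ to bounded subinstances $I_1,\dots,I_\mu$; the lemma in Section~\ref{sec:decompositionTSP} shows that any collection of $\gamma$-approximations $\newGamma_i$ yields a $(1+\epsilon)\gamma$-approximation for $I$. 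It therefore suffices to prove a $(1+O(\epsilon))\alpha_{\epsilon}$-approximation for a single bounded instance, after which one can absorb the constant blow-up by rescaling $\epsilon$.

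For a bounded instance, I place the random grid of Section~2.2, so that by Lemma~\ref{lem:acnhorrespectingUB} the expected optimal anchor-point-respecting cost $\OPT_A$ is at most $(1+5\epsilon)\OPT$. Then I enumerate over all $\lambda\in\Lambda$, a set of size $n^{\eeps}$; for the correct guess $\lambda^*$ (the one induced by an optimal anchor-respecting solution) the relaxed subproblems satisfy $\sum_\sigma \OPT_{\lambda^*}^{\sigma}\le \OPT_A$. Applying the assumed $\alpha_{\epsilon}$-approximation to each of the $n^{\eeps}$ subproblems $I(\lambda,\sigma)$ costs a factor $f(n,\epsilon)$ per square and gives, for $\lambda=\lambda^*$, a collection of solutions of total cost at most $\alpha_{\epsilon}\OPT_A$.

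It then remains to stitch per-square solutions into genuine tours. For each type $t$ I set up the LP in constraints~\eqref{cs1},~\eqref{cs2a},~\eqref{cs2b},~\eqref{cs3},~\eqref{cs4}, take an extreme point, keep the integral $(i,\sigma)$-pairs, remove a few heavy segments so that capacity is restored exactly, close the resulting partial tours with straight-line anchor-to-anchor segments, and greedily distribute the leftover segments. The computation leading to~\eqref{eq:frac_cost3} and the lower bound~\eqref{eq:bound_on_OPTt}, combined with the choice $q>32/(\epsilon^2\delta)+1$ forcing $k>16/\epsilon^2$, shows that whenever $m_t$ exceeds the threshold~\eqref{eq:bound_on_m} the extra cost incurred per type is at most $\epsilon\cdot LB_t$, and hence at most $\epsilon\sum_\sigma \OPT_{\lambda^*}^{\sigma}$ in total. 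For the remaining \emph{small} types we have $m_t=\eeps$, so the extra $n^{\eeps}$ guesses of the exact per-tour, per-square point counts cost nothing in asymptotic running time and eliminate the need for any LP rounding on those types.

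Putting it together, for $\lambda=\lambda^*$ the algorithm returns a feasible CVRP solution of cost at most $(1+\epsilon)\alpha_{\epsilon}\sum_\sigma\OPT_{\lambda^*}^{\sigma}\le (1+\epsilon)\alpha_{\epsilon}\OPT_A\le (1+O(\epsilon))\alpha_{\epsilon}\OPT$ in expectation, which after a final $\epsilon$-rescaling gives the claimed ratio. The running time is $n^{\eeps}$ enumerations times the per-subproblem cost $f(n,\epsilon)$, plus polynomial LP-solving overhead, which is $n^{\eeps}f(n,\epsilon)$. The only nontrivial step in the write-up is verifying that the choices of $\beta$, $\Delta$ and $q$ made in Section~\ref{sec:closingtours} are simultaneously compatible across all types $t$; since each inequality depends only on $\epsilon$ and $\delta$ (not on the type), a single choice of these parameters works uniformly, and the main obstacle is really just book-keeping rather than a new idea.
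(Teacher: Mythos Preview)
Your proposal is correct and follows essentially the same approach as the paper: the theorem is stated as a summary of the reduction developed throughout Section~2, and your write-up faithfully assembles the same pieces (Algorithm~1 and Lemma~1 for the bounded-instance reduction, Lemma~\ref{lem:acnhorrespectingUB} for the anchor-point loss, the enumeration of $\Lambda$, the LP of Sections~\ref{sec:almostcompletetours}--\ref{sec:closingtours} with the $\epsilon\cdot LB_t$ bound from~\eqref{eq:frac_cost3}--\eqref{eq:bound_on_m}, and the separate treatment of small types). The only cosmetic addition you make is the explicit $\epsilon$-rescaling at the end, which the paper leaves implicit.
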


		\section{An $n^{\tilde{O}(\log n)}$-time approximation scheme}\label{sec:mpaths}
		
		In the previous section, we defined the general $m$-paths problem and reduced our search for an approximation scheme for the Euclidean CVRP to an approximation scheme for this problem. We will first develop an approximation scheme for the special case where the square $\sigma$ is crossed by only one type of tour and this happens only once. \\

		\noindent\textbf{$m$-paths problem}\\
		Given is a set of $n$ points\footnote{Note that the notation $n$ for the number of points is ambiguous since we use it as well for the number of points in the complete CVRP instance. In this section, we only focus on the $m$-paths problem so there should be no confusion.} 
		in the Euclidean plane and two dedicated points $a,b$, plus a number $m\le n$. A solution  is a set of exactly $m$ paths between $a$ and $b$ that cover all the given points. Goal is to minimize the total length of the paths. \\

	The case $m=1$ corresponds to the TSP-path problem and the case $a=b$ corresponds to the TSP problem for any $m$, since an optimal solution will use only one tour. The problem appears to be substantially more difficult for arbitrary $a,b$, and $m$ and we are only able to obtain a Q-PTAS here. In the context of the CVRP reduction we may assume that:
	\begin{itemize}
		\item all points are in a square $\sigma$ of side length $\tau$, and
		\item  $a,b$ are on the boundary of $\sigma$ at a distance of at least $\epsilon \tau/2$.  
	\end{itemize}
 To see the second property, note that we gave an orientation to the anchor points and the distance between $a$ and $b$ is at least the minimum distance between any two anchor points. This minimum is attained for anchor points near the corner of the square and is  $\sqrt{2}\eps\tau/2>\eps\tau/2$. This minimum distance immediately gives us a useful lower bound on the optimal cost.  
	 	Let $\OPT_{\sigma}$ be the optimal cost to an instance of the $m$-paths problem on square $\sigma$.
	Then,      
		\begin{equation}\label{eq:LB_OPTpaths}
			\OPT_{\sigma}\ge m\eps\tau/2. 
		\end{equation}
		The reader may want to verify that for $m=\eeps$, the PTAS by Arora~\cite{Arora98JACM} for Euclidean TSP immediately gives a PTAS for the $m$-paths problem. In stead of one tour we need to construct $m$ paths which can be done using the same approach at the expense of a factor $n^{\eeps}$ in running time.  
		However, $m$ is not bounded in our case and we shall assume $m\ge 18/\epsilon$ from now on. Also, we assume from now that $m\le n$, since if $m>n$ there are at least $m-n$ paths with no points and we can directly reduce to the case $n=m$. 
		
				A relatively simple adjustment of the algorithm by Arora does lead to a Q-PTAS with running time $n^{\tilde{O}(\log n)}$, which is what we present in the next section. Note that this is already a huge improvement over existing algorithms~\cite{DasMathieu2014},\cite{JayaprakashS2023}. 
		It supports the intuition that the $m$-paths problem is significantly easier to solve than the CVRP and shows the value of the reduction from CVRP to $m$-paths. The Q-PTAS of this section is the basis for the more complex $n^{\tilde{O}(\log\log n)}$ version of Section~\ref{sec:loglog}.  
	
		\subsection{An $n^{\tilde{O}(\log n)}$ time algorithm for the $m$-paths problem}\label{sec:logn}
		
	Before going over the Q-PTAS in detail, we give a (very) brief explanation for the reader who is familiar with the PTAS from~\cite{Arora98JACM}.
		
		\paragraph*{The Q-PTAS in a nutshell} In stead of one TSP-tour, we have $m$ paths visiting $n$ points. 
		We assume the $m$ paths are directed from $a$ to $b$. Since all paths have the same endpoints and there is no capacity constraint on individual paths, there is no need for paths to cross each other.     
		Inside a dissection square $S$, the paths that cross the dissection square define a directed outer planar graph on the portals of $S$. We call this a \emph{flow graph} for the dissection square. There are $\tilde{O}(\log n)$ portals per dissection square, which gives at most  $\tilde{O}(\log n)$ arcs in any flow graph and $n^{\tilde{O}(1)}$ possible flow graphs.  
		It is easy to show that the flow through each portal is at most $m+n\le 2n$, which gives at most $n^{\tilde{O}(\log n)}$ ways to label the $\tilde{O}(\log n)$ arcs of the flow graph with a flow value. Hence, the size of the DP table is $n^{\tilde{O}(\log n)}$. We explain it in detail below. 
		
		\subsubsection{Grid, dissection squares, and portals}

		\paragraph*{Random grid and moving to mid points}
		Denote by $B_1$ the square $\sigma$ of side length $\tau$ of the instance. We rescale distances and define the side length of the square as $L_{1}=2^{\rho-1}$, where 
		\begin{equation}\label{eq:rho}
			\rho:=\lfloor\log_2 (n/\epsilon)\rfloor.
		\end{equation}
		Note that this implies $n/4\le \eps L_1\le n/2$. We put a square $B_0$ of side length $L_0=2L_1=2^{\rho}$ on top of $B_1$ at random position. More precisely, if $(0,0)$ is the lower left corner point of $B_1$ then $(-x,-y)$ is the lower left corner point of $B_0$ where $x$ and $y$ are taken uniformly at random from $\{1,2,\dots,L_1\}$.
		We define a unit grid on top of $B_0$ and move all points to the middle of the $1\times 1$ squares, which we shall call \emph{grid cells} from now. The error due to this rounding is at most $\sqrt{2}n\le 4\sqrt{2}\eps L_1\le 6\eps L_1$.
		Next, we use the assumption $m\ge 18/\epsilon$ and conclude from~\eqref{eq:LB_OPTpaths} (with $L_1\cong\tau$) that the error due to rounding to centers grid cells is at most
		\[6\eps L_1\le 18\OPT_{\sigma}/m\le \eps\OPT_{\sigma}.\] 
		
		\begin{figure}[!h]
			\centering
			\includegraphics[width=0.75\linewidth]{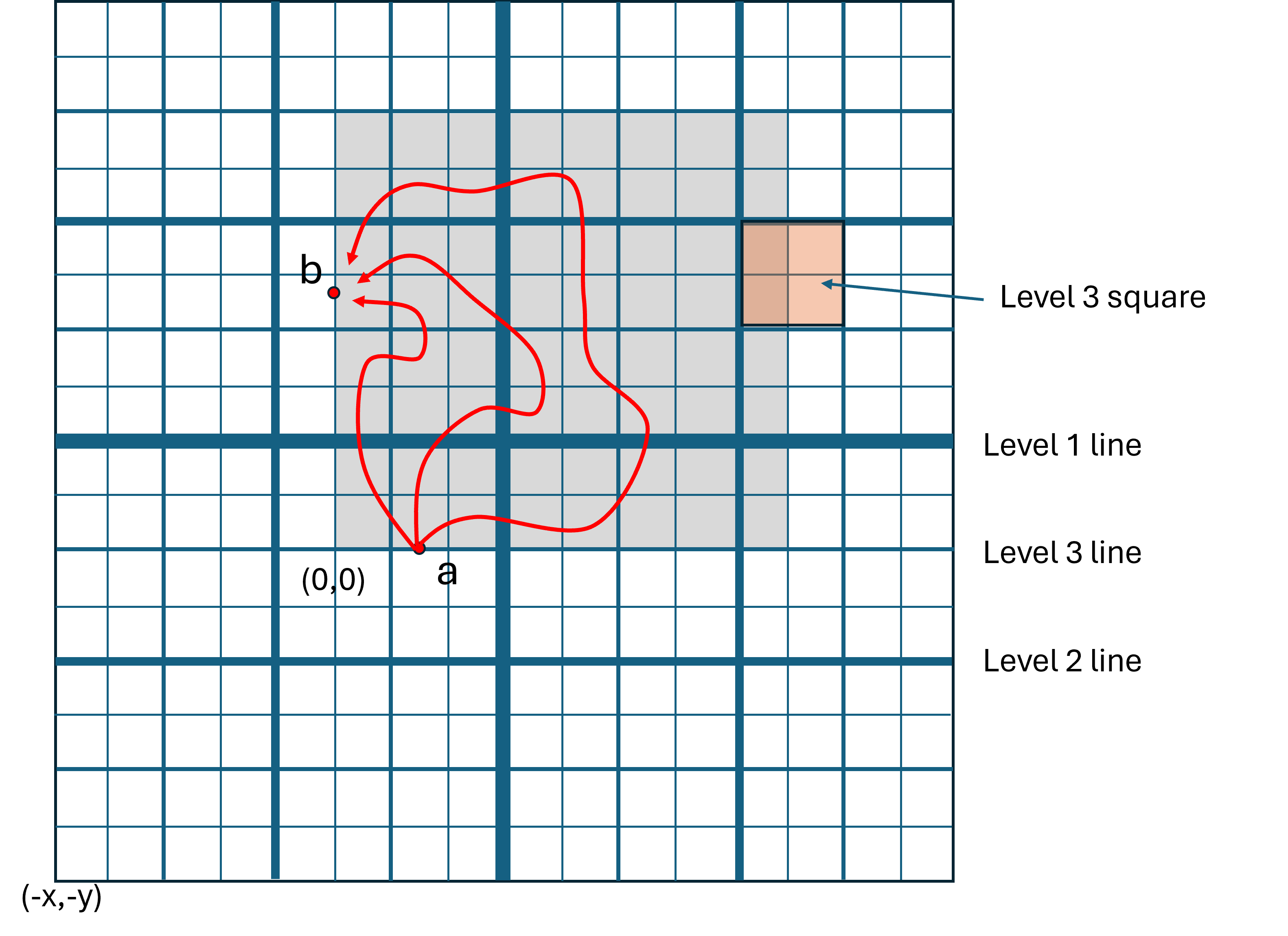}
			\caption{The square $B_0$ is placed at random position on top of  square $B_1$, containing the input points. }
			\label{fig:dissection-a}
		\end{figure}

		\paragraph*{Dissection and portals}
		We use exactly the same structure as in \cite{Arora98JACM}\footnote{We shall not restrict solutions to so called 'light' paths (which cross the side of a square only at $\eeps$ portals~\cite{Arora98JACM}) but we use the simpler version with $\tilde{O}(\log n)$ portal crossings. There is no advantage for using 'light' paths in our analysis: The union of a set of light paths is in general not light, since each path uses it own set of $\eeps$ portals.} for the dissection squares and portals. (See Figure~\ref{fig:dissection-a}.) The square $B_0$ is said to be of level 0 and the horizontal line and vertical line in the middle of $B_0$ are called level 1 lines. These lines split $B_0$ in four level $1$ squares. The  lines that fall in the middle of a level 1 square are of level 2 and together they split each level 2 square in four level 3 squares and so on. The grid cells are of level $\rho$. Also, the lowest level of a grid line is $\rho$. Note that a level $i$ square is bounded by two level $i$ lines and two lines of a higher level. (Where we consider the boundary of $B_0$ as level 0 lines.)       
		There are $O(\log n /\eps)$ portals per side of a dissection square exactly as in ~\cite{Arora98JACM}.
		To be precise, on a level $i$ line we place $1+2^{i+2}\rho/\eps$ equally separated points (the portals). 
		So, the inter portal distance of a level $i$ line is 
		\begin{equation}\label{eq:eta}
		\eta_i:=2^{\rho}/(2^{i+2}\rho/\eps)=\eps2^{\rho-i-2}/\rho.
		\end{equation}
		Portals in $B_0\setminus B_1$ are not really needed but we just keep them to make the description of the DP consistent. 
		Let $L_i$ be the side length of a level $i$ dissection square. Then 
		\begin{equation}\label{eq:Li}
		L_i=L_0/2^i=2^{\rho-i}.
		\end{equation}
		A level $i$ line is partitioned into $2^i$ line-segments, where each line-segment is the side of two adjacent level $i$ dissection square. So the number of portals per side of a level~$i$ square that is part of a level $i$ line is 
		\begin{equation}\label{eq:n_port}
			\nport:=L_i/\eta_i +1=4\rho/\eps+1 \ \text{, which is independent of $i$.}
		\end{equation}
		Hence, the number of portals on the boundary of any level $i$ square is $O(\rho/\eps)=\tilde{O}(\log n)$. 
		
		Portals on the intersection of grid lines are defined as in Figure~\ref{fig:cornerportal}. That means, a portal is only used for a crossing between neighboring squares, not for a diagonal crossing of the intersection point. So, we actually have 4 portals on each crossing.
		\begin{figure}[!h]
			\centering
			\includegraphics[width=0.4\linewidth]{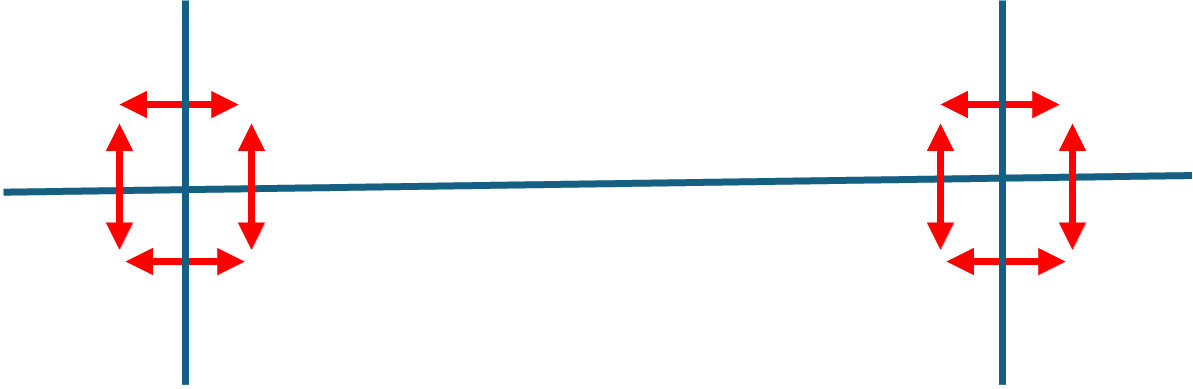}
			\caption{There are 4 portals on each intersection of two grid lines. The figure shows 2 intersections with each 4 portals.}
			\label{fig:cornerportal}
		\end{figure}

		\paragraph*{Portal respecting solutions}
		We will show that an optimal solution can be made portal  respecting at an expected increase in cost of at most a factor $1+\eps$. Again, this is basically the same analysis as~\cite{Arora98JACM}. The only difference is that we have $m$ paths in stead of one tour. The distance between the endpoints $a,b$ is much larger than the grid line distance though, so the effect of having paths in stead of tours is small. For completeness, we give the computation and our minor adjustments here.
		
		First we make the anchor points $a,b$ portal respecting. Anchor points do, in general, not coincide with a portals but we may assume they do by just moving both  anchor points to the nearest portal on the boundary of $B_1$. To see that this has no significant effect in cost, note that the inter \emph{anchor point} distance is $\eps L_1$ while the maximum inter \emph{portal} distance (i.e., of a level $1$ line) is  $\eps L_1/(4\rho)$, which is a factor $\log (n/\eps)$ smaller. 
		
		For the crossings of paths with grid lines we argue as follows. 
		The probability that a grid line $l$ is of level $i$ is  
		exactly $\frac{2^{i-2}}{2^{\rho-1}}$ for $i\ge 2$ and is $\frac{2^{0}}{2^{\rho-1}}$ for $i=1$. So, in general this probability is at most $\frac{2^{i-1}}{2^{\rho-1}}=\frac{2^{i}}{2^{\rho}}$. Hence, for any grid line $l$, the expected distance between the portals is bounded as follows. 
		\begin{eqnarray*}
			\text{Expected(Inter portal distance)} &=&\sum_{i=1}^{\rho} \text{Probability($l$ is of level $i$)}\cdot \eta_i\\
			&\le &\sum_{i=1}^{\rho}  \frac{2^{i}}{2^{\rho}}\eta_i=\sum_{i=1}^{\rho}  \frac{2^{i}}{2^{\rho}}\frac{\eps 2^{\rho-i-2}}{\rho} =\eps/4.
		\end{eqnarray*}
		As a result, the expected detour for moving the crossing of a path with a grid line to the nearest portal on that line is also at most $\eps/4$. 
		
		As in~\cite{Arora98JACM}, the number of crossings with grid lines is in the order of the optimal cost. 
		To see this, consider any path from $a$ to $b$ in an optimal solution.
		The path is a set of straight connections of length at least 1, except possibly for the two parts from $a$ to the first point on the path and from the last point on the path to $b$. Note however, that the distance between $a$ and $b$ is at least $\eps L_1/2=\epsilon\Omega(n)$. (See~\eqref{eq:rho}.) Therefore, we can safely say that 
		any path of length $z$ in an optimal solution has at most $4z$ crossings with grid lines. (The factor 4 is a rough upper bound.) So the number of crossings with grid lines in any optimal solution is no more than $4\OPT_{\sigma}$.
		
		From now on we denote by $\OPT$, the optimal cost among the portal respecting solutions for the $m$-paths problem with anchor points $a$ and $b$. The expected cost  over the random placement of the box $B_0$ is
		\begin{equation}\label{eq:exp_opt1}
			\text{Expected}(\OPT)\le \OPT_{\sigma}+(\eps/4)4\OPT_{\sigma}=(1+\eps)\OPT_{\sigma}.
		\end{equation}
		We will show that an optimal portal respecting solution for the $m$-paths problem can be found in quasi polynomial time. 
		
		\paragraph*{Shifting the anchor points}	
		To keep the description of the dynamic program consistent, we move the start- and  endpoint of the paths to the boundary of $B_0$. We fix a path from anchor point $a$ to some point $a'$ on the boundary of $B_0$ and fix a path from $b$ to some point $b'$ on the boundary of $B_0$. (See Figure~\ref{fig:dissection}.) We refer to these paths as \emph{anchor paths}. Each path is copied $m$ times. We may directly assume that these paths cross grid lines at portals since the cost of these paths will be fixed in the DP. Their only purpose is to move $a$ and $b$ to the boundary, which simplifies the description of the DP.\footnote{In stead of using these fixed anchor paths, one might also consider another approach, where  flow values of dissection squares in the  DP are based on containment of $a$ and $b$. For example, a dissection square containing $a$ but not $b$ should have an outflow of $m$.}     Let $Z^{0}$ be the total cost of this fixed part of the solution, i.e, $Z_0$ equals $m$ times the length of these two paths. 
		
			\begin{figure}[!h]
			\centering
			\includegraphics[width=0.45\linewidth]{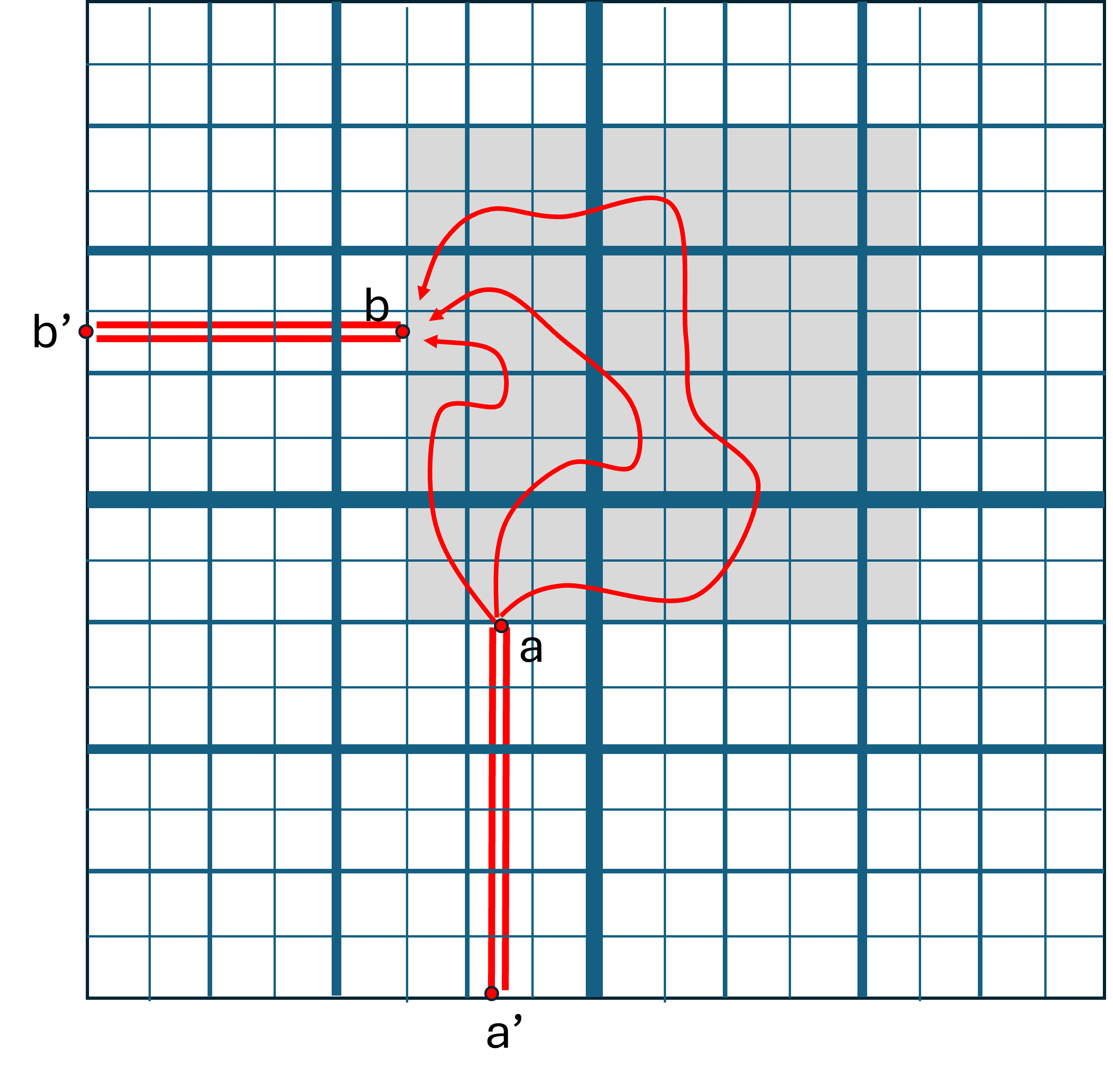}
			\caption{Source $a$ and sink $b$ are connected to the boundary by a path that we take as fixed part of the solution in the dynamic program. The total cost of these fixed paths is defined as $Z^0$.}
			\label{fig:dissection}
		\end{figure}

		\subsubsection{Configurations in the dynamic program}\label{sec:configs}
		
		We start with a simple bound on the number of crossings per portal. 		
		
		\begin{lemma}\label{lem:max_cross}
			In an optimal portal respecting solution, each portal is crossed at most $m+n$ times (by all $m$ paths in total).
		\end{lemma}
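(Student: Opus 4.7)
The plan is a shortcutting argument. Fix a portal $p$ and pick an optimal portal-respecting solution that, among all optima, also minimizes the total number of portal crossings; this tie-breaking lets me derive a contradiction whenever I find a modification that removes crossings without increasing length. Let $c_i$ be the number of times path $i$ crosses $p$, set $K := \sum_{i=1}^m c_i$, and let $m' \le m$ denote the number of paths with $c_i \ge 1$. Standard local shortcutting also lets me assume that each input point is visited exactly once in the chosen optimum, since a duplicate visit can be removed at no cost increase while strictly reducing the number of crossings.

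For each path $i$ with $c_i \ge 1$, number its $c_i$ crossings of $p$ in the order they occur along the directed path from $a$ to $b$. Call the sub-path of path $i$ strictly between the $j$-th and $(j+1)$-th crossing of $p$ (for $1 \le j \le c_i - 1$) an \emph{internal excursion} at $p$; path $i$ then contributes $c_i - 1$ such excursions, giving $K - m'$ internal excursions in total. The key claim is that every internal excursion at $p$ contains at least one input point. Suppose not, and consider an empty internal excursion on path $i$: it starts at $p$, enters one side of the grid line through $p$, and returns to $p$ at the next crossing. Replace this sub-path together with its two bracketing crossings by the zero-length stay at $p$. The result is still a valid collection of $m$ paths from $a$ to $b$, still portal-respecting, still covers every input point (since the removed excursion contained none), has total length no larger than before, and has strictly fewer portal crossings, contradicting the choice of optimum.

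Since each input point is visited at most once across the whole solution, distinct internal excursions must contain distinct input points, so the total number of internal excursions is at most $n$. Therefore $K - m' \le n$, and hence $K \le m' + n \le m + n$, as required. The main subtlety is handling degenerate internal excursions of zero geometric length, where simply removing the sub-path does not strictly reduce cost; this is precisely why I break ties among optima using the total number of portal crossings, so that the shortcut argument still yields a contradiction even in the degenerate case.
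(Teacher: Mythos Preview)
Your proof is correct and follows essentially the same shortcutting idea as the paper: between two consecutive crossings of a fixed portal by the same path there must be at least one input point, otherwise that piece can be deleted; summing over all paths gives at most $m+n$ crossings. The paper's own proof is a two-line version of exactly this argument, whereas you are more careful about the details the paper leaves implicit---the tie-breaking among optima by number of crossings, and the fact that distinct internal excursions must contain distinct input points because each point is visited only once.
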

		\begin{proof}
			If a path crosses a portal twice then at least one input point is visited between these crossings since otherwise we can remove that part from the solution. There are $m$ paths and so there are at most $m+n$ crossings per portal.
		\end{proof}
		
		N.B. It would be easy to show (as in~\cite{Arora98JACM}) that any path crosses each portal at most twice. However, we will not use that property in the analysis and only use the easy upper bound on the total number of crossings per portal as given in the lemma above.
		
		Although paths are not directed by definition,  we shall assume that they are. Then, the $m$ directed paths can be seen as a  flow of value $m$ from $a'$ to $b'$ through a graph with the portals as its vertices and every input point is on one of these flow paths. 

		\begin{definition}
			Given a portal respecting, directed solution $\Gamma$, and a dissection square $S$, we define the \flowgraph\ for $S$ and $\Gamma$ as the directed graph with the portals of $S$ as vertices and an arc between any two portal-vertices $p$ and $p'$ if there is at least one path that enters $S$ in $p$ and leaves $S$ in $p'$. 
		\end{definition}
			\begin{figure}
			\centering
			\includegraphics[width=0.7\linewidth]{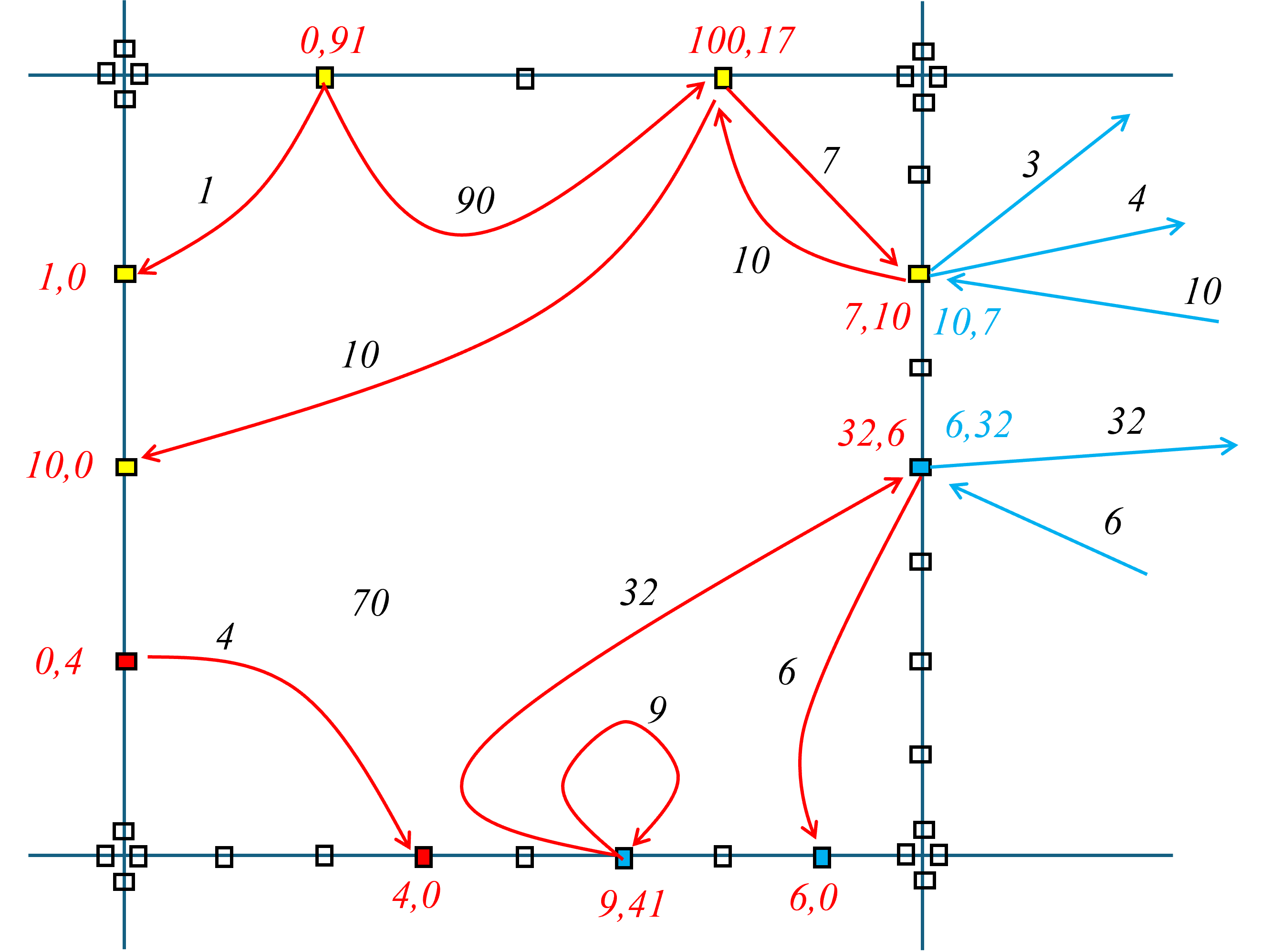}
			\caption{Example of a flow graph (the red arcs) with flow values on the arcs and the total in- and out flow on the portals. The in- and out flow on both sides of a portal should be opposite since the flow value of an arc is the value that goes through the portal.  }
			\label{fig:flowgraph}
		\end{figure}
		Note that we define the flow graph as unweighted, i.e., there are no flow values on the arcs, and that it could have parallel arcs in opposite directions. In the DP, we shall enumerate over all possible flow values on the arcs. (See Figure~\ref{fig:flowgraph}.) 
		
		Any flow graph has the nice property that it is outer planar. This holds since any feasible solution (portal respecting or not) can be modified without changing any part of the solution such that paths do not cross each other.  (See Figure~\ref{fig:uncross}.) Of course, for the uncrossing it is crucial that there are no capacity constraints on the individual paths and that paths have the same start point $a$ and end point $b$. Any simple, undirected outer planar graph on $K$ vertices has at most $2K-3$ edges. Our flow graph can have parallel arcs in opposite directions. So, the number of arcs is at most $4K-6$.   
	    In our setting $K$ is the number of portals in a dissection square: $K=\tilde{O}(\log n)$. 
			\begin{figure}[]
			\centering
			\includegraphics[width=0.6\linewidth]{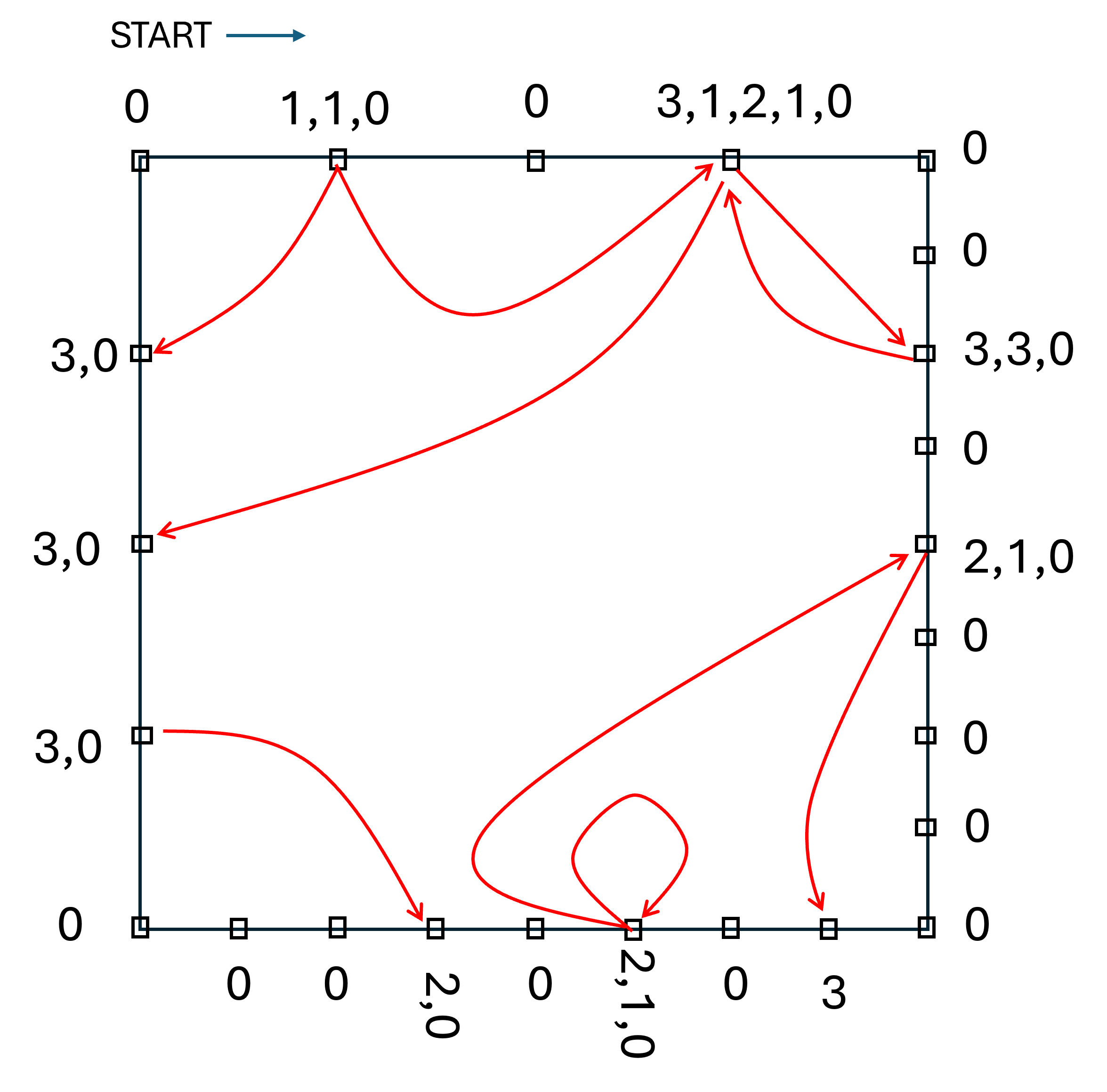}
			\caption{The flow graph is encoded by a sequence of numbers from $\{0,1,2,3\}$ with the following meaning: (0) move to the next portal, (1) a new outgoing arc, (2) a new incoming arc, (3) an arc that was encountered before. This graph corresponds with  
				the sequence:	$0110031210003300210000030210020000303030$. }
			\label{fig:flowgraphcount}
		\end{figure} 
			\begin{figure}
			\centering
			\includegraphics[width=0.65\linewidth]{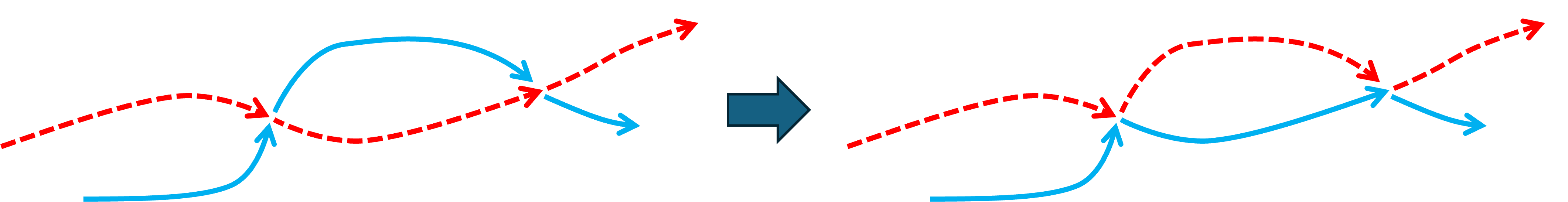}
			\caption{Since there is no capacity constraint and endpoint are the same, crossing paths can always be uncrossed at zero cost.}
			\label{fig:uncross}
		\end{figure} 
		\begin{lemma}\label{lem:nflowgraphs}
			For each dissection square, the number of flow graphs is $n^{\eeps}$.  
		\end{lemma}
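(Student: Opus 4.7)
The plan is to bound the number of flow graphs by counting encodings of outer planar directed multigraphs on $K = \tilde{O}(\log n)$ portal vertices, where parallel arcs in opposite directions are allowed. The uncrossing argument (Figure~\ref{fig:uncross}) already tells us that any flow graph is outer planar with respect to the cyclic order of portals along the boundary of the dissection square $S$, and hence it suffices to count such outer planar directed multigraphs.

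First I would recall that a simple undirected outer planar graph on $K$ vertices has at most $2K-3$ edges, so allowing at most one arc in each direction between every pair of portals gives a directed multigraph with at most $4K-6$ arcs. Then I would describe precisely the encoding sketched in Figure~\ref{fig:flowgraphcount}: traverse the portals of $S$ in their cyclic order on the boundary; at each portal, list the arcs incident to it, using $1$ for a newly discovered outgoing arc (whose other endpoint we commit to as we continue the traversal), $2$ for a newly discovered incoming arc, $3$ for an arc already seen from its other endpoint, and $0$ to advance to the next portal in the cyclic order. Outer planarity ensures that the order in which arcs attach at a portal can be chosen consistently along the boundary so that each arc is visited exactly twice (once as $1$ or $2$, once as $3$); this is the standard encoding of outer planar multigraphs by balanced parenthesis-like sequences.

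The length of the resulting sequence is therefore at most $K$ symbols of type $0$ plus $2(4K-6)$ symbols of types $1,2,3$, i.e.\ $O(K) = \tilde{O}(\log n)$. Since the alphabet has size $4$, the total number of sequences, and therefore the total number of flow graphs per dissection square, is at most
\[
4^{O(K)} = 4^{\tilde{O}(\log n)} = n^{\tilde{O}(1)},
\]
which is the claimed bound. I would also note briefly that distinct flow graphs give distinct encodings (the encoding is a canonical linear traversal of the outer planar structure), so the count is an upper bound on the number of flow graphs, not merely on the number of encodings.

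The main obstacle, and the only genuinely delicate point, is justifying that outer planarity really lets us list arcs at each portal in a canonical order so that the $1/2$ versus $3$ bookkeeping is well defined and each arc is matched exactly once; everything else is just plugging in $K = \tilde{O}(\log n)$ from~\eqref{eq:n_port} and the $4K - 6$ arc bound. Once this encoding is in place, the counting is immediate from the size of the alphabet raised to the length of the sequence.
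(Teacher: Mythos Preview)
Your proposal is correct and follows essentially the same argument as the paper: encode each outer planar flow graph on the $K=\tilde{O}(\log n)$ portals by a length-$O(K)$ word over the four-letter alphabet $\{0,1,2,3\}$ obtained by walking around the boundary, and conclude that there are at most $4^{O(K)}=n^{\tilde{O}(1)}$ flow graphs. The paper's proof is slightly terser (and in fact writes the length bound as $K+4K-6\le 5K$, using $2K-3$ arcs rather than $4K-6$), but the idea and the encoding are identical to yours.
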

		\begin{proof}
		Consider any dissection square and let $K=\tilde{O}(\log n)$ be the number of portals. Fix any portal to start a walk around the boundary. We create a sequence of length $O(K)$ where every entry takes one of 4 values with the following meaning: (1) a new outgoing arc, (2) a new incoming arc, (3) an arc that was encountered before, (0) move to the next portal. Any flow graph defines a unique sequence, and given a valid sequence, we can reconstruct the flow graph from it. 
		The length of any sequence is at most $K$ plus twice the number of arcs, which is at most $K+4K-6\le 5K$. Hence, the number of sequence is $4^{O(K)}=n^{\eeps}$. 
			An  example is given in Figure~\ref{fig:flowgraphcount}.    (Parallel arcs with opposite directions can be swapped, giving two encodings for the same graph. To make it unique we could for example always orient the two arcs clockwise, as in the figure.)     
		\end{proof}
		Note that for any feasible solution $\Gamma$ and dissection square $S$ we have \emph{flow conservation} on the boundary of $S$, i.e., the number of paths entering $S$ equals the number of paths leaving $S$. For $S=B_0$, the inflow and outflow are both exactly $m$. Similarly to~\cite{Arora98JACM}, we store the necessary boundary information in the dynamic program. For TSP, this is given by the number of crossings per portal and how these crossings connect to each other (the pairing). In our DP, we shall not track individual paths in the DP though. There is no need to and the running time would be too large. For each dissection square $S$ we store a set of configurations with the information as defined below.    \\
		 
		\begin{definition}\label{def:config}
			A \emph{configuration} for a dissection square $S$ is given by the lists:
			\begin{enumerate}
				\item[] $\flow(S)$, listing the in- and outflow for $S$ for each of the portals of $S$.  
				We say that $S$ satisfies \emph{flow conservation} if the total flow into $S$ equals the total flow out of $S$ and define this as the \emph{flow value} of $S$. 
				\item[] $\comp(S)$, listing the components inside $S$, i.e., for each component the set of portals of $S$ in it. Arc directions are ignored here. 
			\end{enumerate}
		\end{definition}
		
		\begin{lemma}\label{lem:nconfig}
			For each dissection square, the number of configuration is $n^{\tilde{O}(\log n)}$. 
		\end{lemma}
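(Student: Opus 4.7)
The configuration, by Definition~\ref{def:config}, consists of two pieces: $\flow(S)$ (the in- and outflow values at each portal of $S$) and $\comp(S)$ (a partition of the portals of $S$ into connected components inside $S$). The plan is to bound the number of possibilities for each piece separately and then multiply.

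For $\flow(S)$, I would argue as follows. By Lemma~\ref{lem:max_cross}, each portal is crossed at most $m+n \le 2n$ times in any portal respecting solution. Since every crossing of a path with a portal $p$ on the boundary of $S$ contributes exactly one unit to either the in-flow or the out-flow of $S$ at $p$, both values lie in $\{0, 1, \ldots, 2n\}$. From the portal layout (equation~\eqref{eq:n_port}), each side of a dissection square carries $\nport = O(\rho/\eps) = \tilde{O}(\log n)$ portals, so there are $\tilde{O}(\log n)$ portals in total on the boundary of $S$. Hence the number of possible $\flow(S)$ lists is at most
\[
\big((2n+1)^2\big)^{\tilde{O}(\log n)} \;=\; n^{\tilde{O}(\log n)}.
\]

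For $\comp(S)$, I would use the crude bound that the number of set-partitions of a $K$-element set (the Bell number $B_K$) satisfies $B_K \le K^K$. With $K = \tilde{O}(\log n)$ portals this gives
\[
B_K \;\le\; \big(\tilde{O}(\log n)\big)^{\tilde{O}(\log n)} \;=\; 2^{\tilde{O}(\log n \cdot \log\log n)} \;=\; n^{\tilde{O}(\log \log n)},
\]
which is negligible compared to the flow bound. (One could sharpen this using the outer-planarity noted earlier, replacing arbitrary partitions by non-crossing ones and shrinking the bound to $n^{\tilde{O}(1)}$, but this is not needed.)

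Multiplying the two bounds yields $n^{\tilde{O}(\log n)} \cdot n^{\tilde{O}(\log\log n)} = n^{\tilde{O}(\log n)}$, as claimed. There is essentially no obstacle here: the argument is a direct application of Lemma~\ref{lem:max_cross} combined with routine counting. The only thing to state cleanly is the identification of the flow at a boundary portal with the number of path-crossings of that portal, which is immediate from the definition of the flow graph together with the bound in Lemma~\ref{lem:max_cross}.
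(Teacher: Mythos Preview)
Your argument is correct and reaches the same bound, but it proceeds differently from the paper. The paper does not count $\flow(S)$ and $\comp(S)$ independently; instead it routes everything through the flow graphs of Lemma~\ref{lem:nflowgraphs}. Concretely, the paper argues: there are $n^{\tilde{O}(1)}$ outer-planar flow graphs on the $\tilde{O}(\log n)$ portals; each such graph has $\tilde{O}(\log n)$ arcs; each arc receives a flow value in $\{0,\dots,m+n\}$ by Lemma~\ref{lem:max_cross}; multiplying gives $n^{\tilde{O}(\log n)}$ labelled flow graphs, and both the portal flows and the component partition are then \emph{determined} by the labelled flow graph, so no separate count for $\comp(S)$ is needed.

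Your route is more elementary: you bound the portal-flow vectors directly (which does not require outer-planarity or Lemma~\ref{lem:nflowgraphs} at all) and then absorb the partition count via the crude Bell-number estimate. This is perfectly fine for the statement of the lemma, and arguably cleaner as a stand-alone proof. The paper's route, on the other hand, mirrors exactly how the configurations are actually \emph{enumerated} in the DP (flow graph first, then arc labels), so it doubles as a description of the algorithm and explains why the component information comes ``for free'' rather than being guessed separately.
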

		\begin{proof}
			By Lemma~\ref{lem:nflowgraphs}, the number of \flowgraphs\ is $n^{\eeps}$.
			The number of arcs in a flow graph is $\tilde{O}(\log n)$ and by Lemma~\ref{lem:max_cross}, the in- and outflow of every portal is at most $m+n$. Hence, the number of possible lists for $\flow(S)$ is bounded by $n^{\tilde{O}(\log n)}$. For each flow graph, the component information is unique. Hence, the total   number of configuration is $n^{\tilde{O}(\log n)}$. 
		\end{proof}
		Before doing the actual DP, we first enumerate many configurations which are then checked for their validity in a bottom up way in the DP.  For the grid cells (level $\rho$) we store a value already in the process of the enumeration of the configurations. For the dissection squares of level $i<\rho$ we compute a value for the configurations in the dynamic program.

		For each grid cell in $B_0\setminus B_1$, we store only one configuration, namely the one that corresponds to the two  paths that connect the  anchor points by $m$ parallel paths to the boundary of $B_0$. It is good to realize that these paths can be taken arbitrarily since the cost for these is fixed. The only purpose is to have flow conservation on any dissection square by moving the source $a$ and sink $b$ to the boundary.
				
		 For the grid cells $S$ in $B_1$ we do the following. We enumerate all possible flow graphs and all possible flow values on the arcs of that graph. This gives an in- and outflow for each portal (and we automatically have flow conservation on $S$). We store the in- and outflows and the information of the different components  (as in Definition~\ref{def:config}).   
		 If the cell contains no point, then all connections are straight lines. If there is a point, then an optimal solution is easy to find. The cheapest option is to make all connections straight and then bend one path to serve the point. We declare all computed configurations on grid cells as \emph{valid} and store their corresponding solutions.
		 	
		 For all level  
		 $i$ squares with $1\le i<\rho$, we enumerate all possible flow graphs and all possible flow values on the arcs of that graph. This gives an in- and outflow for each portal (and we automatically have flow conservation on $S$). We store the in- and outflows and the information of the different components  (as in Definition~\ref{def:config}). Note that we do not store the flow graph itself in the DP table. In other words, the flow graphs are only used to enumerate candidate configurations, which we later check for validity.  
		
		For the level 0 square $B_0$, we only enumerate configurations where the in- and outflow is exactly $m$ and the flow graph is connected, i.e., there is only one component.

		\subsubsection{The Dynamic Program (DP)}
		The graph we need to construct in the DP is directed and the inflow equals the outflow on each portal, except for the two anchor points $a',b'$. In fact, we have something stronger: The flow into a portal $p$ on one side of the grid line is the same as flow out of $p$ on the other side of the grid line, and vice versa. 
		In the DP, we check for this flow conservation in a bottom up way and also check consistency of the component information and check that no isolated components are formed, i.e., components in a dissection square $S$ that have no connection to the boundary of $S$. If a configuration fulfils these checks then we declare it as valid and store the cost: $\cost(S,\flow ,\comp)$. 		
			
			\begin{definition}\label{def:validity}
				For a dissection square $S$ and lists $\flow$ and $\comp$, we say that 
				$(\flow,\comp)$ is a \emph{valid configuration} for $S$ if there are valid configurations $(\flow(S^{(i)}),\comp(S^{(i)}))$ for its 4 children  $S^{(i)}$ ($i \in \{1,2,3,4\}$) in the dissection tree such that:
				\begin{enumerate}
					\item The component information is consistent and no isolated components are formed.
					\item The flow information is consistent.  For a portal $p$  on the boundary of $S$, the flow values should be the same for $S$ and the child containing $p$. If $p$ is a portal on the boundary of two adjacent children of $S$ then the flows should be opposite for the two sides of the portal, i.e., the flow  continues through the portals. Note that flow conservation for the  4 children then implies flow conservation for $S$.   
				\end{enumerate}
				In case these properties hold, we say that the configuration of $S$ is \emph{consistent} with the configurations of its 4 children.  
			\end{definition}
			
			The validity check does not explicitly require flow conservation on $S$, but, since all valid configurations on the grid cells satisfy flow conservation, the flow conservation on dissection squares is inherited from the configurations on the grid cells.

			\begin{algorithm}
				\caption{The Dynamic Program:}\label{alg:DP1}
				\begin{algorithmic}
					\State \textbf{Input:}
					\begin{itemize}
						\item Valid configurations for all grid cells together with their cost.
						\item Candidate configurations for all other dissection squares.
					\end{itemize} 
					\For {$j=\rho-1$ to 0}:
					\For {all enumerated configurations $(S,\flow,\comp)$ of level $j$}:
					\State Check validity according to Definition~\ref{def:validity}. Remove if invalid.
					\State {The cost of $(S,\flow,\comp)$ is defined as the smallest total cost over 4 valid
					\State	 and consistent configurations of the children of $S$. Keep a \emph{pointer} from $S$ to
					\State those 4 configurations.}
					\EndFor
					\EndFor\\
					\Return the cheapest valid configuration of level $0$ and all the corresponding configurations of the grid cells (which can be derived from the pointers).   
					Denote this solution by $\Gamma_{\DP}$.
				\end{algorithmic}
			\end{algorithm}

			From the solution returned by the DP we remove the anchor  paths to get the solution for the $m$-paths instance. Remember that we defined $\OPT$ as the optimal cost among the portal respecting solutions for the $m$-paths problem with anchor points $a$ and $b$.
			\begin{lemma}
				The DP returns a solution consisting of $m$ paths from $a'$ to $b'$ of cost no more than $\OPT +Z^{0}$, where $Z^{0}$ is the cost of the anchor paths. Hence, deleting the anchor paths from the solution gives an optimal portal respecting solution for the $m$-paths problem. 
			\end{lemma}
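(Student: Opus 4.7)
The plan is to prove two directions: (i) the DP cost is at most $\OPT + Z^{0}$ (completeness), and (ii) the object returned can be realized as exactly $m$ directed paths from $a'$ to $b'$ that cover every input point (feasibility). Together these imply the lemma after deleting the anchor paths.

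For (i), I would take an optimal portal respecting solution $\Gamma^{*}$ for the $m$-paths problem, orient it, and append the fixed anchor paths of total cost $Z^{0}$ to obtain a directed object $\tilde\Gamma$ carrying flow $m$ from $a'$ to $b'$. For each dissection square $S$, read off a candidate configuration $(\flow(S),\comp(S))$ from $\tilde\Gamma$ as in Definition~\ref{def:config}. Flow conservation holds at every portal of $\tilde\Gamma$, and the component list in $S$ arises from merging the component lists of its four children across shared portals; no isolated subcomponents appear inside any $S$ because in $\tilde\Gamma$ every component ultimately reaches $a'$ or $b'$ and therefore exits the boundary of $S$. Hence the induced family of configurations is valid at every level in the sense of Definition~\ref{def:validity}, so the DP considers it as a candidate. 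Since the cost stored at each grid cell is, by construction, the minimum cost of a local drawing realizing that cell's configuration, summing up along the dissection tree shows that the DP cost of this candidate at level $0$ is at most the cost of $\tilde\Gamma$, which equals $\OPT + Z^{0}$.

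For (ii), I would argue bottom-up. Each grid cell configuration stored by the DP is realized by an explicit local drawing (straight connections between portals, plus at most one short detour through the input point, if the cell contains one) whose total length matches the stored value. The validity rule from Definition~\ref{def:validity} guarantees that on every portal shared by two adjacent children, the arc counts and directions agree, so gluing the chosen grid cell drawings together produces a consistent directed multigraph $H$ inside $B_{0}$ in which flow conservation holds at every portal except $a'$ and $b'$. The ``no isolated components'' clause, propagated upward, forces every weakly connected component of $H$ to reach the boundary of $B_{0}$; combined with the requirement enforced at level $0$ that the configuration have a single component with net flow $m$ from $a'$ to $b'$, a standard directed Eulerian decomposition writes $H$ as the disjoint union of exactly $m$ directed $a'$-$b'$ walks. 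Every input point lies on one of these walks because the grid cell containing it was realized with a detour through it. Removing the anchor paths yields $m$ paths from $a$ to $b$ covering all input points, at cost exactly $Z^{0}$ less than the DP cost, hence at most $\OPT$; by definition of $\OPT$, this is optimal.

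The main obstacle is the bookkeeping around components and Eulerian decomposition. Specifically, we must be sure that enforcing validity child-by-child (matching portal flows, consistent component lists, no isolated subcomponents) is genuinely strong enough that, once glued all the way up to $B_{0}$, the resulting multigraph $H$ decomposes into exactly $m$ walks from $a'$ to $b'$ without leftover closed sub-walks of arcs, and that parallel arcs of opposite directions on a single portal do not obstruct this decomposition. Uncrossing (Figure~\ref{fig:uncross}) handles the planarity side, and flow conservation plus connectivity deliver the Eulerian decomposition. Once this point is settled, the remainder of the argument is the familiar Arora-style consistency check adapted to our flow-graph representation and is essentially routine.
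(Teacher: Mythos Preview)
Your proposal is correct and follows essentially the same two-part structure as the paper's proof: completeness by reading off valid configurations from an optimal portal-respecting solution augmented with the anchor paths, and feasibility by observing that the glued grid-cell solutions form a connected flow of value $m$ from $a'$ to $b'$ visiting every input point. The paper's argument is terser and simply asserts the flow-to-paths decomposition that you spell out via an Eulerian argument, but the underlying approach is the same.
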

			\begin{proof}
				A solution returned by the DP, consists of one solution for each grid cell, which is  a set of directed paths together with a flow value on each path. The grid cell solutions together form a flow from $a'$ to $b'$ of value $m$ with the property that the flow network is connected and every input point is visited. 
				
				On the other hand, let $\Gamma^*$ be an optimal portal respecting solution, including the anchor paths. The cost of $\Gamma^*$ is, by definition, $\OPT +Z^{0}$. 
				Solution $\Gamma^*$ defines a natural configuration for each dissection square $S$. 
				All these configurations were generated and verified as valid configurations. So the optimal cost for the DP is at most $\OPT +Z^{0}$. 
				\end{proof}

			\paragraph{Running time}
			In our algorithm, we first list many configurations and then check all of them for their validity in a DP in a bottom up way. 
			The depth of the dissection tree is $\tilde{O}(\log n)$ and the size (number of dissection squares) is $n^{\tilde{O}(1)}$. By Lemma~\ref{lem:nconfig}, the number of configurations for each dissection square is $n^{\tilde{O}(\log n)}$ and these can be enumerated in the same amount of time. Checking validity of any  configuration takes no more than $n^{\tilde{O}(\log n)}$. Hence, the total running time is $n^{\tilde{O}(\log n)}$.

			\subsection{The general $m$-paths problem.}\label{sec:gen_mpaths1}
			
			We defined the $m$-paths problem as the special case of the general $m$-paths problem where there is only one type crossing the square and it does so only once. In general, the number of types $t$ is $\eeps$ and each type $t$ crosses the square $\npairs^{(t)}=\eeps$ times. 
			We can afford to guess exactly the number of input points visited for each $t$, for each anchor point pair, since this give $n^{\eeps}$ options. When we do this, then for each of these guesses, we get a problem of the following form. 	Our notation here differs a bit from what we used before, just to keep it simple.  			

			\begin{itemize}
				\item We are given a square (as in the general $m$-paths problem) and a list of $\eeps$ anchor point pairs. Also, for each entry $i$ of an anchor point pair in the list, a number of paths $m_i$, and number of input points $n_i$ is given.
				(An anchor point pair may appear multiple times in the list.) 
				\item Goal is to find a mincost solution for the combined problem where we need to find  exactly $m_i$ paths for anchor point pair $i$ and these paths visit exactly $n_i$ input points and each input point is visited by exactly one path. 
			\end{itemize}
			So the general $m$-paths problem can be reduced to combinations of $\eeps$ $m$-paths problems. For any such a combination, the $\eeps$ subproblems can be solved almost independently in parallel. The only connection between these subproblems is that each gets the correct number of input points, but this is done in the base case of the DP: the grid cells. Let's go over it in more detail.  
			
						The random grid is defined exactly the same. Note that when input points are moved to the middle of grid cells, we can no longer merge these to a single input point. So, in the base problems in the grid cells, we need to specify how many points from the grid cell go to each of the anchor pairs. Since there are ${\eeps}$ anchor point pairs, there are $n^{\eeps}$ choices.   
			
			A configuration for a dissection square $S$ in the DP is now a combined configuration for each of the $\eeps$ anchor pair problems. In addition, we store the number of points visited in $S$ by each of them. Paths of different anchor points pairs may cross but that is absolutely fine. We say that a configuration on dissection square $S$ is valid if each of the separate configurations is valid. The running time of the DP is 
			\[\left(n^{\tilde{O}(\log n)}\right)^{\eeps}=n^{\tilde{O}(\log n)}.\]

			\section{An $n^{\tilde{O}(\log\log n)}$ time approximation scheme.}\label{sec:loglog}
			The approach here is exactly the same as in the previous section but now we use rounded numbers for the flow values. This, however, is easier said than done.  By rounding flow values  we lose flow conservation, which then needs to be corrected at every level of the DP. We want to maintain flow conservation on each dissection square, i.e., the total flow in equals the total flow out of the square, but  we also want to maintain flow conservation in each portal. We shortly discuss these two properties here. 
			
			To maintain flow conservation for each dissection square in the DP, we will only enumerate configurations with this property. For this, we use the same procedure as was done in Section~\ref{sec:logn}. That means, we first generate a flow graph, and use rounded flow values on the arcs. But when we add up these numbers at the portals, we do not round them. See the  
			example of Figure~\ref{fig:rounding}.   
			
			Maintaining flow conservation at portals is more involved. Two types of errors may occur: There can be a difference in flow value at a portal on either side of the grid line (,i.e., among two adjacent squares of the same level), or a difference in flow value at a portal among squares at the same side of the grid line but at different levels. Correcting the imbalance is easily done by adding paths to the solution and we charged the cost to the flow values. For this, we need a lower bound on the cost of any portal respecting solution (that comes out of the DP) in terms of the number of portal crossings. But such a bound does not exist directly. 
 
			We cannot use the (standard) randomization argument as in Section~\ref{sec:logn}, when we analyzed the cost of making an optimal tour portal respecting, since our modifications here are done to solutions obtained from the DP, i.e., \emph{after} the randomization step. To guarantee the desired lower bound, one could add a cost for each crossing artificially in the DP and run the DP with this extended cost. Here, we use another approach where we simply alternate the directions of portals (as we did before for anchor points) such that a path cannot enter and leave a dissection square at the same portal and, consequently, must stay in the square for at least the inter portal distance. Intuitively, this simple idea gives a useful lower bound and only adds a factor 2 to the cost of making a path portal respecting. Note however, that this is problematic for portals at the intersection of grid lines, since we may need both directions for the detours. (Also note that the inter portal distance might be much larger than 1, the grid distance.) 
			A clever hierarchical approach for alternating the portal directions does give the desired property though. We explain the details in Section~\ref{sec:RestrictedPortalLayout}.  
			With this layout, the expected detour for moving a crossing to a portal is small (Lemma~\ref{lem:layoutdetour}), and additionally it gives a lower bound on the cost of a solution inside a dissection square (Lemma~\ref{lem:stayinsquare}). 
			Arguably, this approach is more complicated than adding a cost for each crossing artificially in the DP, as mentioned above, but we believe that the special portal layout with the two lemmas are of general interest for TSP-like problems in the Euclidean plane.

			We shall go over the details for the  $m$-paths problem in Section~\ref{sec:mpathsloglog} and discuss the general form in Section~\ref{sec:mpaths2}.

\subsection{An $n^{\tilde{O}(\log\log n)}$ time algorithm for the $m$-paths problem}\label{sec:mpathsloglog}
			Before describing the restricted portal layout in this section, lets first look at how to choose the  grid. This is done in exactly the same way as in the previous section. Further, we fix two portal respecting paths from anchor points $a$ and $b$ to points $a'$ and $b'$ on the boundary of $B_0$ and define $Z^{0}$ as the total cost of these two paths (weighted by a factor $m$).  
			
			\begin{figure}
				\centering
				\includegraphics[width=0.8\linewidth]{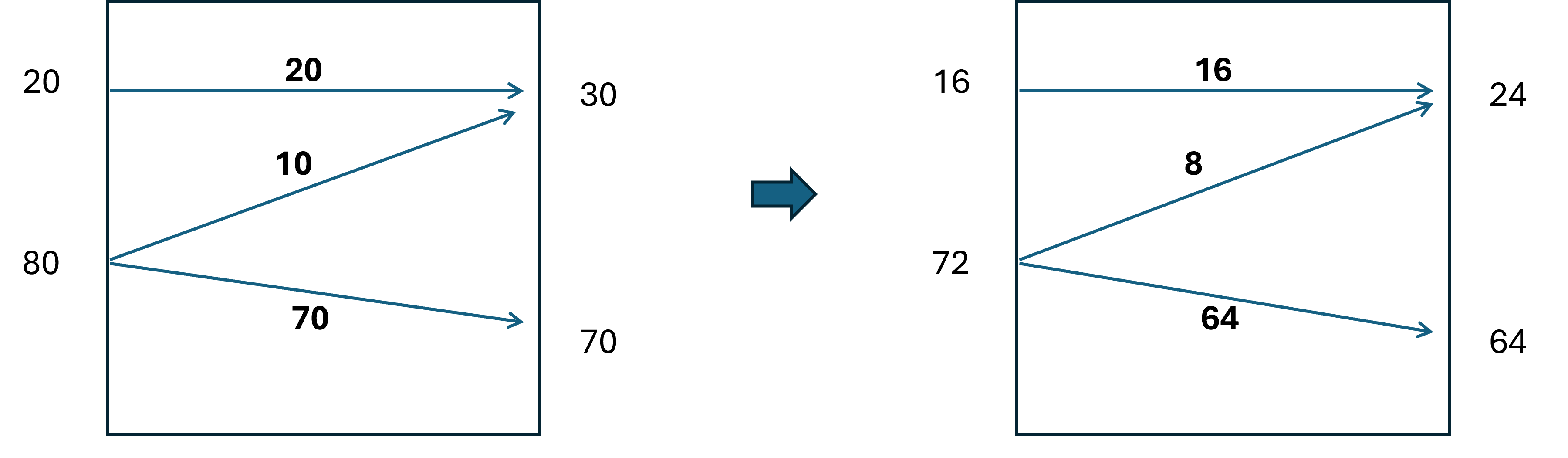}
				\caption{The rounding of flow values for a dissection square. We only round the flow values on the flow graph for the square, not at the portals. In the example, we round to powers of 2. Note that flow conservation on the square (flow in is flow out) is preserved by this rounding.}
				\label{fig:rounding}
			\end{figure}

			\subsubsection{A restricted portal layout}\label{sec:RestrictedPortalLayout}

			The sole purpose of the restricted portal layout is to have a lower bound on the cost of any solution within a dissection square $S$ in terms of the number of portal crossings of $S$. We use it to bound the cost for repairing the imbalance caused by using rounded flow values in the DP. 
			\bigskip 
			
			\noindent \textbf{Portal restriction procedure:}
			\begin{enumerate}
				\item Starting from the standard portal layout (as used in~\cite{Arora98JACM} and in Section~\ref{sec:logn})  we give an orientation to the 4 portals on each intersection of grid lines, exactly as shown in the upper left crossing of Figure~\ref{fig:portallayout}.
				\item For $i=1,2,..,\rho:$
				\begin{itemize}
					\item  Keep the 4 portals on any intersection of two level $i$ lines. Mark these portals.
					\item  For any level $i$ line, alternate the direction of all unmarked portals  of the level $i$ line. In particular, for the intersection of a level $i$ line with a level $j>i$ line, this means that one of the two directed crossings of the level $j$ line with the level $i$ line is removed. We keep and mark the remaining 3 crossings on each such intersection. See for example the horizontal level $i+1$ line in the figure.
				\end{itemize} 
			\end{enumerate}
			
			\begin{figure}
				\centering
				\includegraphics[width=0.99\linewidth]{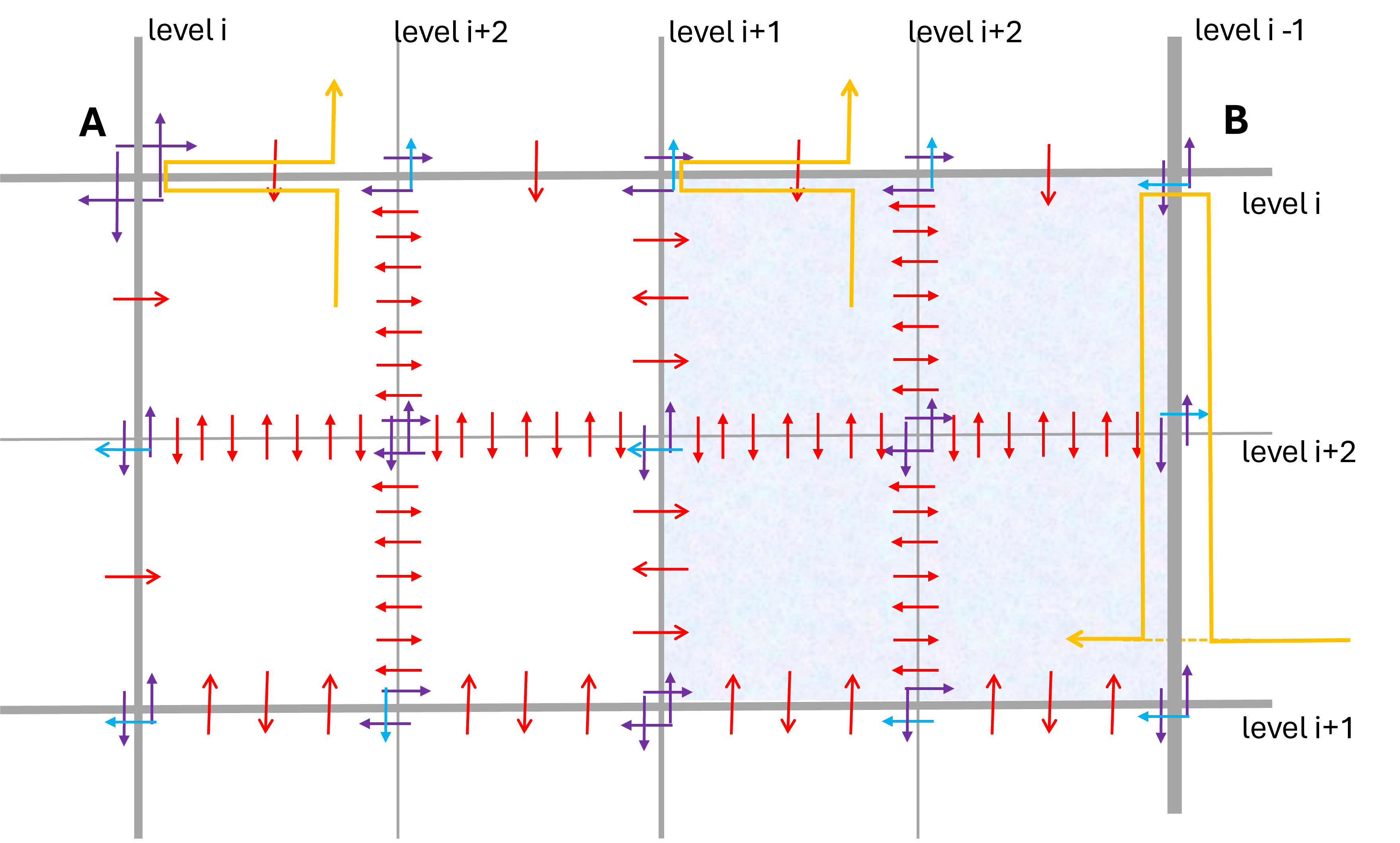}
				\caption{Impression of the restricted portal layout. (The portals which are drawn close to line intersections are actually at zero distance to the intersection.)    
					A \emph{restricted} portal respecting path only crosses the portals in the given direction. This special structure increases the expected detour for a crossing of a path with a grid line by only a factor 4 (compared to the standard layout) but has the additional property that the minimum path length inside a dissection square is at least the inter portal distance.
				}
				\label{fig:portallayout}
			\end{figure}
			
			\begin{lemma}\label{lem:layoutdetour}
				Any path can be made portal respecting such that the detour per crossing of a grid line is at most 4 times the inter portal distance of that line.
			\end{lemma}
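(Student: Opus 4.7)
The plan is to mimic the standard Arora-style rerouting argument and to control the extra slack introduced by (a) the directionality constraint and (b) the removed portals at grid-line intersections. Consider a single crossing of a path with a grid line $\ell$ at a point $x$; let $\eta$ be the inter-portal distance of $\ell$, and let $u, v$ be the points where the path enters and leaves a small neighbourhood of $x$ on opposite sides of $\ell$. Any rerouted path $u \to p \to v$ through a portal $p \in \ell$ has detour at most $2\,d(x,p)$ by the triangle inequality, so it suffices to exhibit a surviving, correctly oriented portal $p$ with $d(x,p) \le 2\eta$.

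I would locate $p$ by walking along $\ell$ starting from $x$. Between any two consecutive intersections of $\ell$ with other grid lines, the unmarked portals have strictly alternating orientations by step~2 of the portal-restriction procedure; so the two portals nearest $x$ in the two walking directions together cover both orientations, and either the closest portal or a neighbour at distance at most $3\eta/2$ is correctly oriented. The only complication is that at each intersection of $\ell$ with a higher-level line the procedure removes one of the two $\ell$-portals: if the required orientation happens to be the removed one, the nearest correctly oriented $\ell$-portal is the next interior one, at distance at most $2\eta$ from the intersection. A short case analysis on the position of $x$ relative to the nearest intersection of $\ell$ then shows $d(x,p) \le 2\eta$ in every case, yielding a per-crossing detour of at most $4\eta$. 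The rerouting is applied independently at every crossing of every path with every grid line, exactly as in~\cite{Arora98JACM}.

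The main obstacle is the case analysis above. Because of the removed portals at intersections, the first correctly oriented surviving portal need not be the geometrically nearest one, so one has to inspect the three-portal configuration retained at each intersection with a higher-level line and verify that an $\ell$-portal of either required direction is always available within $\eta$ of the intersection. Once this is confirmed, the rest is standard Arora bookkeeping and the $4\eta$ bound follows directly from the triangle-inequality estimate of the first paragraph.
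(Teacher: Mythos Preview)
Your proposal is correct and follows essentially the same approach as the paper: both arguments exploit the alternating orientation of portals along a level-$i$ line together with the behaviour at intersections to show that a correctly oriented surviving portal lies within $2\eta$ of any crossing point, and then apply the standard $2\,d(x,p)$ detour bound. The paper organises the case analysis slightly differently---it fixes the maximal segment of $\ell$ between two consecutive intersections $A,B$ with lines of level $\le i$, observes that both $\ell$-crossings survive at $A$ and $B$ while the alternation is perfect in between (the removed portal at each level-$>i$ intersection is precisely the one inconsistent with the alternation), and reads off the $2\eta$ bound directly---but this is the same argument you outline, just packaged a bit more cleanly.
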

			\begin{proof}
				Consider for example the horizontal level $i$ line in Figure~\ref{fig:portallayout}. Now consider the intersection of this line with 2 vertical lines that are of level $j\le i$ and that are consecutive. In the figure, these two intersections are marked $A$ and $B$. We have the following properties: (i)  At $A$ and $B$, we did not remove the portal crossings with the level $i$ line (the 2 vertical arrows at $A$ and the 2 vertical arrows at $B$), (ii) between $A$ and $B$, the portal crossings with vertical grid lines below the level $i$ line go left, and above the level $i$ line they go to the right, and (iii) between $A$ and $B$, the portals on the level $i$ line have alternating orientation. These 3 properties hold in general for the part of a level $i$ line between two consecutive intersections with higher level ($<i$) lines. Hence, for a any crossing with a grid line there always is a portal at distance at most twice the inter portal distance that can be used. 
			\end{proof}
			The lemma above states that we did not lose anything substantial in the analysis by using this portal structure. 
			We conclude that we can, from now on, restrict to portal respecting solutions using the special portal layout and denote by $\OPT$ the optimal cost among the portal respecting solutions. As before (see~\eqref{eq:exp_opt1}), we have 
			\[\text{Expected}(\OPT)\le (1+4\eps)\OPT_{\ses}.\] 
			The next lemma gives an interesting property and may be of of general interest for network optimization problems in $\RR^2$. We shall use it to prove Lemma~\ref{lem:realsolution}.
			\begin{lemma}\label{lem:stayinsquare}
				Let $S$ be a dissection square of level $i$. Any path that enters and leaves $S$ only at the restricted portals will stay in $S$ for at least the inter portal distance, $\eta_i$, of a level $i$ line.
			\end{lemma}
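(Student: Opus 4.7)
The plan is to reduce the lemma to a purely geometric distance bound on the boundary of $S$ and then verify that bound case-by-case from the structure enforced by the portal-restriction procedure. Let the path enter $S$ at a restricted portal $p$ (i.e., a directed crossing of $\partial S$ whose allowed orientation goes from outside $S$ to inside) and leave $S$ at a restricted portal $q$ (whose allowed orientation goes from inside to outside). Since the part of the path lying in $S$ is a curve in $\RR^2$ with endpoints $p$ and $q$, its length is at least the Euclidean distance $|pq|$. So it suffices to prove
\[
|pq|\ \ge\ \eta_i
\]
for every admissible in/out pair $(p,q)$ of restricted portals on $\partial S$.

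I would split the proof of this distance bound into three geometric cases. \emph{Case 1:} $p$ and $q$ lie on opposite sides of $S$. Then $|pq|\ge L_i$, and by \eqref{eq:Li} and \eqref{eq:eta} one has $L_i = 2^{\rho-i} \ge \eta_i$. \emph{Case 2:} $p$ and $q$ lie on the same side of $S$, which is a grid-line segment of some level $j\le i$. The alternation step of the restriction procedure forces consecutive unmarked portals on this segment to have opposite in/out polarity and to be separated by exactly $\eta_j$, so the minimum distance between an ``in''-portal and an ``out''-portal on this side is at least $\eta_j$; since $\eta_j$ is nonincreasing in $j$, this is $\ge \eta_i$.

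The delicate \emph{Case 3} is when $p$ and $q$ lie on two adjacent sides of $S$ that meet at a corner $C$. The danger is that both $p$ and $q$ could sit at $C$ itself, giving $|pq|=0$. Here I would invoke the structural properties of the restricted layout at grid-line intersections (the same properties used in the proof of Lemma~\ref{lem:layoutdetour}, items (i)--(iii)): at $C$, step~1 fixes specific orientations for the four directional crossings, and step~2 either keeps all four (if $C$ is the intersection of two level-$i$ lines) or removes exactly one (if $C$ is the intersection of a level-$i$ line with a lower-level line). In either subcase, I would check that among the directional portals of $\partial S$ that survive at $C$, it is not possible to have an ``in''-portal of $S$ and an ``out''-portal of $S$ both located at the corner point; consequently at least one of $p,q$ is a non-corner portal, which is then at distance at least $\eta_i$ along the level-$i$ side of $S$, and $|pq|\ge \eta_i$ follows.

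The main obstacle will be precisely this corner verification: one has to trace carefully through the step-1 orientation assignment together with the step-2 alternation to verify that the surviving portals at every corner of every level-$i$ dissection square are oriented consistently enough to forbid a zero-length corner traversal. Once that verification is in hand, the three cases together give $|pq|\ge \eta_i$ for every admissible pair $(p,q)$, and the lemma follows from the observation that the length of the path inside $S$ is at least the straight-line distance $|pq|$.
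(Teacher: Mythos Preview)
Your proposal is correct and follows essentially the same approach as the paper: both arguments reduce to a distance bound between the entry and exit portal and rely on the two key structural facts that (a) at any corner of $S$ the restricted layout permits only entering or only leaving, and (b) along each side of $S$ the portals alternate orientation with spacing at least $\eta_i$ (since the four sides lie on lines of level $\le i$). The paper's proof simply asserts the corner property and the side-alternation in two sentences, whereas you expand this into an explicit three-case analysis and flag the corner verification as the point requiring care; your version is more detailed but not different in substance.
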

			\begin{proof}
				In the corners of a dissection square, a path can only leave or enter but not both. On a side between two corners, portals have alternating directions so any path must go between two different portals, and these are separated by at least the inter portal distance of a level $i$ line. (Note that the sides of a level $i$ dissection square are formed by two level $i$ lines, and two lines of higher level, which have a larger inter portal distance.)   
			\end{proof}

			\subsubsection{The DP with rounded flow values}
			
			We apply the same DP as in Section~\ref{sec:logn} but now with rounded flow values. The consistency check in the DP is replaced by a 'pseudo-consistency' check, which is formalized in Definition~\ref{def:validpseudo}.

			We will round to powers of $\alpha$, where $\alpha$ is specified later, and we define $N_{\FF}$ such that $\lfloor \alpha^{N_{\FF}}\rfloor=n$.  Let $\FF=\{0\}\cup \{\lfloor \alpha^i\rfloor \mid 0\le i\le N_{\FF}\}$. It turns out that we can choose $\alpha$ such that 
			$N_{\FF}=\tilde{O}(\log^3 n)$.
			
			Since we gave an orientation to every portal, all paths crossing a portal $p$, do so  in the same direction: The flow through  $p$ either goes into or out of $S$, but not both. We denote by $\flow(p,S)\ge 0$ the total flow  entering or leaving $S$ through $p$. The orientation of the portal tells what the direction of the flow is. 
			If dissection squares $S$ and $S'$ share a portal $p$, then in any feasible solution, $\flow(p,S)=\flow(p,S')$. However, when we run the DP with rounded numbers then we cannot maintain this identity and flow values for the same portal may differ among various dissection squares. We will show that we can repair this imbalance at a cost of a factor $\alpha$ at every level.  
			
			\paragraph{Enumeration of configurations}
			As before, a configuration for a dissection square is given by the flow information for the portals and the component information.
			For the enumeration of configurations we start, as before, from the flow graphs. So, for a dissection square $S$, we first generate all possible flow graphs on the portals of $S$. Then, for any flow graph we enumerate over all possible flow values, where we now restrict to values from $\FF$. Then, we add up these numbers at the portals to get the flow values for the portals. But we \underline{do not round} these numbers. (See Figure~\ref{fig:rounding} for an example.) Doing the rounding for the flow on the arcs and not for the portals ensures flow conservation on $S$. Also, the flow graph provides the necessary component information. Note that a flow value of 1 is not rounded down to 0, since $1\in\FF$. Hence, rounding has no effect on the component information.  
			Again, we do not store the flow graphs in the DP but only the configurations derived from them.

			The enumeration of configurations for the grid cells and square $B_0$ requires more attention. For the grid cells, we only enumerate configurations that are valid and optimal. This works almost the same as in Section~\ref{sec:logn}. 
			We try all possible flow graphs and use \emph{rounded} flow values. If the cell contains part of the anchor paths $(a',a)$ and $(b,b')$, then that part is a fixed part of the flow graph that we enumerate for that cell and we also round the flow value $m$. 
			The rounded values are added to obtain the flows at the portals, but these values are not rounded. If the square contains no input point then all connections are straight lines. If there is a point then the cheapest option is to make all connections straight and then bend one of the lines to serve the point. All these computed configurations are declared \emph{valid}.
			
			For $B_0$ we only want configurations with an in-and outflow that is not much larger than $m$. It turns out (Lemma~\ref{lem:costDP}) that a value of at most $\alpha^{\rho}m$ suffices. We put that as a hard constraint in the DP.  

			\paragraph{The Dynamic Program}

			For a dissection square $S$ we define the \emph{outer portals} as the portals on the boundary of $S$ and the \emph{inner portals} as the portals that are on two adjacent children of $S$. 
			
			\begin{definition}\label{def:validpseudo}
				For a dissection square $S$ and lists $\flow(S)$ and $\comp(S)$, we say that 
				$(S,\flow(S),\comp(S))$ is a \emph{valid configuration} if there are valid configurations $(S^{(i)},\flow(S^{(i)}),\comp(S^{(i)}))$ for its 4 children 
				$S^{(i)}$ ($i \in \{1,2,3,4\}$) in the dissection tree such that:
				\begin{enumerate}
					\item The component information is consistent and no isolated components are formed.
					\item The flow information is \emph{pseudo-consistent}. That means,  for an inner portal, which appears on adjacent children $S_i$ and $S_j$, we require that
					\begin{equation}\label{eq:pseudo1}
						\frac{1}{\alpha}\le \frac{\flow(p,S_i)}{\flow(p,S_j)}\le \alpha,
					\end{equation}  
					and for an outer portal, which is on $S$ and on one of its children $S_i$,  we require 
					\begin{equation}\label{eq:pseudo2}
						1\le \frac{\flow(p,S)}{\flow(p,S_i)}\le \alpha^2.
					\end{equation}		
				\end{enumerate}
				We say that the valid configuration is \emph{pseudo-consistent} with configurations of its 4 children, as defined above.   
			\end{definition}
			The reason for the difference in bounds between~\eqref{eq:pseudo1} 
			and~\eqref{eq:pseudo2} is that we will correct for the mismatch by adding paths such that the flow of any outer portal $p$ is exactly $\flow(p,S)$. So we want $\flow(p,S)\ge \flow(p,S_i)$. Consequently, the loss due to rounding is a factor $\alpha$ at every level.

			\bigskip 
			
			\begin{algorithm}
				\caption{The Dynamic Program:}\label{alg:cap}
				\begin{algorithmic}
					\State \textbf{Input:}
				\begin{itemize}
					\item Valid configurations for all grid cells together with their cost.
					\item Candidate configurations for all other dissection squares.
				\end{itemize} 
				\For {$j=\rho-1$ to 0}:
					\For {all enumerated configurations $(S,\flow,\comp)$ of level $j$}:
					\State Check validity according to Definition~\ref{def:validpseudo}. Remove if invalid.
					\State The cost of $(S,\flow,\comp)$ is defined as the smallest sum over 4 valid 
					\State and pseudo-consistent configurations of the children of $S$. Keep a \emph{pointer}  
					\State from $S$ to those 4 configurations.    	  	
					\EndFor
					\EndFor\\
					\Return the cheapest valid configuration of level $0$ and all the configurations of each level from which it was derived, i.e, one configuration for each dissection square of the quad tree. 
					Denote this by $\Gamma_{\DP}$.
				\end{algorithmic}
			\end{algorithm}

			The DP returns an answer which does, in general, not satisfy flow conservation. We call this a \emph{pseudo-solution}. In fact, $\Gamma_{\DP}$ gives one valid configuration, say $\Gamma_{\DP}(S)$ for every dissection square $S$. The cost of $\Gamma_{\DP}$ is simply the sum of the cost of all configurations on the grid cells. 
			Let $\mathcal{C}$ be the set of all grid cells and for $C\in \mathcal{C}$, let $\cost(C,\Gamma_{\DP})$ denote the cost of $C$ in pseudo solution $\Gamma_{\DP}$. In general, denote $\cost(S,\Gamma_{\DP})=\sum_{C\in S} \cost(C,\Gamma_{\DP})$. In particular, the cost of the solution  $\Gamma_{\DP}$ returned by the DP is  $\sum_{C\in \mathcal{C}} \cost(C,\Gamma_{\DP})$.

			Remember that we defined $\OPT$ as the optimal cost for a portal respecting solution without the fixed anchor paths, and $Z^{0}$ is the cost for the fixed anchor paths.

			\begin{lemma}\label{lem:costDP}
				The DP will always return a pseudo-solution $\Gamma_{\DP}$ of cost $Z_{\DP}$ with $Z_{\DP}\le \OPT +Z^{0}$.  
				Further, the flow value of the configuration for $B_0$ in $\Gamma_{\DP}$ is at most  $\alpha^{\rho }m$.
			\end{lemma}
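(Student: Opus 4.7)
Both claims will follow once we exhibit one particular valid tower of configurations, built from an optimal portal-respecting solution $\Gamma^*$ (whose cost including the fixed anchor paths is $\OPT+Z^{0}$), whose grid-cell costs add up to $\OPT+Z^{0}$ and whose configuration at $B_0$ has flow value at most $\alpha^{\rho}m$. Since the DP picks the cheapest valid pseudo-consistent tower subject to the hard constraint that $B_0$'s flow value is at most $\alpha^{\rho}m$, both $Z_{\DP}\le \OPT+Z^{0}$ and the flow-value bound then follow directly.

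\textbf{Construction.} For every dissection square $S$, $\Gamma^*$ induces a unique flow graph on the outer portals of $S$ with exact integer arc flows $\{e_a\}$; write $f^*(p,S)=\sum_{a\ni p}e_a$ for the exact flow through portal $p$ in $S$. I will assign a level-dependent rounding: for a square $S$ of level $j$, round each $e_a$ upward to a value $\hat e_a\in\FF$ satisfying $\hat e_a\in[\alpha^{\rho-j-1}e_a,\alpha^{\rho-j}e_a]$, with level $\rho$ treated as the base case (using the fact that $0,1\in\FF$ so small exact flows incur no rounding). The induced portal flow $F^{(j)}(p)$ at any outer portal of a level-$j$ square then lies in $[\alpha^{\rho-j-1}f^*(p),\alpha^{\rho-j}f^*(p)]$. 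Both the flow-graph structure and the component information are copied from $\Gamma^*$, so component consistency and absence of isolated components are inherited at every level.

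\textbf{Verification.} Pseudo-consistency reduces to interval arithmetic. For an inner portal $p$ shared by two adjacent level-$(j+1)$ children $S_i,S_{i'}$ of a level-$j$ square, both $\flow(p,S_i)$ and $\flow(p,S_{i'})$ lie in the same interval $[\alpha^{\rho-j-2}f^*(p),\alpha^{\rho-j-1}f^*(p)]$, giving ratio in $[1/\alpha,\alpha]$, which satisfies \eqref{eq:pseudo1}. For an outer portal $p$ of a level-$j$ square that is shared with one of its children $S_i$, the ratio $\flow(p,S)/\flow(p,S_i)$ lies in $[\alpha^{\rho-j-1}/\alpha^{\rho-j-1},\alpha^{\rho-j}/\alpha^{\rho-j-2}]=[1,\alpha^{2}]$, which satisfies \eqref{eq:pseudo2}. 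For the cost, note that the length of the paths inside a grid cell is determined by whether the cell holds an input point and by how its boundary portals are connected (one bend suffices), not by the integer flow labels on those arcs; therefore each grid-cell configuration in our tower pays exactly what $\Gamma^*$ pays in that cell, and summing over all grid cells yields $\OPT+Z^{0}$. At $B_0$ the exact flow at $a'$ equals $m$, so by the construction the flow value of $B_0$ in this tower is at most $\alpha^{\rho}m$, and the hard constraint is satisfiable.

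\textbf{Main obstacle.} The delicate point is calibrating the level-dependent rounding so that simultaneously (i) every arc value lands in $\FF$, (ii) the pseudo-consistency windows $[1/\alpha,\alpha]$ and $[1,\alpha^{2}]$ are met at every level despite the fact that a single rounding step to $\FF$ may already move an individual arc by a factor of $\alpha$, and (iii) the accumulated multiplicative blow-up at the root is exactly $\alpha^{\rho}$ rather than $\alpha^{\rho+1}$. This requires absorbing the grid-cell base rounding into the scaling schedule and exploiting that $1\in\FF$ keeps the small exact flow values exactly preserved at level $\rho$.
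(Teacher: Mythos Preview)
Your proposal is correct and follows essentially the same route as the paper: take an optimal portal-respecting solution $\Gamma^*$, build a level-indexed tower of configurations by scaling the arc flows so that at level $\rho-i$ they lie in $[\alpha^{i-1}e_a,\alpha^{i}e_a]$, verify pseudo-consistency by the same interval arithmetic you wrote down, observe that grid-cell costs depend only on the flow-graph shape and hence still sum to $\OPT+Z^{0}$, and use the hard constraint at $B_0$ for the second claim. The paper phrases the scaling as ``round down to $\FF$, then multiply by $\alpha^{i}$'' (the operator $\correct$), whereas you pick an element of $\FF$ directly in the target interval; this is the same idea and, if anything, your phrasing is slightly more honest about membership in $\FF$.
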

			\begin{proof}
				We did put  $\alpha^{\rho }m$ as the maximum flow value of $B_0$ as a hard restriction in the DP, so it is enough to show the first part of the claim, namely, that at least one outcome of the DP exists with a cost of at most $\OPT +Z^{0}$.
				
				Consider an optimal  portal respecting solution $\Gamma^*$, including the anchor paths. The cost of this solution is  
				$\OPT +Z^{0}$.
				The solution $\Gamma^*$ defines a natural configuration for each dissection square $S$ in the following way: First, consider the flow graph that has an arc $p,p'$ if there is a path in $\Gamma^*$ that enters $S$ in $p$ and leaves $S$ in $p'$. For every arc in the flow graph this gives a flow value which is then rounded down to the nearest number in $\FF$. These rounded numbers are then added to obtain the flow values for the portals. (These sums are not rounded.) The components of the flow graph together with the flow values for the portals define the configuration. Denote this configuration by ${\Gamma^*}(S)$. 
				
				The configuration ${\Gamma^*}(S)$ on a dissection square $S$ will, in general, not be a valid configuration in our DP due to rounding errors. However, it is easy to show that when we multiply all flow values by the right power of $\alpha$ then the resulting configuration is in the DP table. For any dissection square $S$ of level $\rho-i$ and $i\in\{1,2,\dots,\rho\}$,  denote by $\correct(\Gamma^{*}(S))$ the configuration ${\Gamma^*}(S)$ where we multiply all flow values by a factor $\alpha^{i}$.
				
				We prove by induction on the level that $\correct(\Gamma^{*}(S))$ is in the DP table for all $S$. 
				For $i=0$, we are at level $\rho-0=\rho$, which are the grid cells. We enumerated all possible rounded configurations, and these include those of optimal solution $\Gamma^*$. For $i=0$, $\correct(\Gamma^{*}(S))$ is equal to ${\Gamma^*}(S)$ so this is covered in the DP table.

				Now assume that $\correct(\Gamma^*(S'))$ is in the DP table for all squares $S'$ of level $\rho,\rho-1,\dots, \rho-(i-1)$. Let $S$ be a dissection square of level $\rho-i$. We prove that 
				$\correct(\Gamma^*(S))$ is pseudo-consistent with the configurations  
				$\correct(\Gamma^*(S^{(h)}))$, $h\in\{1,2,3,4\}$, of its 4 children, and therefore is a valid configuration by Definition~\ref{def:validpseudo}.
				
				First note that the component information of 	$\correct(\Gamma^*(S))$ is consistent with the configurations  
				$\correct(\Gamma^*(S^{(h)}))$ of its children. This follows from the fact that rounding has no effect on the flow graph since a flow value of 1 is never rounded to zero ($1\in\FF$).     
				
				Let $p$ be a portal on the boundary of two children of $S$, say $S^{(1)}$ and $S^{(2)}$, which are of level $\rho-(i-1)$. Let  $\flow^*(p)$ be the true, unrounded flow through $p$ in $\Gamma^*$. Let $\flow(p,S^{(1)})$ be the rounded value as defined by $\correct(\Gamma^{*}(S^{(1)})$. Then 
				\[\alpha^{i-2}\flow^*(p)\le \flow(p,S^{(1)}) \le \alpha^{i-1}\flow^*(p).\]
				Similarly 
				\[\alpha^{i-2}\flow^*(p)\le \flow(p,S^{(2)}) \le \alpha^{i-1}\flow^*(p).\]
				Hence, 
				\[1/\alpha\le \flow(p,S^{(1)})/ \flow(p,S^{(2)})\le \alpha.\]
				Hence, constraint~\eqref{eq:pseudo1} is satisfied. 
				
				Now assume $p'$ is a portal on a boundary shared by $S$ and $S^{(1)}$.  
				Again, let  $\flow^*(p')$ be the true, unrounded flow through $p'$ in $\Gamma^*$. Let $\flow(p',S^{(1)})$ be the rounded value as defined by $\correct(\Gamma^{*}(S^{(1)})$. Then 
				\[\flow(p',S^{(1)})\ge \alpha^{i-2} \flow^*(p',S^{(1)}), \]
				and 		
				\[\flow(p',S)\le \alpha^{i} \flow^*(p',S). \]
				Hence,   
				\[1\le \flow(p',S)/\flow(p',S^{(1)}\le \alpha^2.\]
				So, Constraint~\eqref{eq:pseudo2} is satisfied as well and the configuration $\correct(\Gamma^*(S))$ is taken as a valid configuration in the DP table.
				Note that $\correct(\Gamma^*(B_0))$ has flow value exactly $\alpha^{\rho}m$.
				
			\end{proof}
			
			For any valid configuration of a dissection square $S$ in the DP we define a \emph{real solution} as a set of directed paths in $S$ that match the configuration. This is formally defined below.  
			
			\begin{definition}\label{def:realsolution}
				Let $(\flow,\comp)$ be a valid configuration for dissection square $S$. We define a \emph{real} solution for this configuration as a set of directed paths in $S$ that:
				\begin{itemize}
					\item together visit all points inside $S$
					\item have start- and the endpoint at the portals of $S$
					\item match the flow values \flow. More precisely: For an outgoing portal $p$ the flow value equals the number of paths that have $p$ as endpoint. For an incoming portal $p$, the flow value equals the number of paths that have $p$ as starting point.
					\item satisfy $\comp$. More precisely: If two portals are in the same component according to $\comp$ then they are also connected by the set of paths (ignoring the directions). 
				\end{itemize}
		The real solution as defined above does not need to be portal respecting inside $S$, it only needs to fulfill the boundary information $(\flow,\comp)$. 
			\end{definition}
			 
			Now we define $\alpha$ such that 
			\[(\alpha^2-1)= \frac{\eps\ln (1+\eps^2)}{8\rho^2}.
			\] 
			Remember that $\rho= \lfloor\log_2 (n/\epsilon)\rfloor$. So clearly $1<\alpha<2$. We use this equality for $\alpha^2-1$ in Equation~\ref{eq:repair_bound}. 
			To solve $\lfloor \alpha^{N_{\FF}}\rfloor=n$, note that 
			\[\alpha^{N_{\FF}}=(\alpha^2)^{N_{\FF}/2} = \left(1+\frac{\eps\ln(1+\eps^2)}{8\lfloor\log_2 (n/\epsilon)\rfloor^2} \right)^{N_{\FF}/2},\]
			 which implies $N_{\FF}=\tilde{O}(\log^3 n)$.

			\begin{lemma}\label{lem:realsolution}
				Given an optimal  pseudo-solution $\Gamma_{\DP}$ of cost $Z_{\DP}$  returned by DP, we can add arcs to obtain a set $\Gamma'$ consisting of $m'$-paths between $a'$ and $b'$ visiting all points in $\ses$ of cost at most $ (1+\eps^2)Z_{\DP}\le (1+\eps^2)(\OPT+Z^{0})$, and $m\le m'\le \alpha^{\rho}m\le (1+\eps^2)m$. 
			\end{lemma}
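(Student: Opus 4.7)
The plan is to construct $\Gamma'$ by a bottom-up induction on the dissection tree, starting from the grid cells (for which the DP stored real, valid solutions) and working up to $B_0$. For each dissection square $S$ at level $i$, the inductive hypothesis is that inside each of the four children $S^{(h)}$ I already have a real solution, in the sense of Definition~\ref{def:realsolution}, matching $\Gamma_{\DP}(S^{(h)})$. I splice these four partial solutions together and repair the resulting discrepancies to obtain a real solution inside $S$ that matches $\Gamma_{\DP}(S)$.

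The splice is non-trivial because pseudo-consistency (Definition~\ref{def:validpseudo}) only forces the flow values to agree up to a factor $\alpha$ at each inner portal (shared by two children) and up to $\alpha^2$ at each outer portal of $S$. At each inner portal $p$ I promote the flow to the \emph{larger} of the two sides' claims; at each outer portal $p$ I promote the child's flow to the parent's claim. The promotion requires adding at most $(\alpha-1)\,\flow(p,\cdot)$ extra arcs per inner portal and $(\alpha^2-1)\,\flow(p,S^{(h)})$ extra arcs per outer portal. Each added arc enters at $p$ and must exit through a neighboring portal of compatible orientation; the outer-planarity of the flow graph on $S$, combined with the alternating orientations in the restricted portal layout, lets me pair the deficits into flow-conserving, connectivity-preserving paths of length $O(L_i)$.

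The cost of the repair is controlled via Lemma~\ref{lem:stayinsquare}: for each child $S^{(h)}$ of level $i+1$, every path in its real solution stays inside $S^{(h)}$ for length at least $\eta_{i+1}$, so $\cost(S^{(h)},\Gamma_{\DP}) \ge F(S^{(h)})\,\eta_{i+1}$, where $F(S^{(h)})$ is the total flow across the boundary of $S^{(h)}$. The number of extra arcs added when processing $S$ is $O((\alpha^2-1)\,F(S))$, each of length $O(L_i)=O((\rho/\eps)\eta_{i+1})$. Hence the repair cost at $S$ is at most $O((\alpha^2-1)\rho/\eps)\cdot\cost(S,\Gamma_{\DP})$. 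Summing over all squares at all $\rho$ levels and using the calibration $(\alpha^2-1)=\eps\ln(1+\eps^2)/(8\rho^2)$, the total repair cost fits under $\eps^2 Z_{\DP}$, giving $\cost(\Gamma')\le(1+\eps^2)Z_{\DP}\le(1+\eps^2)(\OPT+Z^{0})$. The flow value at $a'$ (equivalently $b'$) is fixed by the DP to some $m'\in[m,\alpha^{\rho} m]$ and is matched exactly by the repair, and $\alpha^{\rho}m\le(1+\eps^2)m$ follows from the definition of $\alpha$.

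The main obstacle is the combinatorial step of ensuring that the added arcs can always be assembled into flow-conserving, connectivity-preserving paths inside $S$ while respecting the oriented portals; once this is settled, the cost bound is essentially an accounting exercise. It is crucial that at each inner portal I spend only $(\alpha-1)$ per unit of flow (by equalizing to the larger side) rather than $\alpha^2-1$, so that the per-level $O(\rho/\eps)(\alpha^2-1)$ slack, compounded $\rho$ times, just fits under $\eps^2$, which is precisely what the calibration of $\alpha$ is designed to achieve.
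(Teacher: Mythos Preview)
Your overall scheme---bottom-up induction, using Lemma~\ref{lem:stayinsquare} to lower-bound $\cost(S^{(h)})$ by $N^{(h)}\eta_{i+1}$, and then charging the repair cost per level against this bound---is exactly the paper's strategy, and your cost accounting is essentially correct. Where you diverge is in the actual repair step, and there you are making life much harder than necessary.

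You propose to pair the portal deficits into flow-conserving paths that run from one portal of $S$ to a neighboring, compatibly-oriented portal, invoking outer-planarity of the flow graph to argue this can always be done. You yourself flag this pairing as ``the main obstacle,'' and it is not resolved in your write-up. The paper sidesteps the entire issue: it places a single auxiliary point $w$ near the center of $S$ and adds straight arcs between $w$ and every portal with a mismatch (see Figure~\ref{fig:balance}). Because each child $S^{(h)}$ already satisfies flow conservation, and because the configuration on $S$ satisfies flow conservation, the net signed deficit over all portals is zero, so $w$ is automatically balanced. No combinatorial pairing is needed. Moreover, these added arcs need \emph{not} respect portals inside $S$; a real solution (Definition~\ref{def:realsolution}) is only required to match the boundary data of $S$, so your concern about ``respecting the oriented portals'' for the repair arcs is misplaced.

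Two smaller points. First, your claim that it is ``crucial'' to spend only $(\alpha-1)$ per unit of flow at inner portals is false: the paper uses the uniform bound $(\alpha^2-1)\sum_h N^{(h)}$ on the number of added arcs and still obtains the multiplicative factor $(1+(\alpha^2-1)8\rho/\eps)^{\rho}<1+\eps^2$; the calibration of $\alpha$ is tuned to precisely this. Second, the bound $m'\le\alpha^{\rho}m$ comes directly from the hard constraint imposed on the level-$0$ configuration (Lemma~\ref{lem:costDP}), and the repair never changes the flow value on the outer boundary of any square, so the final flow value at $a'$ equals that stored for $B_0$.
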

			\begin{proof}
				We show how to turn the pseudo-solution into a solution $\Gamma'$ by only adding arcs and not making other changes. 
 
				Intuitively, since the imbalance is small, we should be able to correct it at low cost. We will correct the flow level by level in a bottom up way and charge the cost to the number of portal crossings at the current level and use Lemma~\ref{lem:stayinsquare}.

				For the grid cells, the valid configurations were defined from real solutions so we do not need to add anything.
				Now let $j\ge 1$ and let $S$ be a dissection square of level level $i=\rho-j$ and let $\Gamma_{\DP}(S)=(\flow,\comp)$ be the valid configuration for $S$ as defined in $\Gamma_{\DP}$ and let  $(\flow^{(h)},\comp^{(h)})$ be the corresponding valid configurations for its 4 children $S^{(h)}$, $h\in\{1,2,3,4\}$ that are pseudo-consistent with  
				$(\flow,\comp)$. Assume that we already constructed real solutions, say $\Gamma'(S^{(h)})$, for these 4 configurations, as defined in Definition~\ref{def:realsolution}.  
				Let $\cost(S^{(h)})$ be the cost for each of these solutions. Now we put these 4 solutions together and add arcs.
				
				Let $N^{(h)}$ be the flow value of $\flow^{(h)}$, i.e., $2N^{(h)}$ is the number of portal crossings.
					We know use Lemma~\ref{lem:stayinsquare} for the real solutions $\Gamma'(S^{(h)})$. 
				\begin{equation}\label{eq:portalcost}
					\cost(S^{(h)})\ge N^{(h)}\eta_{i+1}, \text{ for }i\in\{1,2,3,4\}.
				\end{equation}

				By constraint~\eqref{eq:pseudo2} we can solve the imbalance at the inner and outer portals by adding arcs such that:
				\begin{itemize}
					\item[(i)] the flow values at the outer portals match that of $\flow$. Hence, we have flow conservation on $S$.
					\item[(ii)] we have flow conservation on each of the inner portals 
				\end{itemize}
				To see how this can be done, pick an arbitrary new point, say $w\in \RR^2$, inside $S$ and add directed paths between $w$ and the inner and outer portals. Since we have flow conservation on the inner portals and flow conservation on $S$, we have flow conservation on $w$. For an example see Figure~\ref{fig:balance}. N.B., the directed paths that we add are not portal respecting but simply straight lines in $\RR^2$.  A real solution as defined in Definition~\ref{def:realsolution} only needs to be portal respecting on the portals on the outer portals of $S$.    
				\begin{figure}
					\centering
					\includegraphics[width=0.65\linewidth]{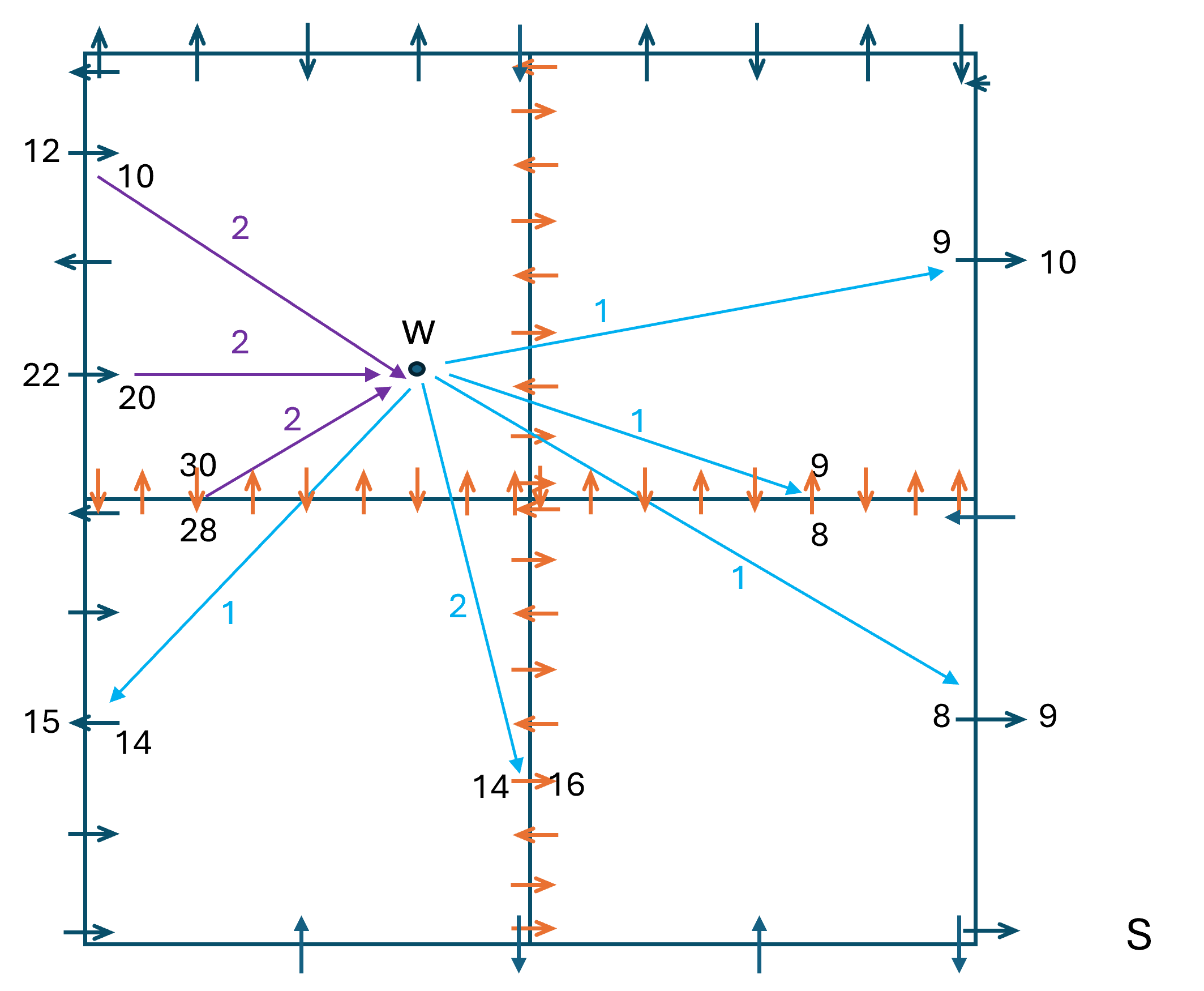}
					\caption{Repairing flow conservation. Without point $w$, the large square $S$ and its 4 children each satisfy flow conservation separately. For example, the upper left square has an in and out flow of 30, and $S$ has flow value 34. We repair the mismatch by adding one arbitrary point $w$ and arcs between $w$ and all portals with a mismatch.
 	The stored solutions on the 4 children together with point $w$ give a solution for $S$ with flow value $34$, which is equal to the flow value of $S$ before repairing. 
						N.B. The numbers outside the large square are for $S$ and each number inside the square $S$ is for one of the children. The figure only shows 2 levels. In general, a pseudo solution contains $\tilde{O}(\log n)$ flow values for each portal: 2 for every level.}
					\label{fig:balance}
				\end{figure}
				
				By definition of pseudo consistent, the number of arcs we need to add is no more than $(\alpha^2-1)(N^{(1)}+N^{(2)}+N^{(3)}+N^{(4)})$, and each arc has length at most $L_i$ if we take $w$ close to the middle of $S$. ($L_i$ is the side length of a level $i$ square.) From Equations~\eqref{eq:eta},\eqref{eq:Li}, and~\eqref{eq:portalcost}, the total added cost is at most 
				\begin{eqnarray*}
					&&(\alpha^2-1)L_i\sum_{h=1}^4 N^{(h)}\\
					&\le& (\alpha^2-1)L_i/\eta_{i+1} \sum_{h=1}^4 \cost(S^{(h)})\\
					&\le& (\alpha^2-1)(8\rho/\eps) \sum_{h=1}^4 \cost(S^{(h)}).
				\end{eqnarray*}
				When we do this repairing for all level $i$ squares, then the total cost increases by at most a factor  $1+(\alpha^2-1)8\rho/\eps$. So, over all $\rho$ levels, the cost of the pseudo solution of the DP increases by at most a factor   
				\begin{equation}\label{eq:repair_bound}
					\left(1+(\alpha^2-1)\frac{8\rho}{\eps}\right)^{\rho}=\left(1+\frac{\ln(1+\eps^2)}{\rho}\right)^{\rho}< e^{\ln(1+\eps^2)}=1+\eps^2.
				\end{equation}
				This completes the prove for the cost. To see that the number of paths is at most $(1+\epsilon)m$ observe that the flow value of a dissection square $S$ does not increase due to the repairing. The flow value of an outer portal of $S$ in the constructed real solution equals the flow value of $S$ in the pseudo solution derived from the DP. Hence, by Lemma~\ref{lem:costDP} the number of paths $m'$ is at most $\alpha^{\rho }m$.
				Now we inequality~\eqref{eq:repair_bound} to bound $\alpha^{\rho}$.
				\begin{eqnarray*}\label{eq:alpha^rho}
					\alpha^{\rho} <\alpha^{2\rho}= \left(1+(\alpha^2-1)\right)^{\rho}<\left(1+(\alpha^2-1)\frac{8\rho}{\eps}\right)^{\rho}< 1+\eps^2.
				\end{eqnarray*}

			\end{proof}
		Note that $\Gamma'$ was obtained by adding arcs to pseudo-solution $\Gamma_{DP}$. This pseudo-solution is composed of a solution for each of the grid cells, which we computed at the beginning of the DP, and which together do not satisfy flow conservation in general. This was repaired in the lemma above by only adding arcs. It is important to realize (and we use that in the proof of the next lemma) that the computed real solution still contains the two anchor paths.  
		We defined the cost of these paths as $Z^{0}$, which is no more than $mL_0$ plus the cost of making them portal respecting. For this, however, we can't use the expected detour argument since the paths are fixed in $B_0$. So lets take a rough upper bound on the cost of the two anchor paths, which we shall use in the proof for the next lemma. 
			\begin{equation}\label{eq:boundZ_0}
				Z^{0}\le 2mL_0.   
			\end{equation}
			\begin{lemma}\label{lem:removeanchorpaths}
				Given solution $\Gamma'$ as described in Lemma~\ref{lem:realsolution}, we can find a solution to the $m$-paths instance of cost at most $(1+14\eps)\OPT$.
			\end{lemma}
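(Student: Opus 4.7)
The plan is to discard the portion of $\Gamma'$ lying in the annulus $B_0\setminus \sigma$, reshape the interior remainder into $m'$ walks from $a$ to $b$ inside $\sigma$, and then merge pairs of walks until exactly $m$ remain. I carry this out in three steps.

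Step 1 (removing the exterior). The anchor paths are fixed arcs in $\Gamma_{\DP}$ that carry a rounded flow value $\tilde m$ with $\tilde m \ge m/\alpha$, and the per-copy length of the two anchor paths together is $Z^{0}/m$. Hence the portion of $\Gamma'$ outside $\sigma$ has cost at least $\tilde m \cdot (Z^{0}/m) \ge Z^{0}/\alpha$. Subtracting this from the upper bound $\cost(\Gamma')\le (1+\eps^2)(\OPT+Z^{0})$ supplied by Lemma~\ref{lem:realsolution}, and using $1-1/\alpha\le (\alpha^2-1)/2 \ll \eps^2$, the cost of what remains inside $\sigma$ is at most $(1+\eps^2)\OPT+O(\eps^2)Z^{0}$. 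Substituting $Z^{0}\le 8\OPT/\eps$, which follows from~\eqref{eq:boundZ_0} and the lower bound~\eqref{eq:LB_OPTpaths}, brings the interior cost down to $(1+O(\eps))\OPT$.

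Step 2 (routing every walk through $a$ and $b$). I take a flow decomposition of $\Gamma'$ into $m'$ walks from $a'$ to $b'$; by routing as much flow as possible through the anchor arcs, at least $\tilde m$ of these walks canonically pass through both $a$ and $b$. Since $m'-\tilde m \le 2\eps^2 m$, only $O(\eps^2 m)$ walks use the repair arcs introduced in Lemma~\ref{lem:realsolution} and may bypass $a$ or $b$. For each such non-canonical walk I insert a short detour of length $O(\tau)$ that forces it through both $a$ and $b$, at a total additional cost of $O(\eps^2 m\tau)=O(\eps)\OPT$ by~\eqref{eq:LB_OPTpaths}. Moreover, if any walk exits and re-enters $\sigma$ through a portal other than $a$ or $b$, I replace the exterior detour between the two boundary crossings by the direct interior chord, which by the triangle inequality only decreases cost.

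Step 3 (reducing from $m'$ to $m$ walks). I now merge $m'-m\le \eps^2 m$ pairs of walks. For two walks $P_1,P_2$ from $a$ to $b$ inside $\sigma$, the merged walk follows $P_1$ from $a$ to $b$, returns directly from $b$ to $a$ along a segment of length at most $d(a,b)\le \sqrt 2\,\tau$, and then follows $P_2$. Summed over all merges, the added cost is bounded by $\sqrt 2\,\eps^2 m\tau = O(\eps)\OPT$, again by~\eqref{eq:LB_OPTpaths}. Adding the three contributions together gives the claimed $(1+14\eps)\OPT$ after tracking constants.

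The main obstacle will be Step 2: the repair arcs introduced in the proof of Lemma~\ref{lem:realsolution} can produce flow-decomposition walks whose routing departs from the canonical ``anchor path $\to$ interior $\to$ anchor path'' structure, so some walks may bypass $a$, bypass $b$, or cross $\partial\sigma$ at unintended portals. The saving grace is that the flow on the anchor arcs is at least $m/\alpha$, so only $O(\eps^2 m)$ walks need repairing, and an $O(\tau)$ detour per walk fits comfortably within the $O(\eps)\OPT$ budget.
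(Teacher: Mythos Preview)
Your Step~2 hides a real difficulty that you have not resolved. The repair arcs from Lemma~\ref{lem:realsolution} are straight segments from an interior point $w_S$ to portals on $\partial S$, added for dissection squares $S$ at \emph{every} level; whenever such an $S$ straddles $\partial\sigma$, its repair arcs may cross $\partial\sigma$ at arbitrary (non-portal) points. So ``removing the exterior'' is not a clean cut, and your count $m'-\tilde m\le 2\eps^2 m$ of walks that bypass $a$ or $b$ only captures the level-$0$ repair arcs incident to $a'$ and $b'$; it says nothing about walks that leave and re-enter $\sigma$ via repair arcs of intermediate-level squares. Bounding the number of such crossings (and hence the number of chords you must insert) requires a separate argument that you have not given. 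The claim ``at least $\tilde m$ walks canonically pass through both $a$ and $b$'' is also unjustified: the anchor arc into $a$ and the anchor arc out of $b$ each carry $\tilde m$ units, but flow decomposition does not force the \emph{same} $\tilde m$ walks to use both.

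The paper avoids all of this by never trying to separate interior from exterior. It keeps $\Gamma'$ as a single flow network of value $m'$ from $a'$ to $b'$, then shifts the endpoints by a purely local arc modification: delete the $m$ anchor arcs $a'\!\to a$ and $b\!\to b'$ (cost $-Z^{0}$) and add $m'-m$ reversed arcs $a\!\to a'$ and $b'\!\to b$ (cost at most $(m'-m)L_0$). This makes $a'$ and $b'$ balanced and turns $a,b$ into the new source and sink, yielding $m'$ paths from $a$ to $b$ with cost at most $(1+\eps^2)(\OPT+Z^{0})-Z^{0}+(m'-m)L_0\le (1+13\eps)\OPT$ after substituting $Z^{0}\le 2mL_0$, $m'-m\le\eps^2 m$, and $\OPT\ge m\eps L_0/4$. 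The reduction from $m'$ to $m$ paths is then essentially free: reverse the orientation of $\lceil(m'-m)/2\rceil$ paths and concatenate (forward, backward, forward, $\ldots$) so that each reversal absorbs two paths into one, at the price of at most a single extra $a\!\to b$ edge for parity. Your Step~3 pays $d(a,b)$ per merge, which still fits in the budget but is strictly worse than the paper's zero-cost trick.
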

			\begin{proof}
				For a bound on $\OPT$ we go all the way back to Equation~\eqref{eq:LB_OPTpaths}. This simple bound was derived directly from the fact that the anchor points $a$ and $b$ are different. Hence, 
				\begin{equation}\label{eq:boundOPT}
					\OPT\ge L_1m\eps/2=L_0m\eps/4. 
				\end{equation}	
				Solution $\Gamma''$ is obtained from $\Gamma'$ by
				\begin{itemize}
					\item deleting the $m$ anchor paths from $a'$ to $a$,
					\item deleting the $m$ anchor paths from $b$ to $b'$,
					\item adding $m'-m$ paths from $a$ to $a'$, and 
					\item adding $m'-m$ paths from $b'$ to $b$. 
				\end{itemize}
				Then $\Gamma''$ is a set of $m'$ paths from $a$ to $b$. 
				Using Lemma~\ref{lem:realsolution} and equation~\ref{eq:alpha^rho}, we conclude that the cost of $\Gamma''$ is at most 
				\begin{eqnarray*}
					&&(1+\eps^2)(\OPT+Z^{0})-Z^{0}+(m'-m)L_0\\
					&=&(1+\eps^2)\OPT+\eps^2 Z^{0}+(m'-m)L_0\\
					&\le &(1+\eps^2)\OPT+2\eps^2 mL_0+(\alpha^{\rho}-1)mL_0\\
					&\le &(1+\eps^2)\OPT+3\eps^2 mL_0\\
					&\le  &(1+\eps^2)\OPT+12\eps\OPT\\
					&\le  &(1+13\eps)\OPT.
				\end{eqnarray*}
				\begin{figure}
					\centering
					\includegraphics[width=0.5\linewidth]{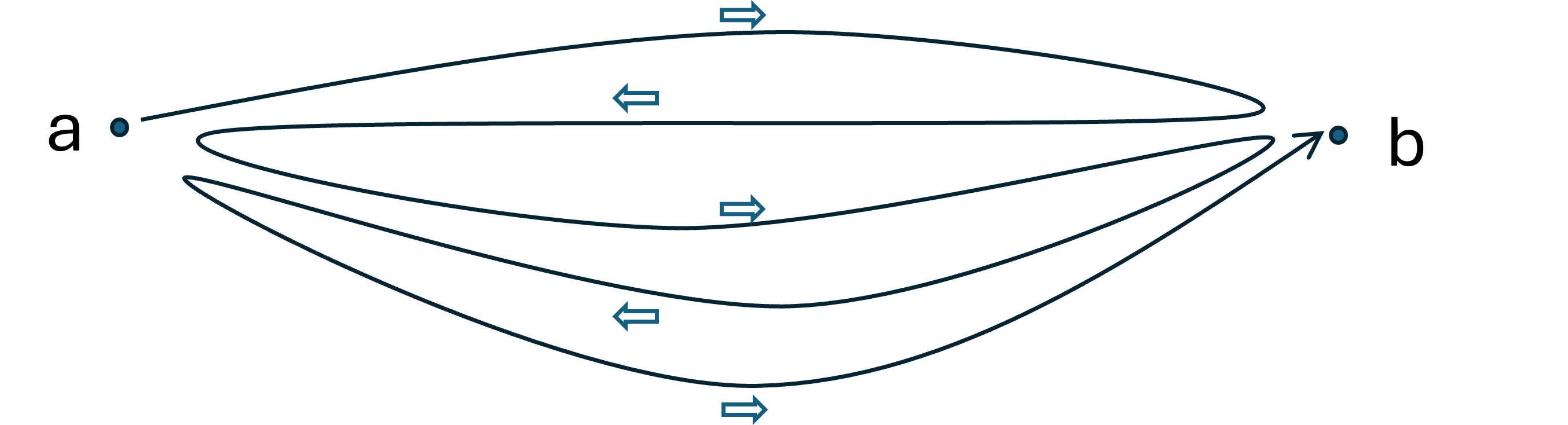}
					\caption{Reducing the number of paths from 5 to 1. }
					\label{fig:Pathreduce}
				\end{figure}
				Finally, we need to bring down the number of paths from $m'$ to $m$. This is completely trivial. See Figure~\ref{fig:Pathreduce} for an example. 
				If  $m'-m$ is an even number then we can artificially reduce to number of paths to $m$ at no cost by simply changing the direction of $(m'-m)/2$ of the paths. 
				If $m'-m$ is an odd number then we change the direction of $(m'-m+1)/2$ of the paths and add one direct connection from $a$ to $b$. Hence, the extra cost for reducing the number of paths to $m$ is no more than 
				$d(a,b)\le OPT/m<\eps\OPT$.		(Here we use that $m>1/\eps$ as discussed in the beginning of Section~\ref{sec:logn}.)
				We conclude that the total cost for the $m$-paths solution is no more than $(1+13\eps)\OPT+\eps\OPT=(1+14\eps)\OPT$. 
			\end{proof}

			\paragraph{Running time}
			First, lets bound the size of the DP table. The number of dissection squares is $O(4^{\rho})=O(4^{\log(n/\eps)})=n^{\eeps}$. For each dissection square, the number of possible flow graphs is $n^{\eeps}$. The size of $\FF$ is $\tilde{O}((\log n)^3)$ and the number of arcs in a flow graph is $\tilde{O}(\log n)$. So, the number of possible ways to put flow values on arcs (and consequently, the portals) is 
			\[\tilde{O}((\log n)^3)^{\tilde{O}(\log n)}=(\log n)^{\tilde{O}(\log n)}= n^{\tilde{O}(\log \log n)}.\] 
			Hence, the size of the DP table is $n^{\tilde{O}(\log \log n)}$. 
			Verifying validity of any entry in the table can be done in 
			$n^{\tilde{O}(\log \log n)}$ time so the total time for the DP is $n^{\tilde{O}(\log \log n)}$. 
			Finally, given an optimal  pseudo solution by the DP, we can turn it into a real solution in 
			$n^{\eeps}$ time since we only need to add arcs in each dissection square in a straightforward way.
			
			\subsection{The general $m$-paths problem.}\label{sec:mpaths2}
			The analysis of the general $m$-paths problem from Section~\ref{sec:gen_mpaths1} applies here as well.
			An instance of the general $m$-paths problem is in fact a set of $\eeps$ $m$-paths instances that need to be solved in parallel. We refer to these simply as \emph{the instances} and label them by $i=1,2,\dots$.  
			The DP can be done almost independently for the instances. The only connection is in the base case where the input points are divided over the different instances.     
						The running time of the DP is 
			\[\left(n^{\tilde{O}(\log\log n)}\right)^{\eeps}=n^{\tilde{O}(\log\log n)}.\]
			
			The only important difference is that the DP no longer gives an exact solution but a pseudo-solution that needs to be repaired. The repairing can be done independently for each instance and only takes $n^{\eeps}$ time per instance as discussed above. 
			
			It remains to bound the cost for repairing. In our analysis of the cost of repairing we used that $m>1/\eps$. Note however, that for an $m$-paths instance $i$ with $m_i\le 1/\eps$, rounding of flow values is not needed. So, in the combined DP, we do not round the flow values for such an instance and, consequently, no repairing is needed.            
			To see that  the cost for repairing is not too large, note that the pseudo solution returned by the DP is actually a pseudo solution for each of the $m$-paths instances and the cost of the pseudo solution can be divided over the different $m$-paths instances accordingly. Also, an optimal solution of cost $\OPT$ for the combined problem is composed of a solution for each of the $m$-paths instances. Let $\OPT_i$ be the cost of the part for $m$-paths instance $i$. Following the analysis of the $m$-paths problem above for each of the instances separately, we find a solution for each instance $i$ of cost no more than $(1+14\eps)\OPT_i$ and the total cost is no more than   $(1+14\eps)\OPT$. 
			
			\section{Generalizations}
			We presented a  Q-PTAS for CVRP in the Euclidean plane but the reductions we used are more general.   
			The reduction to bounded instances holds for Euclidean spaces of any dimension $d$. The reduction to the $m$-paths problem is only polynomial for  Euclidean spaces of a fixed dimension $d$. The Q-PTAS for the  
			$m$-paths problem was presented for the plane only. Here, we used the fact that paths do not cross, to bound the number of possible flow graphs for any dissection square (Lemma~\ref{lem:nflowgraphs}). This no longer works so easily for higher dimensions.  
			
			Although our CVRP algorithm has quasi polynomial running  time and only works for the Euclidean plane, our reductions show that any PTAS for the $m$-paths problem in $\RR^d$, for some fixed $d$, will most likely yield a PTAS for CVRP in $\RR^d$. Arguably, a PTAS for the $m$-paths problem is easier to derive than a PTAS for CVRP.     
			 
			Adamaszek et al.~\cite{AdamaszekCL2010} showed that the reduction to bounded instances, together with the $n^{\log^{O(1/\epsilon)}n}$-time approximation scheme by Das and Mathieu~\cite{DasMathieu2014} implies a polynomial time approximation schemes for all $c<2^{\log^{\delta}n}$, for some $\delta(\epsilon)<1$. If we now use the $n^{\tilde{O}(\log\log n)}$-time approximation scheme, we see that a PTAS for CVRP exists for $c=n^{O(1/\log\log n)}$.

			\bibliography{VRPbib.bib}
	
		\end{document}